\newif\iffull
\newtheorem{theorem}{Theorem}[section]
\newtheorem{lemma}[theorem]{Lemma}
\newtheorem{claim}[theorem]{Claim}
\newtheorem{remark}{Remark}
\newtheorem{assumption}[theorem]{Assumption}
\theoremstyle{definition}
\def\E{\mathbb{E}}
\def\N{\mathbb{N}}
\def\P{\mathbb{P}}
\def\R{\mathbb{R}}
\def\e{\varepsilon}
\def\eps{\varepsilon}
\def\cD{\mathcal{D}}
\def\cE{\mathcal{E}}
\def\cF{\mathcal{F}}
\def\cL{\mathcal{L}}
\def\Var{\text{Var}}
\newcommand{\ip}[1]{{\langle #1 \rangle}}
\def\sse{\subseteq}
\renewcommand{\bar}[1]{\overline{#1}}
\renewcommand{\emptyset}{\varnothing}
\newcommand{\blue}[1]{{\color{blue} #1}}
\newcommand{\red}[1]{{\color{red} #1}}
\newcommand{\ones}{\mathbbm{1}_d}
\newcommand{\one}{\mathbf{1}}
\newcommand{\LOPT}{\mathcal{L}}
\DeclareMathOperator{\poly}{poly}
\DeclareMathOperator{\alg}{Alg}
\DeclareMathOperator{\OPT}{OPT}
\DeclareMathOperator{\stoch}{\Gitems_T}
\DeclareMathOperator{\items}{items}
\DeclareMathOperator{\argmax}{argmax}
\DeclareMathOperator{\median}{median}
\newcommand{\Gitems}{\mathcal{G}}
\newcommand{\Ritems}{\mathcal{R}}
\newcommand{\Gsize}{G}
\newcommand{\Rsize}{R}
\newcommand{\parts}{K}
\newcommand{\val}{\mathsf{val}}
\newcommand{\nf}{\nicefrac}
\newcommand{\IGNORE}[1]{}
\renewcommand{\ip}[2]{\langle #1,#2 \rangle}
\newcommand{\Bigip}[2]{{\Big\langle #1,#2 \Big\rangle}}
\newcounter{note}[section]
\newcommand{\ts}{\textstyle}
\def\Struct{\textsc{Struct}\xspace}
\def\Search{\textsc{Search}\xspace}
\def\IBGSZ{\textsc{I-bgsz}}
\newcommand{\algx}{{x}}
\newcommand{\dual}{{\lambda}}
\newcommand{\gmax}{g_{\max}}
\newcommand{\tsum}{{\textstyle \sum}}
\newcommand{\initOneLiners}{%
    \setlength{\itemsep}{0pt}
    \setlength{\parsep }{0pt}
    \setlength{\topsep }{0pt}
%      \usecounter{myLISTctr}
}
\newenvironment{OneLiners}[1][\ensuremath{\bullet}]
    {\begin{list}
        {#1}
        {\initOneLiners}}
    {\end{list}}
\newcommand{\Cref}[1]{\ref{#1}}
\Crefname{claim}{Claim}{Claims}
\Crefname{@theorem}{Theorem}{Theorems}
\begin{document}

\title{ Robust Secretary and Prophet Algorithms\\ for Packing
  Integer Programs\thanks{CJA and AG were supported in part by NSF awards CCF-1907820, CCF1955785, and CCF-2006953. MM was partially supported by the Coordena\c{c}\~ao de Aperfei\c{c}oamento de Pessoal de Nivel Superior - Brasil (CAPES) - Finance Code 001, CNPq Bolsa de Produtividade em Pesquisa $\#4$310516/2017-0 and FAPERJ grant Jovem Cientista do Nosso Estado.}}
\author{C.J. Argue\thanks{Carnegie Mellon University, cargue@andrew.cmu.edu.}
\and Anupam Gupta\thanks{Carnegie Mellon University, anupamg@cs.cmu.edu.}
\and Marco Molinaro\thanks{PUC Rio, marco.molinaro@gmail.com.}
\and Sahil Singla\thanks{Georgia Tech, ssingla@gatech.edu.}}

%\date{}

\maketitle

% Default Copyright Statement
%\fancyfoot[R]{\scriptsize{Copyright \textcopyright\ 2022 by SIAM\\ Unauthorized reproduction of this article is prohibited}}

%\pagenumbering{arabic}
%\setcounter{page}{1}%Leave this line commented out.

\begin{abstract}
\medskip 
  We study the problem of solving Packing Integer Programs (PIPs) in the
  online setting, where columns  in $[0,1]^d$ of the constraint matrix are revealed
  sequentially, and the goal is to pick a subset of the columns that sum
  to at most $B$ in each coordinate while maximizing the objective. Excellent results are known in
  the secretary setting, where the columns are adversarially chosen,
  but presented in a uniformly random
  order.
  However, these existing algorithms are susceptible to adversarial attacks:
  they try to ``learn'' characteristics of a good solution, but
 tend to over-fit to the model, and hence a small number of adversarial corruptions
  can  cause the algorithm to fail.

  In this paper, we give the first robust algorithms for Packing
  Integer Programs, specifically in the recently proposed Byzantine
  Secretary framework~\cite{BGSZ-ITCS20}. Our techniques are based on a two-level use of
  online learning, to robustly learn an approximation to the optimal
  value, and then to use this robust estimate to pick a good solution.
  These techniques are general and we use them to design robust algorithms for PIPs in the
  prophet model as well, specifically in the
  Prophet-with-Augmentations framework~\cite{ISW-EC20}.  We also improve known results in the Byzantine
  Secretary framework: we make the
  non-constructive results algorithmic and improve the existing bounds
  for single-item and matroid constraints.
\end{abstract}

% !TeX root = main.tex
% !TEX root = main.tex

\section{Introduction} \label{sec:intro}

Resource allocation is a central problem in online decision making:
here, a set of requests for resources arrive one-by-one, each having
an associated value. The goal is to accept a subset of the requests with a large
total value, subject to satisfying given resource constraints. In online
algorithms, we have to make these decisions sequentially and
irrevocably, without the knowledge of future requests. It is common to
model these problems as \emph{packing integer programs} (PIPs) of the
form:
\begin{gather}
  \max\{ \ip{c}{x} \mid Ax \leq b,~ x \in \{0,1\}^n~ \}, \label{eq:pip}
\end{gather}
where the
columns % \snote{decide between $A_t$ and $a_t$ which we use in Sec 2}
of an  unknown constraint matrix $A \in [0,1]^{d\times n}$ appear one-by-one, and the decisions of the
algorithm are encoded as variables $x_t \in \{0,1\}$.

Although online packing IPs are difficult to solve in the
worst-case, one of the remarkable successes has been for settings
where the PIP instance is chosen adversarially, but the columns are
then presented in a \emph{uniformly random order}. If the
constraints are ``not-too-tight'', we can get  very good
solutions: e.g., for a PIP where the constraint matrix $A \in [0,1]^{d\times n}$
and the entries of $b$ are $\Omega(\e^{-2} \log d)$, we know
algorithms that obtain a $(1-\e)$-approximation 
as
long as the columns of $A$ arrive in a uniformly random
order~\cite{KRTV14,GM-MOR16,AD15}. In fact, these results can
be thought of as extensions of the \emph{multiple-secretary} problem~\cite{Kleinberg-SODA05}
(which is the case where $d=1$ and the matrix $A$ is the all-$1$s
matrix), and ultimately as extensions of the
classical \emph{secretary problem}~\cite{Dynkin-Journal63} where the goal is to pick a single
item to maximize the value.

However, previous algorithms rely heavily on the random-order
assumption, and are susceptible to worst-case corruptions: If an
adversary is allowed to add in a small number of columns that arrive
at specific times, most existing algorithms for the
random-order setting fail disastrously. E.g., the classical
single-item secretary algorithm---which waits for $n/e$ items and then
picks the first item bigger than all preceding items---fails even if
a single high-value item is added at the beginning. Many algorithms for
solving LPs essentially try to estimate duals/thresholds, which can
be skewed by a small number of adversarial items. In short, most current 
algorithms are non-robust, and seem to over-fit to the model.
Our central motivating questions are:
\begin{quote}
  \emph{Can we get robust algorithms for online resource allocation, and
  for PIPs in particular? More broadly, when we give algorithms that
  make assumptions on the data, how do we ensure that their
  performance degrades gracefully as we allow adversarial corruptions?}
\end{quote}

The \emph{Byzantine secretary} model~\cite{KM-ICALP20,BGSZ-ITCS20} is
one attempt to model adversarial corruptions in the random-order
model. In this model, each request (i.e., column of $A$) is either red
(i.e., adversarial) or green (i.e., benign), where these colors are
not known to the algorithm. The adversary chooses the arrival time of
each red column, whereas the green columns choose their arrival times
independently and uniformly at random. The benchmark is now the value
of the optimal green set (sometimes with one green item removed).

In~\cite{BGSZ-ITCS20}, the authors give robust algorithms for the single- and
multi-item secretary settings, but leave open the question of getting
robust algorithms for the setting of packing integer programs. In this
paper, we resolve this question positively, and also give simpler and
better robust algorithms for the single-item case. Our techniques are general and also extend to robust algorithms for PIPs in the
  prophet model, specifically to  the
  Prophet-with-Augmentations model of~\cite{ISW-EC20}.

\subsection{Our Results}

\textbf{PIPs in the Byzantine Secretary Model.}
Our first set of results give robust algorithms for packing integer
programs.  To state the result, let $\Gitems$ denote the green
columns of  matrix $A$, and  let $\gmax$ denote the green column with 
largest value. Let
$\OPT(\Gitems)$ and $\OPT(\Gitems \setminus g_{\max})$ denote the
optimal value of the offline PIP~(\ref{eq:pip}) when restricted to the green
columns, with and without $\gmax$ respectively.

\begin{theorem}[Informal: Robust PIPs]
  \label{thm:inf-main1}
  There exists an algorithm for packing integer programs with
  value  $\Omega(\OPT(\Gitems \setminus g_{\max}))$  in the
  Byzantine secretary model when the entries of $b$ are
  $\Omega(\poly(\log n))$.
\end{theorem}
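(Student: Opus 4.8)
The plan is to reduce the PIP problem to two separate online-learning subproblems, following the two-level structure described in the introduction: first robustly learn a constant-factor estimate of $\OPT(\Gitems\setminus g_{\max})$, and then use that estimate to guide a fractional-then-rounded packing decision. First I would set up the offline relaxation: replace the integer program~(\ref{eq:pip}) restricted to green columns by its LP relaxation, and observe that since each entry of $b$ is $\Omega(\poly(\log n))$, standard randomized rounding (or online contention resolution) loses only a $(1-o(1))$ factor, so it suffices to collect a feasible fractional solution of value $\Omega(\OPT_{\mathrm{LP}}(\Gitems\setminus g_{\max}))$. The key difficulty that distinguishes this from the non-robust random-order setting is that we cannot estimate the LP duals directly — a few red columns with huge objective coefficients or tiny resource footprints can arbitrarily corrupt any threshold we try to learn.

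Next I would build the value-estimation layer. The natural approach is to run an experts/multiplicative-weights procedure over a geometric grid of candidate guesses $v \in \{2^0, 2^1, \dots, 2^{\lceil \log n\rceil}\}$ for the per-column "price of a unit of objective", feeding it the random-order stream and charging a loss to each expert according to how a price-$v$ policy would have performed on the columns seen so far. Because the green columns arrive in uniformly random order, each prefix of the stream is (roughly) a random sample of the green instance, so with high probability the sample LP value concentrates around the true value; the red columns only add bounded adversarial loss to the experts, which online learning tolerates up to its regret term. The output is a robust estimate $\widetilde{v}$ of the right dual scale, obtained without ever committing to a single threshold early.

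Then, in the decision layer, I would use $\widetilde{v}$ to run a primal-dual / threshold acceptance rule: accept an arriving column $t$ (set $x_t=1$) if its objective coefficient exceeds $\widetilde{v}$ times the amount of each resource it would consume, provided doing so keeps every coordinate feasible. The feasibility slack from $b=\Omega(\poly(\log n))$ guarantees that we (almost) never have to reject a column we wanted to accept. The analysis splits the stream into $O(\log n)$ phases; within each phase the running estimate $\widetilde v$ is essentially fixed, the green arrivals in that phase form a fresh random subsample, and a standard coupling argument shows the greedy-threshold policy collects an $\Omega(1)$ fraction of the green LP optimum restricted to that phase, minus the contribution of $g_{\max}$ (which is why the benchmark is $\OPT(\Gitems\setminus g_{\max})$ — a single huge green column can legitimately be missed). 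Summing over phases gives value $\Omega(\OPT(\Gitems\setminus g_{\max}))$.

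The main obstacle I expect is controlling the interaction between the two layers in the presence of red columns: the estimate $\widetilde v$ is itself a random, adversarially-perturbed quantity, so one must argue that the decision-layer policy is robust to using a slightly-wrong threshold (a constant-factor error in $\widetilde v$ should only cost a constant factor in value) \emph{and} that red columns cannot simultaneously poison the estimate and exhaust the resource budget. Handling this cleanly will likely require a careful union bound over the $O(\log n)$ phases together with a potential-function argument tracking remaining budget versus accumulated green value, and is where the $\poly(\log n)$ lower bound on $b$ — rather than merely $\log d$ — gets used.
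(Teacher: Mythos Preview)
Your high-level two-layer architecture matches the paper, but there is a genuine gap in the decision layer. You propose to ``accept an arriving column $t$ if its objective coefficient exceeds $\widetilde v$ times the amount of each resource it would consume.'' That is a \emph{scalar} threshold rule. In a $d$-dimensional PIP, different constraints become tight at different rates, and a fixed rule that ignores which coordinates are filling up cannot balance them: the adversary (or just an unlucky draw of green items) can pack one coordinate to capacity while leaving the others empty, and your rule will keep accepting because the objective-to-size ratio still looks fine on average. The paper's algorithm solves this with an \emph{inner} MWU loop that learns a dual vector $\lambda_t \in \triangle^{d-1}$ online and accepts item $t$ iff $c_t \ge \gamma\,\langle \lambda_t, a_t\rangle$; the OLO regret guarantee is exactly what lets them argue (Lemma~\ref{lem:AD}) that either the budget is exhausted in a way that certifies high value, or the Lagrangified optimum is collected. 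Your decision layer has no analogue of this, and the ``potential-function argument tracking remaining budget versus accumulated green value'' you allude to would have to reinvent it.

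There are two smaller issues. First, your outer experts layer scores each candidate $v$ by ``how a price-$v$ policy would have performed'': this is the right reward, but you also need the \emph{truncation} $\min\{\alg_i(\gamma),\, \gamma B/K\}$ and the \emph{multiscale} regret bound (Lemma~\ref{thm:multiscale}) so that red items with huge values cannot blow up the reward range and hence the regret; an ordinary experts bound would scale with the maximum reward across all arms, which the adversary controls. Second, your grid $\{2^0,\dots,2^{\lceil \log n\rceil}\}$ presupposes a polynomial range for $\OPT$, but nothing in the model gives you that a priori; the paper handles this via a separate reduction (Lemma~\ref{lem:smoothRed}) to ``smooth'' instances where such an estimate is available, at the cost of moving to the benchmark $\OPT(\Gitems\setminus g_{\max})$.
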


(See \S\ref{sec:byzant-PIP} for formal statements.)
One may ask whether we can compare to $\OPT(\Gitems)$ instead: sadly,
\cite{BGSZ-ITCS20} showed instances where it is impossible to compare
to $\OPT(\Gitems)$ without further assumptions. However, if we make
(mild) regularity assumptions on the input, we can indeed get much
more nuanced results. Indeed, %define an instance to be ``smooth'' if
suppose we are given even a very rough estimate of $\OPT(\Gitems)$---to
within \emph{polynomial} in $n$ factors---then we can achieve the following: 

\begin{theorem}[Informal: Robust PIPs with a Rough Estimate]
  \label{thm:inf-main2}
  There exists algorithms for packing integer programs which take a
  rough estimate of the $\OPT(\Gitems)$ and achieve the following
  guarantees:
  \begin{OneLiners}
  \item[(i)] value $\OPT(\Gitems) \cdot \Omega(1)$ for any instance.
  \item[(ii)] value $\OPT(\Gitems)\cdot (1- \alpha - O(\e))$ for
    instances where $\alpha$ is the fraction of red/adversarial
    columns and  the value of any $\approx \log d/\e^4$ columns
    accounts for only an $\eps$-fraction of the optimum (i.e., the
    optimal value is not concentrated on a small set of columns).
  \end{OneLiners}
  Both these algorithms require that the 
  entries of $b$ are at least $B := \Omega(\log d \cdot \log\!\log
  n/\eps^4)$. 
\end{theorem}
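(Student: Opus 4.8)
The plan is to decouple the task along the ``two-level'' line suggested by the abstract: (1) \emph{robustly estimate} $\OPT(\Gitems)$ up to a small multiplicative factor --- a constant factor is enough for part (i), a $(1+O(\e))$ factor for part (ii) --- and then (2) feed that estimate into an essentially classical online primal--dual packing routine. Step (2) is the standard ingredient: given a target $v$ with $v \le \OPT(\Gitems) \le (1+\e) v$, maintain dual prices $p \in \R_{\ge 0}^d$ (one per constraint), accept an arriving column $t$ iff $c_t \ge \langle p, A_t\rangle$, and after accepting update $p_j \leftarrow p_j\cdot \exp(\eta\cdot A_{tj}/B)$ with $\eta = \Theta(\e)$ and the initial price scale read off from $v$. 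This is the multiplicative-weights view of online packing (cf.~\cite{KRTV14,GM-MOR16,AD15}): when $B = \Omega(\log d/\e^2)$ it is $(1-\e)$-competitive against the offline optimum of whatever random-order stream it is run on. The two wrinkles to manage are that it will be run on the green-\emph{plus}-red stream rather than the green stream, and that we must pay (in horizon and in budget) for the estimation phase.

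For step (1), the rough estimate is exactly what makes the search space small: being promised $\OPT(\Gitems)\in[\widetilde O/n^{c},\,\widetilde O\cdot n^{c}]$ leaves only $O(\log n)$ dyadic scales (or $O(\log n/\e^2)$ scales spaced by $(1+\e^2)$, for the finer guarantee in part (ii)) to distinguish. I would run a binary-search-like procedure over these scales: partition the arrival horizon into $\Theta(\log\log n)$ consecutive \emph{epochs}, give each epoch a $\Theta(1/\log\log n)$ share of the budget $B$, and in each epoch run the packing routine above at the current probe scale $v$ on the columns arriving in that epoch. The outcome of the epoch is read as a test: if the routine saturates its allotted budget, $v$ is too small and we move the probe up; if it accumulates value much less than $v$, then $v$ is too large and we move it down; otherwise the scale is roughly right. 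Since binary search halves the (logarithmic) range each time, $\Theta(\log\log n)$ epochs suffice, and each epoch's packing routine needs budget $\Omega(\log d/\e^2)$ to be meaningful --- which, multiplied by the $\Theta(\log\log n)$ epochs and with the extra $\e^{-2}$ coming from running the search at $(1+\e^2)$-resolution, is exactly the requirement $B = \Omega(\log d\cdot \log\log n/\e^4)$. At the end we hold a scale $v^\star$ within a $(1+O(\e))$ factor of $\OPT(\Gitems)$, and we run the packing routine one last time, at scale $v^\star$, on the remaining constant-fraction sub-horizon with the remaining constant fraction of $B$. Importantly, the value accumulated \emph{during} the search epochs is kept and counts toward the objective, so ``wasted'' epochs cost us only budget, not value.

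Assembling part (ii): by the random order of the greens, the green optimum of the final sub-horizon is a $(1-O(\e))$ fraction of $\OPT(\Gitems)$; the packing routine there, calibrated to the correct scale $v^\star$, is $(1-\e)$-competitive; and the $\alpha$ loss enters because, over the committed phase, an $\alpha$-fraction of the arrivals are red, so at most an $\alpha$-fraction of the algorithm's ``accept'' budget is spent on red columns, displacing at most an $\alpha$-fraction of the green value it would otherwise have captured. The spreadness hypothesis --- that no $O(\log d/\e^4)$ columns carry more than an $\e$-fraction of $\OPT(\Gitems)$ --- is precisely what ensures that no single red corruption (in a search epoch or in the final phase) costs more than $O(\e)\cdot\OPT(\Gitems)$, so these effects sum to only $(\alpha + O(\e))\OPT(\Gitems)$. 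For part (i) we need neither the fine estimate nor spreadness: even a constant-factor estimate of $\OPT(\Gitems)$ lets a single thresholded pass capture an $\Omega(1)$ fraction of $\OPT(\Gitems)$ (including the degenerate case where $\OPT(\Gitems)$ sits on one huge green column, since the estimate tells us what magnitude to look for), so the same framework with all accuracy parameters set to constants works, which is why part (i) holds for every instance.

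The step I expect to be the crux is \textbf{robustness of the binary search}: that red injections into an epoch cannot drive the probe badly off course. The dangerous direction is red columns being accepted in bulk and saturating an epoch's budget, which reads as ``$v$ too small'' and pushes the probe upward --- potentially overshooting $\OPT(\Gitems)$ and leaving the final packing routine rejecting everything. The argument has to be that this can only happen when the red columns in that epoch had total value $\gtrsim v$ (so we actually banked that value), that an over-estimated scale is corrected by the next green-dominated epoch, and that the total green value thrown away across all epochs, summed over the at-most-$\alpha n$ red columns, is $O(\e)\OPT(\Gitems)$ under spreadness. Carrying out this one-sidedness-plus-budget-accounting simultaneously --- each epoch spending only a $1/\log\log n$ share while the search still provably converges in the presence of corruptions --- is where the $\log\log n$ and the $\e^4$ in the statement actually have to be earned.
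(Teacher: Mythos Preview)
Your two-level decomposition is right in spirit, and the inner ``given the scale, run primal--dual with MWU duals'' step matches the paper's \Cref{alg:AD-oneint} closely. The key divergence is in the outer level: you propose \emph{binary search} over the $O(\log n)$ candidate scales, whereas the paper treats the scales as \emph{experts} and runs a multiscale low-regret algorithm (\Cref{thm:multiscale}) over them. This is not a cosmetic difference---it is exactly what makes the scheme robust, and it sidesteps the crux you yourself flagged.

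The trouble with binary search is that it commits: a corrupted epoch sends the probe in one direction and \emph{discards half the range}. The adversary is free to concentrate red items in a few epochs (there is no per-epoch bound on corruption), so nothing prevents the correct scale from being eliminated early. Your proposed recovery---``the next green-dominated epoch corrects it''---does not work once the correct scale has left the search interval, and there is no guarantee that any particular epoch is green-dominated. The paper avoids this entirely: running experts over the $\gamma$'s never eliminates a candidate, and the regret bound of \Cref{thm:multiscale} says the total reward is within an additive $O(\sqrt{K\log|\Gamma|})\cdot c_{\gamma^*}$ of the best fixed $\gamma^*$, \emph{regardless of what the adversary does}. To make this regret small relative to $\OPT$, the paper truncates each epoch's reward at $\frac{B}{K}\gamma$ (so that $c_{\gamma^*}=O(\OPT/K)$); this truncation step is essential and has no analogue in your proposal.

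For part~(ii), your accounting for the $\alpha$ loss---``at most an $\alpha$-fraction of the accept budget goes to reds''---is not justified: $\alpha$ is the \emph{fraction of items} that are red, not the fraction of budget they consume, and accepted red columns can fill far more than an $\alpha$-share of $B$. The paper gets the clean $(1-\alpha)$ factor by modifying the penalty to $\tilde f_t(\lambda)=\langle\lambda,\,A_t x_t - \tfrac{B\ones}{n}\rangle$: every arriving item (red or green) now contributes a $\tfrac{B}{n}$ budget \emph{credit}, so the $\alpha n$ red items automatically carry $\alpha B$ credit against their occupation. This is what turns the dependence on red items from ``whatever budget they eat'' into exactly $\alpha\gamma B$ in \Cref{lemma:lag}, and it is the one new idea needed for part~(ii) beyond part~(i).
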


\iffull
(See \S\ref{sec:byzant-PIP} and \S\ref{sec:one-plus-eps} for formal
statements.)
\else
(See \S\ref{sec:byzant-PIP} for a formal
statement of the former; the latter appears in the full version of the
paper.)
\fi
Note that the performance of this algorithm in part\,(ii) approaches $1$ as the
fraction $\alpha$ of adversarial corruptions gets small.
The algorithm of part\,(i) is the same as in \Cref{thm:inf-main1}, but
we need new ideas for part (ii), the case of ``smooth'' instances where
the optimal value is spread out.
\Cref{thm:inf-main1,thm:inf-main2}
both use our technique of \emph{robust threshold estimation}, 
 which is based on 
two conceptually clean ideas
using tools from online learning: Firstly, we show that given an
estimate $\gamma \approx \frac{\OPT(\Gitems)}{B}$, we can pick a set
of items that achieve value close to $\OPT(\Gitems)$. This uses a
low-regret online linear optimization algorithm to learn a good set of
duals, which are then used  to select  items. Secondly, we break the
time horizon into $K$ pieces, and then use another online learning
algorithm to ``learn'' the parameter $\gamma$. This is where we use
our coarse estimate of the optimal value: it allows us to focus on a
set of $O(\e^{-1} \log n)$ possible values for $\gamma$. We discuss
the technical ideas in \S\ref{sec:our-techniques}.

\medskip
\textbf{PIPs in the Prophet-with-Augmentations Model.}  We think that
our approach of using online learning to get robust  algorithms will
be useful in other contexts as well. 
As an example, we consider PIPs in the \emph{prophet model}
where the columns are known up-front, but the value of each column is
independently drawn from a known distribution. In the \emph{prophet-with-augmentations} model of \cite{ISW-EC20}, the adversary is allowed
to  adaptively add arbitrary positive perturbations/augmentations to the
random column values, and the algorithm has to be robust to these
augmentations.
The algorithm competes against the base instance, i.e., the expected value $\OPT_{base}$ of the
offline optimum  when all perturbations are zero. 

It may appear that adding positive perturbations should only improve
the algorithm's performance, but \cite{ISW-EC20} show that the popular
$1/2$-approximation median threshold algorithm~\cite{Samuel-Annals84}
for single-item prophets can become arbitrarily bad due to
augmentations. (The reason is similar to that for the Byzantine
Secretary model: the adversary can present a single high-valued item
in the beginning that is just above the threshold.) \cite{ISW-EC20}
show how to avoid these problems, and  design robust prophet algorithms for the single-item and uniform-matroid problems.
In \S\ref{sec:prophets} we show robust
prophet algorithms for the general setting of PIPs, where our underlying technique is
again based on robust threshold estimation using tools from online learning.

\begin{restatable}[Robust Prophet PIPs]{theorem}{ProphetsAug}
  \label{thm:prophetWithAug} There exists an algorithm for packing integer programs that gets
  value $\Omega(\OPT_{base})$ in the Prophet-with-Augmentations model
  when the right-hand sides of the PIP are
  $\Omega(\log d)$.
\end{restatable}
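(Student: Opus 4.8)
The plan is to carry the \emph{robust threshold estimation} framework over to the prophet model. Rescale so that $b = B\one$ with $B = \Theta(\log d)$. Since the augmentations are nonnegative, the base value of any accepted set is at most its observed value, so it suffices to prove that the algorithm's accepted set $S$ has $\E\big[\sum_{t\in S}c^{\mathrm{base}}_t\big] = \Omega(\OPT_{base})$. Two preliminary facts drive everything. First, because the base distributions $D_1,\dots,D_n$ are known up front, we can compute offline --- by Monte Carlo sampling of the base instance and solving $\mathrm{poly}(n)$ copies of the offline PIP --- a constant-factor estimate $\gamma$ of $\OPT_{base}/B$; this is where the prophet model is genuinely easier than Byzantine secretary, since the outer online-learning layer that in the Byzantine algorithm \emph{learns} $\gamma$ is replaced by a one-shot offline computation. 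Second, a static-threshold prophet lemma for large-budget packing: there is a fixed dual vector $\lambda^\star\in\R^d_{\ge 0}$, computable from the $D_t$, such that the (realization-using but dual-fixed) policy ``accept $t$ iff $c^{\mathrm{base}}_t > \ip{\lambda^\star}{A_t}$'' is feasible with high probability (Chernoff, using $B = \Omega(\log d)$) and collects base value $\Omega(\OPT_{base})$ in expectation.

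The algorithm itself runs an inner online-learning layer: it maintains adaptive dual prices $\lambda^{(t)}\in\R^d_{\ge 0}$ by multiplicative weights / online mirror descent (initialized at the scale dictated by $\gamma$), and uses a \emph{distribution-aware, capped} acceptance rule: accept item $t$ iff $\min\{v_t,\,M_t\} > \ip{\lambda^{(t)}}{A_t}$, where $v_t$ is the observed (augmented) value and $M_t$ is a cap derived from $D_t$ (a suitably high quantile, or $\Theta(\gamma)$). After each decision it updates the duals along the consumption direction $A_t x_t - B\one/n$, and in each coordinate it stops accepting once consumption reaches $(1-\Omega(1))B$, reserving a constant fraction of every budget as a buffer. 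The cap is the ingredient forced by the augmentations: the adversary can push $v_t$ above any fixed threshold, so a plain threshold-on-$v_t$ rule collects nothing, but capping bounds the budget the adversary can burn by inflating a single low-base-value item, and the adaptive duals make any such burning raise the prices and self-correct.

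The analysis splits into feasibility and value. For \emph{feasibility}: the cap bounds the consumption caused by any ``fooled'' acceptance, the dual update raises prices wherever budget is being used, and $B = \Omega(\log d)$ yields (Chernoff/Azuma over the $d$ coordinates) that the buffer is breached only with $1/\mathrm{poly}$ probability, in which rare event the algorithm simply halts and loses a negligible amount in expectation. For \emph{value}: let $z_t := \mathbbm{1}[\,c^{\mathrm{base}}_t > \ip{\lambda^{(t)}}{A_t}\,]$; since $v_t \ge c^{\mathrm{base}}_t$ and $M_t$ exceeds the relevant thresholds, $z_t = 1$ forces acceptance, so the algorithm's base value dominates $\sum_t c^{\mathrm{base}}_t z_t$. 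Using the low-regret guarantee of the inner learner against the comparator $\lambda^\star$ from the structural lemma, together with the feasibility analysis keeping the iterates $\lambda^{(t)}$ from drifting far above $\lambda^\star$, one lower-bounds $\sum_t c^{\mathrm{base}}_t z_t$ by a constant fraction of what the fixed $\lambda^\star$-policy collects, i.e.\ $\Omega(\OPT_{base})$, minus regret terms that are $o(\OPT_{base})$ precisely because $B = \Omega(\log d)$; the base value lost to adversarially inflated items is charged to the reserved buffer, not to the genuine value.

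I expect the value half to be the main obstacle, specifically separating base value from observed value inside the accepted set. Decisions are taken on $v_t = c^{\mathrm{base}}_t + \delta_t$ with $\delta_t \ge 0$ adversarial and unbounded, so one cannot argue directly that accepted items carry large base value; one must simultaneously (i) choose $M_t$ small enough that the total budget the adversary can waste on inflated near-zero-base items fits inside the buffer, (ii) choose $M_t$ large enough --- and keep the iterates $\lambda^{(t)}$ controlled enough --- that every item the $\lambda^\star$-policy would honestly accept is still accepted by the algorithm, and (iii) reconcile (i)--(ii) with the regret and concentration bounds. Tuning the cap level, buffer size, and learning rate as the right functions of $B$, $d$, and $n$ so that all three hold at once is the delicate step, and it is exactly where the hypothesis that the right-hand sides are $\Omega(\log d)$ enters.
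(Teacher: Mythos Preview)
Your high-level framework matches the paper: compute $\gamma \approx \OPT_{base}/B$ offline from the known distributions (no outer learning layer needed), and run an inner OLO over duals in a Lagrangified objective. But your mechanism for neutralizing the augmentations --- capping observed values at $M_t$ and arguing that the buffer keeps the dual iterates $\lambda^{(t)}$ from drifting far above $\lambda^\star$ --- has a real gap in exactly the place you flagged. Low-regret guarantees control the \emph{cumulative} penalty $\sum_t f_t(\lambda^{(t)})$ against a fixed comparator; they say nothing about the iterates staying pointwise close to $\lambda^\star$. The adversary can inflate a sequence of near-zero-base items, each of which (being below $M_t$ but above the current threshold) is accepted, consumes budget, and pushes $\lambda^{(t)}$ upward; even if the total such consumption fits in your buffer, the duals at the moments the genuinely good items arrive can be well above $\lambda^\star$, so your $z_t$ need not be $1$ for those items. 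Your charging argument accounts for wasted \emph{budget} but not for the lost \emph{value} caused by inflated thresholds, and there is no mechanism in the sketch that closes this.

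The paper sidesteps this entirely with a different idea. Rather than capping, it uses the known distributions to build a per-item surrogate decision $\psi_t(V_t)\in\{0,1\}$ that depends \emph{only} on the base value $V_t$ (and not on the algorithm's state or the adversary's actions), with $\E[\sum_t V_t\psi_t(V_t)]\ge \Omega(\OPT_{base})$ and $\E[\sum_t a_t\psi_t(V_t)]\le \tfrac{B}{4}\one$. The penalty is then centered at the \emph{expected surrogate occupation}: $f_t(\lambda)=\ip{\lambda}{a_t x_t - a_t\,\E[\psi_t(V_t)]}$ (not at $B\one/n$ as you wrote). With this centering, the best-response inequality against the comparator $\psi_t(V_t)$ yields
\[
\alg \ \ge\ \min\Big\{\textstyle\sum_t C_t\,\psi_t(V_t),\ \tfrac{\gamma B}{8}\Big\}\ -\ \gamma\cdot O(\log d)\ -\ \gamma\, M_{rhs},
\]
where $M_{rhs}=\sum_{t\le\tau}\ip{\lambda_t}{a_t(\psi_t(V_t)-\E\psi_t(V_t))}$. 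Because $\psi_t$ depends only on $V_t$, which is independent of the history (and the augmentation $R_t$ is chosen before $V_t$ is revealed), each summand has conditional mean zero, so $\E[M_{rhs}]=0$. The augmentations thus disappear from the analysis with no capping and no dual-drift argument; the only remaining step is a Chebyshev bound showing $\sum_t V_t\psi_t(V_t)\ge\Omega(\OPT_{base})$ with constant probability, which uses a reduction to instances where all base values are at most $\OPT_{base}/20$. The per-item centering (rather than the uniform $B/n$ you used) is essential precisely because, unlike random order, the expected occupation varies across $t$.
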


\textbf{Improved Results for  Single-Item and Matroid Cases.} Our next 
results improves on those of \cite{BGSZ-ITCS20}, for the
case of picking a single item or an independent set in a matroid. (Details in \S\ref{sec:single-item}.)

\begin{restatable}[Single-Item Probability Max]{theorem}{SingleProb}
  \label{thm:main3}
  There is an algorithm for the single-item Byzantine secretary
  problem that picks value at least $\OPT(\Gitems \setminus \gmax)$ with
  probability $\Omega\big(\nicefrac{1}{\log n}\big)$.
\end{restatable}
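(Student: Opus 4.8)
The plan is for the algorithm to draw a single scale $\ell$ uniformly from a set of $\Theta(\log n)$ values and then run a short threshold rule tuned to $\ell$; the analysis then exhibits, for every instance, a ``good'' scale $\ell^\star$ such that conditioned on $\ell=\ell^\star$ the rule picks an item of value $\ge \OPT(\Gitems\setminus\gmax)$ with probability $\Omega(1)$. Since $\Pr[\ell=\ell^\star]\ge 1/\Theta(\log n)$, this yields the claimed $\Omega(1/\log n)$. Write $v_1>v_2$ for the two largest green values, so $\gmax$ has value $v_1$ and $\OPT(\Gitems\setminus\gmax)=v_2$; the key elementary observation is that \emph{any} item, red or green, of value $\ge v_2$ is a winning pick.

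First I would dispose of the structurally easy case in which no red has value exceeding $v_1$ (equivalently, $\gmax$ is the global maximum). Here a single rule already succeeds: observe every item arriving before time $\tfrac12$, let $\tau$ be the largest value seen, and accept the first later item of value $\ge\tau$. Let $E$ be the event that $\gmax$ arrives after time $\tfrac12$ while the second-largest green arrives before time $\tfrac12$; the two green arrival times are independent and uniform, so $\Pr[E]=\tfrac14$. On $E$ one has $\tau\in[v_2,v_1]$ (at least $v_2$ because the second-largest green is sampled; at most $v_1$ because $\gmax$ is not sampled and no red is larger), and a qualifying later item exists---namely $\gmax$, of value $v_1\ge\tau$---so whatever we accept has value $\ge\tau\ge v_2$. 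Thus one dedicated ``base'' scale handles this case with probability $\tfrac14$.

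The heart of the proof is the complementary case, where the adversary may plant ``huge'' reds of value $>v_1$. A huge red is itself a winning pick, but the base rule can be sabotaged by putting one huge red into the sampling window, which inflates $\tau$ beyond anything that will appear afterwards. The remaining $\Theta(\log n)$ scales are therefore geometric \emph{in arrival position}: scale $\ell$ samples only the first $n/2^{\ell+1}$ arrivals, sets $\tau$ to the largest sampled value, and accepts the first arrival among positions $n/2^{\ell+1}+1,\dots,n/2^{\ell}$ with value $\ge\tau$. One then argues that if the huge reds occupy arrival positions around the boundary $n/2^{\ell^\star+1}$ for a suitable $\ell^\star$, the rule at scale $\ell^\star$ wins with constant probability: either the sampling window already contains a huge red (so $\tau$ is the huge value and the selection window opens on another huge red), or it does not, $\tau\le v_1$, and the first in-window item clearing $\tau$ is again a huge red. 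To make this go through in all cases the random green arrivals are reused as in the easy case to guarantee $\tau\ge v_2$ whenever the huge reds sit entirely on one side of the boundary.

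The step I expect to be the main obstacle is exactly this last one: the adversary controls both the \emph{positions} of the huge reds---so it can try to slot the whole huge block strictly between two consecutive window boundaries---and the \emph{values} of the non-huge reds, which it can set just below $v_2$ so that they are accepted whenever $\tau$ dips under $v_2$. Beating both simultaneously while spending only $\Theta(\log n)$ scales (rather than a naive $\Theta(\log^2 n)$, which would also guess the number of huge reds) is the crux and is where the improvement over the prior bound lives; the intended resolution is to let a single family of geometric position-windows do double duty---anchoring the threshold via the random green arrivals in the no-huge-red subcase, and via the window straddling the huge block otherwise---so that one of the $\Theta(\log n)$ scales is always favorable.
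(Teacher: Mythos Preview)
Your proposal takes a genuinely different route from the paper, but the route has a real gap that your last paragraph correctly identifies and does not close.

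Here is a concrete failure of the ``intended resolution''. Take $n$ large, two green items of values $v_1>v_2$, and $n-2$ red items all arriving at times in $[0,1/n^2]$, so with probability $1-o(1)$ the reds occupy positions $1,\dots,n-2$ and the two greens occupy positions $n-1,n$. Give the reds at positions $1,2,3$ value $v_2-\varepsilon$, the red at position $4$ value $V>v_1$ (the single ``huge'' red), and all remaining reds value $v_2-\varepsilon$. Now check each scale. Every scale whose sample window contains position $4$ sets $\tau=V$ and accepts nothing. The scale with sample window $\{1,2\}$ and selection window $\{3,4\}$ sets $\tau=v_2-\varepsilon$ and accepts the trap red at position $3$ before reaching the huge red at position $4$. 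Smaller windows ($\{1\}$ or $\emptyset$) pick a trap red outright. Your base rule also fails: every red, including the huge one, lies in $[0,\tfrac12]$, so $\tau=V$ and nothing afterwards qualifies. Thus no scale succeeds with probability $\Omega(1)$, and the overall guarantee collapses to $o(1/\log n)$. The ``anchor the threshold via random green arrivals'' fix cannot rescue this: the greens land in positions $n-1,n$ and never enter the tiny early windows where the action is.

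The paper's argument is structurally different in two ways that evade this trap. First, it uses $K=\Theta(\log n)$ \emph{equal-width time} intervals in $(\tfrac14,\tfrac34]$ rather than geometric position windows; if the adversary front-loads all reds into $[0,\tfrac14]$, every later interval becomes ``low'' and a classical-secretary-style procedure (\textsc{Struct}, running $K$ parallel copies) succeeds with constant probability because $g_1,g_2$ land in those clean later intervals. Second, rather than committing to one scale, the paper runs these $K$ rules \emph{in parallel}---picking up to $K$ items---alongside an adaptive binary search (\textsc{Search}) over candidate thresholds, and only at the end subsamples one of the $\le 2K$ picked items. The binary search is what handles the regime your scheme targets (high-value reds with decreasing values spread across intervals): each ``nice'' interval provably halves a candidate set, so $\log n$ intervals suffice, and a case analysis on ``high'' versus ``low'' intervals shows that whenever the binary search can be fooled, \textsc{Struct} already wins. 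The single-scale, single-pick architecture you propose cannot replicate this: you need either time-based windows (so greens are not pushed out of the relevant windows by the adversary's choice of red arrival times) or a multi-pick-then-subsample design, and the paper uses both.
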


In the case of a single-item, observe that $\OPT(\Gitems \setminus \gmax)$ is the same as the  value $\val(g_2)$ of the 2nd-highest green item.
The previous result of  \cite{BGSZ-ITCS20} was non-constructive, and only showed \emph{existence} of an algorithm with success probability
$\Omega\big(\nicefrac{1}{(\log n)^2}\big)$; hence our result
improves on the previous results both qualitatively and
quantitatively. We also improve the value maximization results of \cite{BGSZ-ITCS20}.

\begin{restatable}[Value-Maximization for Secretary Problems]{theorem}{ValueMax}
  \label{thm:mainValue}
  There exist algorithms for the following Byzantine secretary
  problems that aim to maximize the expected value of selected items:
  \begin{OneLiners}
  \item[(i)] for the single-item case, we can get expected value
    $\Omega\big(\nicefrac{1}{\log^* n}\big) \cdot \OPT(\Gitems \setminus \gmax)$, and
  \item[(ii)] for the case of a matroid of rank $r$, we can get
    expected value
    $\Omega\big(\nicefrac{1}{\big((\log^* n)^2 \cdot \log r\big)}\big)\cdot
    \OPT(\Gitems \setminus \gmax)$
  \end{OneLiners}
\end{restatable}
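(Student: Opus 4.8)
The plan is to bootstrap from the probability‑maximization guarantee of \Cref{thm:main3}, which, run verbatim, already yields expected value $\Omega(1/\log n)\cdot\val(g_2)$ in the single‑item case (there the benchmark $\OPT(\Gitems\setminus\gmax)$ equals the second‑largest green value $\val(g_2)$). The extra leverage that value‑maximization provides over probability‑maximization is that we may \emph{hedge} over guesses for the relevant value scale, weighting each guess proportionally to its magnitude, so that even a $1/\poly(\log n)$ chance of betting on the correct scale still contributes a proportional amount of \emph{value}; moreover, betting on a scale between $\val(g_2)$ and the largest item value is never harmful, since any item we can then select already beats the benchmark. Exploiting this, I would build a hierarchical estimation scheme with $O(\log^* n)$ levels: level $j$ is tuned to the case that the true scale lies in a window whose logarithmic width shrinks like an iterated logarithm as $j$ grows — level $0$ covers a range of $\poly(n)$ (which may be assumed w.l.o.g.\ after truncating values far below any running estimate of $\val(g_2)$ and capping any guess at the running maximum), each subsequent level refines a surviving range of width $w$ to width $O(\log w)$ using a reserved initial block of the time horizon, and after $O(\log^* n)$ levels the window has constant width, where the robust selection subroutine behind \Cref{thm:main3} finishes with constant probability. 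Devoting a $\Theta(1/\log^* n)$ fraction of the probability mass to each level, the level matching the true instance contributes $\Omega(1/\log^* n)\cdot\val(g_2)$ in expectation, giving part (i).

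Each refinement step is carried out on an initial $\Theta(1)$‑fraction ``learning block'' of the time horizon: since green items arrive at uniformly random times, with constant probability this block already contains a green item whose value is within a $\poly(w)$ factor of $\val(g_2)$, which collapses a width‑$w$ window of candidate scales to $O(\log w)$ surviving candidates over which we then hedge, value‑weighted, while running the next level on the remainder of the horizon. I expect the \textbf{main obstacle} to be establishing robustness of these learning blocks: a red item may appear in the block (or later) with an arbitrarily large value, so one must show the extracted estimate is correct with constant probability despite adversarial pollution, argue (as above) that over‑estimating the scale is harmless, and account for the value lost by sacrificing the learning block — delicate here because $\val(g_2)$ is carried by only the two green items $g_1,g_2$, so one must use that both avoid the block with constant probability to recover $\val(g_2)$ up to a constant at every level.

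For part (ii), I would layer the single‑element guarantee of part (i) with a robust version of the classical value‑bucketing reduction for matroid secretary. Partitioning the optimal green basis into $O(\log r)$ geometric value classes (after discarding negligibly small elements) ensures that one class alone accounts for an $\Omega(1/\log r)$ fraction of $\OPT(\Gitems\setminus\gmax)$; within that class all elements lie within a factor $2$ of one another, so greedily admitting an $\Omega(1)$ fraction of the independent set it forms recovers an $\Omega(1/\log r)$ fraction of the benchmark. Finding and selecting within that heavy class, robustly and online, decomposes into two iterated‑logarithm estimations of the kind in part (i): one to pin down the \emph{value level} of the heavy class — which, like $\val(g_2)$, may range over a $\poly(n)$ window and so costs a $\log^* n$ factor — and one inside the per‑class robust selection of an independent set of nearly the right size, costing another $\log^* n$ factor; together with the $\log r$ loss from bucketing this yields the claimed $\Omega\big(1/((\log^* n)^2\log r)\big)$ bound. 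The extra obstacle relative to part (i) is carrying out the value‑bucketing argument in the Byzantine model, where red elements can masquerade as high‑value basis elements; this is absorbed by applying the same robust threshold‑estimation machinery at the granularity of value levels.
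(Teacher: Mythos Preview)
Your high-level intuition for part~(i) --- iteratively refining an estimate of $\val(g_2)$ through $O(\log^* n)$ levels, each collapsing a window of width $w$ to one of width $O(\log w)$ --- is exactly the mechanism the paper uses. But the refinement engine you describe has a genuine gap. You propose that each learning block, with constant probability, ``contains a green item whose value is within a $\poly(w)$ factor of $\val(g_2)$'', and you then hedge over the $O(\log w)$ surviving candidate scales. The problem is that there is no reason \emph{any} green item other than $g_1,g_2$ has value anywhere near $\val(g_2)$: the instance may have exactly two green items of positive value. So you cannot populate $\Theta(\log^* n)$ disjoint learning blocks with useful green witnesses; at most two blocks can see $g_1$ or $g_2$, and the rest give you nothing. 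You flag robustness of the learning blocks as the main obstacle, but the obstacle is more basic than adversarial pollution --- even with no red items at all your blocks need not contain informative green items.

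The paper's resolution is rather different in character. It first runs a structural procedure \textsc{Struct} (many parallel classical-secretary copies, one per interval) which either already gets $\Omega(\val(g_2))$ expected value, or else forces the adversary into a specific shape: the interval-maxima $\widehat\mu_i$ are (essentially) the \emph{red} values $\rho_i$, they are all $\ge \val(g_2)$, and they are monotone decreasing across intervals. Once this structure is in place, the refinement procedure \textsc{SearchII} uses the $\widehat\mu_i$'s themselves --- which are red, not green --- as the estimates: if $\widehat\mu_{i-1}$ is an $\alpha$-approximation to $\val(g_2)$, then a random threshold among $O(\log\alpha)$ geometric levels around $\widehat\mu_{i-1}$ either picks up $\Omega(\val(g_2))$ expected value or certifies that $\widehat\mu_i$ is a $(\log\alpha)$-approximation. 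The point is that the source of the estimates is the adversary's own high-value items, not green witnesses in learning blocks.

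For part~(ii), your bucketing idea is correct and matches the paper's \textsc{I-bgsz} procedure, but your accounting for the $(\log^* n)^2$ factor is off. It does not arise from two separate iterated-log estimation subroutines. In the paper, the $(\log^* n)^2$ appears only in the bucketing branch: with $K=\log^* n$ intervals, one factor of $K$ is the probability of guessing the interval $i$ where $\rho_i$ is small enough that $\widehat\mu_i$ is a good anchor, and the second factor of $K$ is the probability that $g_2$ itself lands in that interval (so that $\widehat\mu_i\in[\val(g_2),\,r\cdot\val(g_2)]$, making the $O(\log r)$ geometric levels cover the heavy class). The other two branches --- the single-item routine from part~(i), and a \textsc{SearchIII} procedure that captures the value of large red items when every $\rho_i\ge r\cdot\val(g_2)$ --- each lose only a single $\log^* n$, so the bucketing branch is the bottleneck.
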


\noindent (See  \S\ref{sec:valueMaxi} for proofs.) The former result improves on the the previous expected value of
$\Omega\big(\nicefrac{1}{(\log^* n)^2}\big)\cdot \OPT(\Gitems \setminus \gmax)$, and the latter result improves the previous expected value of 
$\Omega\big(\nicefrac{1}{\log n}\big) \cdot \OPT(\Gitems \setminus \gmax)$ when the rank $r$ is sufficiently smaller than the number of items $n$. 

%------------------------------------------
\subsection{Our Techniques}
\label{sec:our-techniques}

	The general  idea of all our algorithms is to find \emph{robust thresholds}. For packing IPs in the secretary and prophet settings, these robust thresholds are obtained by using a Multiplicative-Weight Updates (MWU) algorithm, and for single-item probability/value maximization the idea is to perform a robust binary search
on a set of candidate thresholds as we gather more information over time. 
Let us now flesh these ideas out in the context of two of our results: for Byzantine PIPs and for
single-item probability maximization.

\medskip
\paragraph{Byzantine Packing IPs.}
 At a high level, our robust algorithm looks at a Lagrangified value
 $c_t x_t - \gamma \ip{\lambda_t}{A_t x_t}$ of the $t^{th}$ item to
 make its decision. Here the dual $\lambda_t$ (computed using MWU)
 puts a relative ``penalty'' on each constraint, with higher penalties for
 constraints that are more occupied. The scale parameter $\gamma$
 balances between the value and the occupation penalty. Given this, our
 algorithm $\alg(\gamma)$ picks item $t$ if its Lagrangian value is
 non-negative, namely if $c_t \ge \gamma \ip{\lambda_t}{A_t}$. 
 Variants of this algorithm have been previously used 
 to study packing IPs in the \emph{stochastic} setting~\cite{AD15,GM-MOR16}. 
 Our first algorithmic contribution is that this algorithm can be made
 robust in the Byzantine Secretary model, \emph{assuming that  the
   right $\gamma$ is known}. Intuitively, the algorithm has a
 ``self-correcting'' nature that balances the occupation of the different constraints, and  the worst-case guarantees of MWU can be used to show its robustness.

 In stochastic settings, the right $\gamma$ is easy to estimate from
 the initial items; however,  a few adversarial items can bias the
 estimation in the Byzantine setting. So our second algorithmic
 contribution is to use a second layer of learning to estimate $\gamma$. We
 break the time horizon into $\parts$ intervals, learn
 $\gamma_1,\gamma_2,\ldots,\gamma_{\parts}$ online and run
 $\alg(\gamma_i, I_i)$ in each interval $I_i$.  The $i^{th}$ reward of
 expert $\gamma$ is the value the algorithm would have obtained on
 interval $I_i$ is on using $\gamma$. Since MWU algorithms choose the
 sequence $(\gamma_i)_i$ to do almost as well as the right $\gamma^*$,
 we get $\sum_i \alg(\gamma_i, I_i) \gtrsim \sum_i \alg(\gamma^*,
 I_i)$, which is essentially the value of the algorithm that knew the
 right $\gamma^*$ in hindsight. One difficulty is that the additive
 regret typically scales with the range of the possible rewards of the
 experts, and the red items can make this range too big. To handle
 this we introduce a truncation to these rewards (note we don't know
 $\OPT(\Gitems)$, so this step needs care), and also use a recent
 multiscale experts result of \cite{BDHN} to make the
 regret scale with the range of the reward of the \emph{best expert},
 not with all the ranges. 
  
 This idea (with some changes) extends to the case of robust
 prophets. Again, we define robust thresholds using MWU: we start off
 considering the Lagrangified value
 $c_t x_t - \gamma \ip{\lambda_t}{A_t x_t}$ of the $t^{th}$ item to
 make our decisions. The challenge of estimating the right $\gamma$
 now becomes simpler since we are given the distributions. But a new
 challenge arises: the expected occupation is different at each time
 step $t$ (which was not the case with random-order). To handle this
 issue and get the eventual solution, we further refine the Lagrangian
 penalty function (see \S\ref{sec:prophets} for details).
	
%###########################################################

\medskip
\paragraph{Single-item: probability maximization.} 
The classical secretary algorithm has two phases: \emph{sample} items to estimate a good threshold and \emph{select} an item above the threshold. 
Recall that an adversary can thwart such an algorithm by sending a red
item of very large value in the sample phase (which makes the
threshold too high, so the algorithm does not pick any item). 
However, this failure gives us information: namely, that the max value seen in the sample phase is an upper bound on the value of future items. 
By running $O(\log n)$ copies of the secretary algorithm with distinct
sample phases, we can force the adversary to plant a high-valued red
item in each sample phase, and these values must decrease over time.
(Running these multiple algorithms picks up to $O(\log n)$ items, but
we can subsample down to 1; this is where we lose our approximation factor.)

Now, assuming this nice structure, we can give our \textsc{Search}
procedure that takes an initial set of candidate thresholds and narrows it down as it gets more information, using a robust binary-search procedure. 
The algorithm uses the values of items in an initial prefix $I_0 =
[0,\frac14]$ as a candidate set $\Theta$ for the ``right'' threshold
$\val(g_2)$. At each time, it maintains upper and lower estimates that is used to filter these candidates:
\begin{OneLiners}
\item[1.] The upper bound $\widehat{u}$ is the maximum value seen in
  the previous interval, and the lower bound $\widehat{\ell}$ is the
  maximum value we have picked so far.
\item[2.] The threshold for the current interval is the median of the
  ``surviving'' candidates of $\Theta$, namely those of value in
  $(\widehat{\ell}, \widehat{u}]$. These have value strictly above what we have already picked. 
\item[3.] The algo picks the first item in the interval that is above this threshold. 
\end{OneLiners}

Suppose that, as discussed above, all intervals have a high-value
item; namely, the maximum-value item $\rho_i$ of the $i$th interval
has value $\ge \val(g_2)$. Then the upper bound $\widehat{u}$ never
excludes the ``right'' threshold $\val(g_2)$. Moreover, assume that
these max values $\rho_i$ are non-increasing. In this case, the set of
surviving candidates halves in each interval! Indeed, if we pick an
item in the interval the lower bound increases to the median value,
else all items in the interval were below the median and the upper
bound decreases to the current median value. Since there are more than
$\log n$ intervals, at some point the set of candidates becomes
empty. Now if $g_2$ comes in interval $I_0$ (which happens with
probability $\frac{1}{4}$), the right threshold $\val(g_2)$ is an
initial candidate but the lower bound must have excluded it, so we
have already picked an item with at least this much value.

The idea for our \emph{value-maximization} algorithms is to iteratively
refine the thresholds: we start with a polynomially-approximate
threshold, but each time we pick an item, we prove that we either
get good expected value, or else we get an exponentially-better
threshold. Naturally, this has to be done robustly, so that the red items
have  a limited impact.

%------------------------------------------
\subsection{Further Related Work}
See
\cite{GS-Book20} for general works on random-order online problems. 
In particular, generalizations to matroid and Packing LPs in stochastic models have  been extensively studied, e.g., see~\cite{BIKK-JACM18,Lachish-FOCS14,FSZ-SODA15} for matroids  and \cite{KRTV14,GM-MOR16,AD15} and references therein for packing.
In the last two decades, there is a long line of work  extending 
the classical  single-item $\nicefrac{1}{2}$-approximation prophet inequality~\cite{Krengel-Journal77,Samuel-Annals84} to  packing constraints. In particular, see \cite{KW-STOC12} for matroids and \cite{Rubinstein-STOC16,RS-SODA17} for arbitrary packing constraints. 
For applications of prophet inequalities to pricing mechanisms and online algorithms, we  suggest the  tutorial~\cite{FKS-EC21} and the survey~\cite{Lucier17}.

The above online algorithms for secretary and prophet models rely
heavily on the stochastic assumption, and are susceptible to even
slight worst-case corruptions. We believe that the robust algorithms
in this paper are interesting in their own right. They bridge the gap
between the (optimistic) stochastic and (pessimistic) adversarial
models, which has been a topic of significant interest in both online
algorithms~\cite{Meyerson-FOCS01,MGZ-SODA12,KMZ-STOC15,KKN-STOC15,mixedModelSpikes,Molinaro-SODA17,KM-ICALP20,GKRS-ICALP20,molRobust}
and online learning (see \cite{LykourisML18,guptaCOLT19} and
references within).

The recent paper~\cite{molRobust} considers a Byzantine-type  model with adversarial and stochastic items, and studies Online Convex Optimization and Welfare Maximization problems. A major difference from our work is that both of these problems are unconstrained. In \cite{KM-ICALP20}, the authors consider the Knapsack Secretary problem (i.e., PIP's with a single constraint) in a similar Byzantine model but with the additional assumption that the adversarial items come in bursts. They obtain a $(1-\e)$-approximation when $B \gtrsim \frac{1}{\e^2}$ and there are at most $\approx \frac{n}{\sqrt{B}}$ adversarial items in bursts of size $\approx \sqrt{B}$. 
 Finally, \cite{GKRS-ICALP20} consider streaming problems (i.e. the algorithm has limited memory) in a similar Byzantine model, and design algorithms for max-matching and submodular maximization.

%%% Local Variables:
%%% mode: latex
%%% TeX-master: "main"
%%% End:

% !TeX root = main.tex
% !TEX root = main.tex

%###########################################################
%###########################################################
%###########################################################
%###########################################################

\section{Byzantine Packing Integer Programs}
\label{sec:byzant-PIP}

In this section we discuss how to solve PIPs given in~(\ref{eq:pip})
and get a constant factor of the expected optimal value. The approach
will be to solve the its \emph{linear programming} relaxation: since
we assume that the right-hand sides are large (i.e., $\Omega(\log d)$)
compared to the entries of the constraint matrix, scaling down the
solution slightly and independently rounding each variable immediately
give an integer solution with almost as much value with high
probability. %~\cite{blah}. 

Each item $i$ is a pair $(c_i, a_i)$ of a column $a_i \in [0,1]^d$ of
$A$, and its value $c_i \in \R_{\geq 0}$. The $n$ items consist of
$\Gsize$ green items (denoted by $\Gitems$) and $\Rsize = n - \Gsize$
red items (denoted by $\Ritems$). The arrival times of red items are
chosen by an adversary. Then each green item $i \in \Gitems$ chooses
its arrival time independently and uniformly in the time horizon
$[0,1]$.

Our algorithm breaks the time horizon $[0,1]$ into several intervals
and considers the items that fall into each interval seperately. To
argue about this and other objects, it is useful to define the induced
LP (denoted $\textrm{LP}(S)$) for any subset $S$ of columns/items:
\begin{align*}
  \max&~ \ts \sum_{i \in S} c_i x_i  \tag{$\textrm{LP}(S)$}\\
 \text{s.t.}      &~ \ts \sum_{i \in S} a_i x_i \le B \cdot \ones\\
      &~\mathbf{x} \in [0,1]^n \enspace , 
\end{align*}
where $\ones$ is the $d$-dimensional all-ones vector. Let its optimal
solution be $\textbf{x}^*(S)$, having value
$\ip{c}{\textbf{x}^*} = \OPT(S)$. By rescaling rows, we can assume a
common value $B$ on the RHS.
We focus on two benchmarks: $\OPT(\Gitems)$ and
$\OPT(\Gitems \setminus g_{\max})$ where $g_{\max}$ denoting the green
item with the highest value. %The following is our main result.

\subsection{Algorithm Outline}
\label{sec:smoothness}

The first step of our algorithm to reduce to solving the following ``smooth'' instances:
\begin{assumption}[Smooth Instance] \label{assum:smooth}
  An instance with $\OPT :=  \OPT(\Gitems)$ is \emph{smooth} if:
  \begin{OneLiners}
  \item[1.] the total value of the green items of value $> \frac{\OPT}{B}$ is at most $\frac{\OPT}{2}$.  
  \item[2.] we are given an estimate $\widehat{O}$ for $\OPT$, such that
    $\widehat{O} \in [\OPT/n, \OPT\cdot n]$.
  \end{OneLiners}
\end{assumption}

\begin{restatable}[Reduction to Smooth Instances]{lemma}{Reduction} \label{lem:smoothRed}
  Suppose $B \ge \Omega(\poly(\log n))$. Given an algorithm to solve
  Packing LPs in the Byzantine Secretary setting that with constant
  probability is $\rho$-competitive w.r.t.\ $\OPT(\Gitems)$ for all
  smooth instances, we can obtain an algorithm which is 
  $\Omega(\rho)$-competitive in expectation for all instances w.r.t.\
  $\OPT(\Gitems \setminus g_{\max})$.
\end{restatable}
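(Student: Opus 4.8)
Write $\OPT := \OPT(\Gitems)$ and $\OPT' := \OPT(\Gitems \setminus g_{\max})$, and let $\cA$ be the given black box for smooth instances. Since the objective is nonnegative, any algorithm that with constant probability achieves value $\Omega(\rho)\cdot\OPT'$ is automatically $\Omega(\rho)$-competitive in expectation, so it suffices to build such an algorithm. The plan is to run, each with constant probability, one of two subroutines: (A) a ``grab a bounded number of large items'' subroutine, good when the green value concentrates on few items; and (B) the smooth black box $\cA$, run on a suitably chosen random sub-window of the time horizon, good when the green value is spread out. Randomizing between the two loses only a constant factor, so it is enough to show that whichever regime the instance falls into, the corresponding subroutine gets value $\Omega(\rho)\cdot\OPT'$ with constant probability (in regime (A), $\rho$ does not appear at all).

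\textbf{The spread regime and subroutine (B).} Here I would first supply the estimate $\widehat O$ required by \Cref{assum:smooth}: since we only need $\widehat O$ within a $\poly(n)$ factor of the relevant optimum, a coarse guess-and-double/conditioning argument suffices, using crude bounds such as $\OPT' \le \sum_{i \in \Gitems \setminus g_{\max}} c_i$ and $\OPT' \ge \val(g_2)$ (a single green column always fits since columns lie in $[0,1]^d$ and $B\ge 1$). Next, condition on the event that $g_{\max}$ lands in the first half $[0,\tfrac12]$ (probability $\tfrac12$) and run $\cA$ only on the window $[\tfrac12,1]$. Conditioned on this event, the green items falling in $[\tfrac12,1]$ form a uniformly random $\approx\tfrac12$-subset $S$ of $\Gitems\setminus g_{\max}$. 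The crux is a \emph{subsampling lemma}: when the green value is spread out — in particular when the ``big'' green items of value $>\OPT'/B$ have total value at most $\OPT'/2$, as in \Cref{assum:smooth}(1) — restricting the optimal LP solution $\bx^*(\Gitems\setminus g_{\max})$ to $S$ stays feasible (we only removed columns) and, by a second-moment/Bernstein concentration bound that uses $B \ge \poly(\log n)$ to control the per-item contributions $c_i x^*_i \le \OPT'/B$, keeps value $\Omega(\OPT')$ with constant probability; hence $\OPT(S) = \Omega(\OPT')$. Finally I would verify that the sub-instance on $[\tfrac12,1]$ (green set $S$, plus the red items in that window) is itself smooth — its \Cref{assum:smooth}(1) follows, up to constants, from that of the original instance together with $\OPT(S) = \Theta(\OPT')$, and its estimate is inherited from $\widehat O$ — so $\cA$ applies and returns value $\Omega(\rho)\cdot\OPT(S) = \Omega(\rho)\cdot\OPT'$ with constant probability.

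\textbf{The concentrated regime and subroutine (A).} If instead a small set $T$ of green items other than $g_{\max}$ already has $\OPT(T)=\Omega(\OPT')$, then a counting bound — each item of value $>\OPT'/B$ can be packed individually, so $T$ cannot contain too many such items relative to $B$ without forcing $\OPT(T)\gg\OPT'$ — reduces the task to selecting a bounded number of large green columns, which the (multi-)item Byzantine-secretary primitive (e.g.\ \Cref{thm:main3} and its matroid/cardinality extensions) handles with value $\Omega(\OPT')$ and constant probability. One chooses the dichotomy threshold so that every instance falls into regime (A) or regime (B).

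\textbf{Main obstacle.} I expect the delicate point to be the interface between subsampling and smoothness in regime (B): the windowed sub-instance must simultaneously (i) retain $\Omega(\OPT')$ of the LP value after a random $\tfrac12$-subsample and (ii) still satisfy \Cref{assum:smooth}(1) relative to \emph{its own} green optimum $\OPT(S)$, whereas the hypothesis only controls concentration relative to $\OPT'$; and all of this must be done robustly, i.e.\ without letting the adversarially placed red items in the window corrupt either the concentration estimate or the guess $\widehat O$. The hypothesis $B \ge \poly(\log n)$ is precisely the slack that lets one push both requirements through at once.
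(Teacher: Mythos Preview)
Your high-level decomposition matches the paper's: split into a ``concentrated'' regime handled by a Byzantine multi-item primitive, and a ``spread'' regime where you wait for $g_{\max}$ to land in $[0,\tfrac12]$ and then run the smooth black box on $[\tfrac12,1]$. But two steps in your plan do not go through as stated.

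\textbf{The estimate $\widehat O$.} Your proposed bounds $\OPT' \le \sum_{i \in \Gitems \setminus g_{\max}} c_i$ and $\OPT' \ge \val(g_2)$ are true but not computable: the algorithm cannot see colors, so it cannot form either sum, and any observable proxy (e.g.\ the total value of items seen so far, or the second-largest value) can be blown up by a single red item of huge value. A guess-and-double over a $\poly(n)$ range would need $\Theta(\log n)$ guesses, and you have no way to certify which one is correct. The paper resolves this with an extra case and a concrete estimator: it adds a \emph{third} subroutine that picks a uniformly random item, which wins whenever $c_{\max} \ge \tfrac{n}{2}\OPT'$; and in the remaining case it sets $\widehat O := c^{1/2}_{\max}$, the maximum value observed in $[0,\tfrac12)$. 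Conditioned on $g_{\max}$ falling there, this satisfies $c(g_{\max}) \le c^{1/2}_{\max} \le c_{\max} < \tfrac{n}{2}\OPT'$, which sandwiches $\widehat O$ within a factor $n$ of the sub-instance optimum. This is the missing idea in your regime (B).

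\textbf{The concentrated-regime primitive.} Invoking \Cref{thm:main3} or its matroid extension here is too weak: those results lose $\log n$ or $\log^* n$ factors and compare to $\OPT(\Gitems\setminus g_{\max})$ for single-item or matroid benchmarks, not to the value of the top-$B$ green items. The paper instead invokes the Byzantine \emph{knapsack} result of \cite{BGSZ-ITCS20} (\Cref{lem:itcs-result}), which is $(1-\varepsilon)$-competitive when $B \ge \poly(\log n)$; this is exactly what you need to capture the top $B$ green values. As a side benefit, once you phrase the concentrated case as ``the top $B$ green items (excluding $g_{\max}$) carry $\ge \tfrac14\OPT'$'', the spread case becomes ``the top $B$ carry $<\tfrac14\OPT'$'', and then smoothness of the $[\tfrac12,1]$ sub-instance is immediate (items of value $> \OPT(S)/B$ are among the top $B$, hence total $<\tfrac14\OPT' \le \tfrac12\OPT(S)$) --- no Bernstein-type subsampling lemma is needed.
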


Given this reduction (which is proved in \S\ref{sec:appendix-pip}), we
prove our main result for smooth instances:

\begin{theorem}[Algorithm for Smooth Instances]
  \label{thm:combo}
  Suppose $B \ge \Omega\big(\parts \log (d \parts/\delta')\big)$ and $\parts \ge
  \Omega(\log\!\log n)$. The solution returned by
    \Cref{alg:AD-byz} for a smooth Byzantine Secretary instance
 satisfies:
  \begin{OneLiners}
  \item[i.] (Feasibility) The solution always packs into the modified budget of $(B + \parts)\cdot\ones$.
  \item[ii.] (Value) The solution has value at least $\Omega(\OPT)$ with probability at least $1-\delta'$.
  \end{OneLiners}
\end{theorem}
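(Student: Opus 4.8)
The plan is to combine a deterministic feasibility argument with a two-level online-learning analysis: a \emph{single-scale} guarantee for the Lagrangian subroutine $\alg(\gamma)$ assuming the scale parameter is approximately right, and an \emph{experts} argument showing the second level learns such a scale. For \textbf{feasibility (part i)}, recall that \Cref{alg:AD-byz} splits $[0,1]$ into $\parts$ intervals and on interval $I_i$ runs $\alg(\gamma_i,I_i)$ with a private budget of $B/\parts$ per coordinate; the Lagrangian/budget rule accepts an item into $I_i$ only while every coordinate is still below $B/\parts$. Since columns lie in $[0,1]^d$, within one interval each coordinate overshoots $B/\parts$ by at most the single last-accepted item, i.e.\ by at most $1$, so summing over the $\parts$ intervals every coordinate is used at most $\parts\cdot(B/\parts+1)=B+\parts$. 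This holds pointwise, regardless of the items' colours or of the scales $\gamma_i$ the learner chooses.

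For \textbf{part (ii)}, write $\OPT:=\OPT(\Gitems)$ and call a scale \emph{good} if it lies within a constant factor of $\OPT/B$. Since smoothness supplies $\widehat{O}\in[\OPT/n,\OPT\cdot n]$, a geometric grid $\Gamma$ of $O(\log n)$ candidates spanning $[\widehat{O}/(nB),\,\widehat{O}n/B]$ contains a good scale $\gamma^\ast$ (the benchmark expert). For $\gamma\in\Gamma$ let $r_i(\gamma)$ be the value $\alg(\gamma,I_i)$ collects on $I_i$, a deterministic and observable function of the items landing in $I_i$. The key single-scale claim, a Byzantine analogue of the primal--dual analyses of \cite{AD15,GM-MOR16}, is: running $\alg(\gamma^\ast,I_i)$ independently on each interval yields $\ts\sum_{i=1}^{\parts} r_i(\gamma^\ast)\ge\Omega(\OPT)$ with probability $\ge 1-\delta'$. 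Its proof uses smoothness Item\,1 — the green items of value $\le\OPT/B$ already carry LP value $\ge\OPT/2$ — together with the self-correcting MWU over the $d$ constraints: each interval receives in expectation a $1/\parts$ fraction of this LP mass, a Bernstein/Freedman bound (small variance precisely because values are capped by $\OPT/B$) shows it receives $\Omega(\OPT/\parts)$ except with probability $\delta'/\parts$ once $B/\parts\gtrsim\log(\parts/\delta')$, the in-interval MWU regret is lower-order once $B\ge\Omega(\parts\log(d\parts/\delta'))$, and a union bound over the $\parts$ intervals finishes it. This is where the hypothesis on $B$ is spent.

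The second level learns $\gamma_1,\dots,\gamma_{\parts}$ by a full-information experts algorithm over $\Gamma$ whose reward for expert $\gamma$ on interval $i$ is the \emph{truncated} value $\tilde r_i(\gamma):=\min\{r_i(\gamma),\,c_\gamma\}$ with per-interval scale $c_\gamma:=\Theta(\gamma B/\parts)$; truncation is needed since a single huge red item would otherwise make some experts' rewards unbounded and blow up the regret. On the $(1-\delta')$ event above, $\alg(\gamma^\ast,I_i)$'s green value is $O(\OPT/\parts)\le c_{\gamma^\ast}$ on every interval, so $\ts\sum_i\tilde r_i(\gamma^\ast)=\ts\sum_i r_i(\gamma^\ast)\ge\Omega(\OPT)$. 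The multiscale-experts bound of \cite{BDHN} makes the regret scale with the \emph{best} expert's range rather than $\max_\gamma c_\gamma$, giving
\[
  \E\Big[\ts\sum_i \tilde r_i(\gamma_i)\Big]\;\ge\;\ts\sum_i \tilde r_i(\gamma^\ast)\;-\;O\!\big(c_{\gamma^\ast}\sqrt{\parts\log|\Gamma|}\big)\;=\;\ts\sum_i \tilde r_i(\gamma^\ast)\;-\;O\!\Big(\tfrac{\OPT}{\parts}\sqrt{\parts\log\!\log n}\Big),
\]
and since $\parts\ge\Omega(\log\!\log n)$ this regret is at most $\tfrac12\sum_i\tilde r_i(\gamma^\ast)$. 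As \Cref{alg:AD-byz} collects $\ts\sum_i r_i(\gamma_i)\ge\ts\sum_i\tilde r_i(\gamma_i)$, this yields value $\Omega(\OPT)$; passing from the expectation over the learner's coins to a high-probability statement is a standard martingale concentration (per-interval increments are bounded by $c_{\gamma_i}$, and absurdly large scales are played with negligible probability once the reward signal is seen), absorbed into $\delta'$.

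The \textbf{main obstacle} is choosing the truncation scales $c_\gamma$ from the \emph{rough} estimate $\widehat{O}$ alone: they must be small enough that the multiscale regret $c_{\gamma^\ast}\sqrt{\parts\log\!\log n}$ is a small constant fraction of $\OPT$ — which forces $c_{\gamma^\ast}=\Theta(\OPT/\parts)$ and hence the hypothesis $\parts\ge\Omega(\log\!\log n)$ — yet large enough never to clip the genuine green value the benchmark expert collects per interval. The other technically heavy ingredient is the single-scale claim itself under adversarial columns: one must show that the MWU self-correction preserves $\Omega(\OPT/\parts)$ \emph{green} value in each interval even though red items may consume budget, and drive its failure probability down to $\delta'/\parts$ — which is exactly what consumes $B\ge\Omega(\parts\log(d\parts/\delta'))$.
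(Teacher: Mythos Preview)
Your outline matches the paper's two-level approach closely, and the feasibility argument is exactly right. There are, however, two places where your argument diverges from what actually works.

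First, a minor imprecision: you claim that on the good event ``$\alg(\gamma^\ast,I_i)$'s green value is $O(\OPT/\parts)\le c_{\gamma^\ast}$ on every interval, so $\sum_i\tilde r_i(\gamma^\ast)=\sum_i r_i(\gamma^\ast)$.'' This equality is not justified and is generally false: $r_i(\gamma^\ast)$ counts \emph{all} value the Lagrangian subroutine picks, including red items, which can be arbitrarily large. Fortunately you don't need equality. The right argument (and what the paper does) is that the per-interval \emph{lower bound} furnished by the single-scale lemma is itself at most $c_{\gamma^\ast}=\frac{\gamma^\ast B}{\parts}$ --- indeed, that lower bound takes the form $\frac{1}{\parts}\min\{\frac{\gamma^\ast B}{2},\,\cdot\}-\text{small}$ --- so $\tilde r_i(\gamma^\ast)=\min\{r_i(\gamma^\ast),c_{\gamma^\ast}\}$ is still at least that lower bound, and the truncated sum is $\Omega(\OPT)$.

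Second, and more substantively: your final paragraph passes from an expected-regret bound over the learner's random choice of $\gamma_i$ to a high-probability statement via ``standard martingale concentration,'' relying on the assertion that ``absurdly large scales are played with negligible probability once the reward signal is seen.'' This is not standard, and it is not clear how to make it work: the truncated increment $\tilde r_i(\gamma_i)$ can be as large as $c_{\gamma_{\max}}=\Theta(n\cdot\OPT/\parts)$ whenever the learner samples the top scale, and the multiscale experts guarantee of \cite{BDHN} says nothing about the \emph{probability mass} placed on large-scale experts --- only that the regret scales with $c_{\gamma^\ast}$. The paper sidesteps this entirely: the multiscale experts algorithm deterministically outputs a distribution $p^i$, and since we are solving an LP, the algorithm can play \emph{fractionally}, combining the solutions $\{\alg(\gamma,I_i)\}_{\gamma\in\Gamma}$ with weights $p^i(\gamma)$. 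The resulting value $\sum_i\langle\bar{\alg}_i,p^i\rangle$ is then a deterministic function of the item arrivals, so the regret inequality holds ``in every scenario,'' and the only remaining randomness --- the arrival times of green items --- is already controlled at level $1-\delta'$ by the per-interval union bound. This is both simpler and actually correct.
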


We can scale down each $x_t$ by $\frac{B}{B+K} = 1 - o(1)$ to get a feasible solution
with the same value guarantee up to a constant. Then,
applying \Cref{lem:smoothRed} to~\Cref{thm:combo} gives us the
constant factor approximation of \Cref{thm:inf-main1}.  We now prove
\Cref{thm:combo} in the rest of this section: here are the main
conceptual steps:

\begin{itemize}%[partopsep=0pt,topsep=0pt,parsep=0pt]
\item We break the time horizon into $\parts$ time intervals of equal
  size (we will choose $\parts := \Theta(\log\!\log n)$).  For
  interval $I$, use $\Gitems(I)$ and $\Gsize(I)$ to denote the green
  items and their number in $I$, respectively. Note that
  $\E[\Gsize(I)]=|I| \cdot G$, where $|I|$ is the fraction of total
  time $[0,1]$ covered by interval $I$.

\item In \S\ref{sec:robustIfGamma} we give our algorithm for a single
  interval $I$.  Given a parameter $\gamma$, this algorithm runs a
  low-regret OLO subroutine on a carefully chosen Lagrangification of
  the problem.  Let $\alg(I,\gamma)$ denote the {expected value} that
  the algorithm gets when applied to interval $I$ with parameter
  $\gamma$, where the expectation is over the random arrival times of
  the green items. \Cref{alg:AD-oneint} gives a lower bound on
  $\alg(I,\gamma)$.

\item Finally, in \S\ref{sec:optUnknown} we use a multi-scale low
  regret algorithm to learn the optimal choice of $\gamma$. This
  allows us to combine the single-interval algorithms and prove
  \Cref{thm:combo}.
\end{itemize}

%###########################################################
%###########################################################
%###########################################################
%###########################################################

\subsection{Algorithm for a Single Interval}
\label{sec:robustIfGamma}

So in this section we fix an interval $I \sse [0,1]$, and give an algorithm that gets good value in this interval as long as it knows the ``correct'' scalar parameter $\gamma$. We use r.v.s $(C_t, A_t)$ to
denote the value and size of the $t$-th item that appears in this
interval: these depend on which of the items from $\Gitems$ actually
fall into this interval, and on their locations. As mentioned before, the idea of the algorithm is to look at the Lagrangified value $C_t x_t - \gamma \ip{\lambda_t}{A_t x_t}$ to make the decision $x_t \in \{0,1\}$ to pick or not the $t$-th item in the interval. The duals $\lambda_t$'s are computed using an online learning algorithm and put a relative ``price'' on each constraint, with higher prices for
 constraints that are more occupied.

To make this precise, let
$\triangle^{d-1} := \big\{ \mathbf{p} \in [0,1]^d \mid \| \mathbf{p}\|_1 = 1 \big\}$ be the
probability simplex.  Given the algorithm's choice $\algx_t \in [0,1]$ for time step $t$,
define the linear penalty function $f_t: \triangle^{d-1} \to [0,1]$ as
\begin{gather}
  f_t(\lambda) := \ip{\lambda}{A_t \algx_t} \enspace .
\end{gather}
The algorithm for interval $I$ (given a $\gamma$) is then described in \Cref{alg:AD-oneint}.
Note that the algorithm does not depend on the exact arrival times of the items, just on their relative
arrival order and step $t$ denotes
the $t$-th arrival in interval $I$.

\begin{algorithm}
  \caption{IntervalByzLP$(I,\gamma)$}
  \label{alg:AD-oneint}
\begin{algorithmic}[1]  
\For{steps $t = 1, 2, \ldots$ in interval $I$}
\State    Use the low-regret OLO algorithm from \Cref{lem:OLOregret} with $\e = \frac{1}{2}$ on $f_1,\ldots,f_{t-1}$ to get
    $\dual_t \in \triangle^{d-1}$\;
\State    Compute $\algx_t \in [0,1]$ maximizing $x \mapsto C_t x - \gamma \ip{\dual_t}{A_t x}$\;
\State    \textbf{break} if the scaled budget is violated, i.e., if a coordinate
    of $\sum_{s \leq t} A_s
    \algx_s$ exceeds $B\,|I|$
 \EndFor
\end{algorithmic}
\end{algorithm}
%\begin{algorithm}
%  \caption{IntervalByzLP$(I,\gamma)$}
%  \label{alg:AD-oneint}
%  \for{steps $t = 1, 2, \cdots$ in interval $I$}{
%    Use the low-regret OLO algorithm from \Cref{lem:OLOregret} with $\e = \frac{1}{2}$ on $f_1,\ldots,f_{t-1}$ to get
%    $\dual_t \in \triangle^{d-1}$\;
%    \[ f_s(\lambda) := \ip{\lambda}{A_s \algx_s}. \] \;
%    Compute $\algx_t \in [0,1]$ maximizing $x \mapsto C_t x - \gamma \ip{\dual_t}{A_t x}$\;
%    \textbf{break} if the scaled budget is violated, i.e., if a coordinate
%    of $\sum_{s \leq t} A_s
%    \algx_s$ exceeds $B\,|I|$
%  }
%\end{algorithm}

In the rest of this section, we prove a lower bound on the value that this algorithm obtains over the interval $I$. For that, let $x^*$ denote the optimal solution consisting only of green items whose value is at most $\frac{OPT}{B}$.  By Assumption \ref{assum:smooth},
this solution has value at least $\frac{OPT}{2}$. Also, define 
for any set $S$ of timesteps 
the Lagrangified
value of the optimal solution using the algorithm's choices of
$\gamma \dual_t$'s as the Lagrangian multipliers:
$
\textstyle  \LOPT(S,\gamma) := \sum_{t \in S} \Big(C_t x^*_t - \gamma
  \ip{\dual_t}{ A_t x^*_t}\Big)  .  %\label{eq:lopt}
$
Let $\stoch(I) \subseteq \N$ denote
the steps $t$ where the $t$-{th} item in the interval $I$ is
green, which is a random set since each green item chooses its arrival times uniformly at
random in $[0,1]$ whereas the red items choose their arrival times
adversarially. 

	The first step for analyzing our algorithm is showing that it obtains value in $I$ comparable to the Lagrangified value of the optimal solution $x^*$ in this interval.

\begin{lemma}[Value Comparable to Lagrangified $\OPT$]
  \label{lem:AD}
  For any $I \sse [0,1]$ and $\gamma > 0$, 
  we have
  \begin{align}
    \alg(I,\gamma) ~\ge~ \min\Big\{\tfrac{1}{2} \, |I| \gamma B~,~
    \LOPT\big(\stoch(I),\gamma \big) \Big\} - 2\gamma \log d \enspace . \label{eq:AD-bound}
  \end{align}
\end{lemma}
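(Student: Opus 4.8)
The plan is to prove a pointwise inequality for every fixed realization of the green arrival times into $I$, and then take expectations. Fix such a realization, let $\tau$ be the last step processed in $I$ --- the step that triggers the break, or the final step of $I$ if no break occurs --- and write $\text{ALG} := \ts\sum_{t\le\tau} C_t\algx_t$ for the value collected. Use the telescoping identity
\begin{align*}
  \text{ALG} \;=\; \ts\sum_{t\le\tau}\big(C_t\algx_t - \gamma\, f_t(\dual_t)\big)\;+\;\gamma\ts\sum_{t\le\tau} f_t(\dual_t).
\end{align*}
The first sum is nonnegative: each $\algx_t$ maximizes the linear map $x\mapsto C_t x - \gamma\ip{\dual_t}{A_t x}$ over $[0,1]$, and $x=0$ is feasible, so $C_t\algx_t - \gamma f_t(\dual_t) = (C_t - \gamma\ip{\dual_t}{A_t})^+ \ge 0$. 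Comparing instead against the feasible point $x^*_t \in [0,1]$ --- and recalling $x^*_t = 0$ for every red step, since $x^*$ uses only green items --- the same term-by-term optimality gives $\ts\sum_{t\le\tau}\big(C_t\algx_t - \gamma f_t(\dual_t)\big) \ge \ts\sum_{t\le\tau}\big(C_t x^*_t - \gamma\ip{\dual_t}{A_t x^*_t}\big)$, whose restriction to the green steps (red ones contribute $0$) equals $\LOPT(\stoch(I),\gamma)$ precisely when $\tau$ is the last step of $I$ (green items dropped by $x^*$ simply contribute $0$, consistently with the definition of $\LOPT$).

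Now case on whether the break fired. \emph{If it did}, some coordinate of $\ts\sum_{t\le\tau}A_t\algx_t$ exceeds $B|I|$, i.e.\ $\max_{\dual\in\triangle^{d-1}}\ts\sum_{t\le\tau}f_t(\dual) = \|\ts\sum_{t\le\tau}A_t\algx_t\|_\infty > B|I|$. Feeding this into the low-regret guarantee of \Cref{lem:OLOregret} (instantiated with $\e=\tfrac12$; the $f_t$ have range $[0,1]$ since $A_t\algx_t\in[0,1]^d$) gives $\ts\sum_{t\le\tau}f_t(\dual_t) \ge \tfrac12\max_{\dual}\ts\sum_{t\le\tau}f_t(\dual) - 2\log d > \tfrac12 B|I| - 2\log d$, and combined with nonnegativity of the first sum, $\text{ALG} \ge \gamma\big(\tfrac12 B|I| - 2\log d\big) = \tfrac12|I|\gamma B - 2\gamma\log d$. \emph{If it did not}, the algorithm processes all of $I$, so $\tau$ is the last step; the comparison above applies and, using $\gamma\ts\sum_t f_t(\dual_t)\ge 0$, yields $\text{ALG} \ge \LOPT(\stoch(I),\gamma)$. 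In either case $\text{ALG}\ge\min\{\tfrac12|I|\gamma B,\ \LOPT(\stoch(I),\gamma)\} - 2\gamma\log d$ (using $2\gamma\log d\ge 0$ in the no-break case); taking expectations over the green arrivals gives \eqref{eq:AD-bound}. (Since $\LOPT(\stoch(I),\gamma)$ is itself random, the bound is genuinely pointwise, hence also holds in expectation.)

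The telescoping identity and term-by-term optimality are routine; the step I expect to need the most care is matching constants --- confirming that the $\e=\tfrac12$ instantiation of the multiplicative-weights bound produces exactly the factor $\tfrac12$ on the budget term and the additive $2\log d$ --- together with the bookkeeping at the break step: the item that triggers the break is still counted in $\text{ALG}$, and since its column lies in $[0,1]^d$ the interval overshoots its share of the budget by at most $1$, which is what feeds the feasibility claim of \Cref{thm:combo}. A secondary point is the harmless identification of $\ts\sum_{t\in\stoch(I)}\big(C_t x^*_t - \gamma\ip{\dual_t}{A_t x^*_t}\big)$ with $\LOPT(\stoch(I),\gamma)$ when $x^*$ drops a high-value green item, and noting the edge case of an empty interval (where $\text{ALG}=0$ and the right-hand side is $\le 0$).
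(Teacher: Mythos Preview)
Your proof is correct and follows essentially the same approach as the paper: a two-case analysis on whether the budget is exhausted, using best response against $x=0$ for nonnegativity, best response against $x^*_t$ for the comparison to $\LOPT$, and the OLO regret guarantee to convert budget exhaustion into value. The only cosmetic difference is that you organize the argument via the telescoping decomposition $\text{ALG} = \sum(C_t\algx_t - \gamma f_t(\dual_t)) + \gamma\sum f_t(\dual_t)$ and drop red items by noting $x^*_t = 0$ there, whereas the paper works directly in each case and drops red items by nonnegativity of $C_t\algx_t$; these are interchangeable.
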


\begin{proof}%[Proof of~\Cref{lem:AD}]
Consider the algorithm's run. Let $\tau$ be the number of items seen when the
  algorithm stops; that is, the smallest value $\tau$ such that
  $\sum_{t \le \tau} A_t \algx_t \not \le  |I| \cdot B \cdot \ones$. If the
  algorithm does not exhaust the budget, set $\tau$ to be the number of items
  $|\items(I)|$ in interval $I$. The value of the algorithm is exactly
  $\sum_{t \le \tau} C_t \algx_t$ and its occupation is
  $\sum_{t \le \tau} A_t \algx_t$.
		
    \paragraph{Case 1 (Budget exhausted):} The
  multiplicative-plus-additive guarantees of the low-regret algorithm
  from~\Cref{lem:OLOregret} for $\varepsilon = 1/2$ gives
  \begin{gather*}
\textstyle    \sum_{t \le \tau} f_t(\dual_t) ~\ge~ \frac{1}{2}
    \max_{\lambda \in \Delta^{d-1}} \sum_{t \le \tau} f_t(\lambda) -
    2\log d \enspace.
  \end{gather*}
  Substituting the definition $f_t(\lambda) = \ip{\lambda}{A_t \algx_t}$, we get
  \begin{gather*}
\textstyle    \sum_{t \le \tau} \ip{\dual_t}{A_t \algx_t} ~\ge~
    \frac{1}{2}
    \max_{\lambda \in \Delta^{d-1}} \, \sum_{t \le \tau} \ip{\lambda}{A_t \algx_t} -
    2\log d
   ~ =~     \frac{1}{2} \|A_1 \algx_1 + \ldots + A_\tau
    \algx_{\tau}\|_{\infty} - 2\log d \enspace.
  \end{gather*}
  Hence, we can infer that
  \begin{align}
    \sum_{t \le \tau} C_t \algx_t ~-~ \frac{\gamma}{2} \cdot \|A_1
    \algx_1 + \ldots + A_\tau \algx_{\tau}\|_{\infty}
    &~\ge~ \sum_{t \le \tau} C_t \algx_t - \gamma \sum_{t \le \tau} \ip{\dual_t}{A_t \algx_t} -2\gamma \log d ~\ge~ - \, 2\gamma \log d  ,\label{eq:packIntBase}
  \end{align}
  where the second inequality uses that $\algx_t$ is a
  best-response, and hence is no worse than playing $x = 0$.
  Moreover, if we exhaust our budget, the occupation
  $\| \sum_{t \leq \tau} A_t \algx_t \|_{\infty} \geq |I|\cdot B$, and thus
  \begin{align}
\textstyle    \sum_{t \le \tau} C_t \algx_t ~\ge~ \one{\text{(budget exhausted)}} \cdot \frac{1}{2} \gamma |I| \cdot B - 2\gamma \log d \enspace . \label{eq:AD1}
  \end{align}

  \paragraph{Case 2 (Budget left):} To give a lower bound on the value in the case we do not exhaust our
  budget, we use that for any $t$ the value
  $x^*_t$ is not a better response than 
  $\algx_t$:
  \begin{align*}
    C_t \algx_t ~\ge~ C_t \algx_t - \gamma \ip{\dual_t}{A_t \algx_t} ~\ge~ C_t x^*_t - \gamma \ip{\dual_t}{A_t x^*_t} \enspace.
  \end{align*}
  Summing over all times and using the non-negativity of
  $C_t \algx_t$ to drop the red items,%non-stochastic terms,
  \begin{align*}
\textstyle    \sum_{t \le \tau} C_t \algx_t ~\ge~ \sum_{t \in \stoch(I),\, t
    \le \tau} C_t \algx_t ~\ge~ \sum_{t \in \stoch(I),\, t \le \tau}
    \Big(C_t x^*_t - \gamma \ip{\dual_t}{A_t x^*_t}\Big) \enspace . 
  \end{align*}
  But when the algorithm does not exhaust its budget, the RHS is
  precisely $\LOPT\big(\stoch(I),\gamma\big)$, and so
  \begin{gather}
    \sum_{t \le \tau} C_t \algx_t ~\ge~ \one{\text{(budget
        left)}} \cdot \LOPT\big(\stoch(I),\gamma\big) \enspace . \label{eq:AD2}
  \end{gather}
  Combining \eqref{eq:AD1} and (\ref{eq:AD2}) concludes the proof.
\end{proof}

	 \IGNORE{
	\begin{remark}
	The main advantage of using this modified version of Agrawal-Devanur is that since $f_t$'s are always non-negative we can just drop them when doing 
	\begin{align*}
		\sum_{t \le \tau} \bigg(c_t \algx_t - \gamma f_t(\algx_t)\bigg) \red{\ge} \sum_{t \le \tau, t \in \stoch} \bigg(c_t \algx_t - \gamma f_t(\algx_t)\bigg) \ge \sum_{t \le \tau, t \in \stoch} \bigg(c_t x^*_t - \gamma f_t(x^*_t)\bigg).  
	\end{align*}
	This is important because we can only properly control $f_t(x^*_t)$ for the stochastic times. 
	\end{remark}
}

	The final piece is to lower bound the Lagrangified value of the optimal solution $x^*$ on the interval $I$. The proof of this lemma crucially uses the random arrival times of the green items.

\begin{lemma} \label{lemma:LOPT}
  Let $|I| \leq \tfrac14$ and $B \ge \Omega\big(\frac{\log (4d/\delta)}{|I|}\big)$. 
  Then for any $\gamma > 0$, 
  \begin{align*}
    \Pr\Big[ \LOPT\big(\stoch(I),\gamma\big) ~\ge~ |I| \cdot \big( \tfrac{\OPT}{4} - 4\gamma B
    \big) \Big] \geq 1-\delta \enspace ,
  \end{align*}
  where the probability is taken over the random arrival times of the green items.
\end{lemma}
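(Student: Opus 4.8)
The plan is to split $\LOPT(\stoch(I),\gamma)$ into a value part and a penalty part and to concentrate each separately, the only randomness being the i.i.d.\ uniform placement of the green items. Re-indexing the green steps $t\in\stoch(I)$ by the green items $i\in\Gitems\cap I$ they carry, we have $\LOPT(\stoch(I),\gamma)=R-\gamma P$, where $R:=\sum_{i\in\Gitems\cap I}c_ix^*_i$ and $P:=\sum_{i\in\Gitems\cap I}\ip{\dual(i)}{a_i}\,x^*_i$, with $\dual(i)\in\triangle^{d-1}$ the dual the algorithm holds when it processes green item $i$ (here $\ip{\dual(i)}{A_ix^*_i}=x^*_i\,\ip{\dual(i)}{a_i}$ since $x^*_i$ is a scalar). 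It then suffices to show that (a) $R\ge\tfrac14|I|\,\OPT$ with probability at least $1-\tfrac{\delta}{2}$, and (b) $P\le 4B|I|$ with probability at least $1-\tfrac{\delta}{2}$, and to take a union bound.

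For (a), each green item falls into $I$ independently with probability $|I|$, so $R=\sum_{i\in\Gitems}\mathbf{1}[i\in I]\,c_ix^*_i$ is a sum of independent terms, each lying in $[0,\OPT/B]$ because clause~1 of Assumption~\ref{assum:smooth} lets $x^*$ use only green columns of value at most $\OPT/B$ (and $x^*_i\le 1$), while $\E[R]=|I|\,\val(x^*)\ge\tfrac12|I|\,\OPT$ by the same clause. Rescaling the summands by $B/\OPT$ to lie in $[0,1]$, so their total has mean $\mu\ge\tfrac12|I|B$, a multiplicative Chernoff bound gives $\Pr[R<\tfrac14|I|\,\OPT]=\Pr[\text{rescaled total}<\tfrac{\mu}{2}]\le e^{-\mu/8}\le e^{-|I|B/16}\le\tfrac{\delta}{2}$, using $B=\Omega(\log(1/\delta)/|I|)$.

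For (b), the technical core, I would start from feasibility of $x^*$, namely $\sum_{i\in\Gitems}a_ix^*_i\le B\cdot\one$. Restricting to the random set of green items landing in $I$ and applying a Chernoff bound in each coordinate, together with a union bound over the $d$ coordinates, shows $\sum_{i\in\Gitems\cap I}a_ix^*_i\le 2B|I|\cdot\one$ with probability at least $1-\tfrac{\delta}{2}$; this is exactly where the hypothesis $B=\Omega(\log(4d/\delta)/|I|)$ is used. Conditioning on this event, reveal the green items of $I$ in their (uniformly random) arrival order: just before the algorithm reaches the next green item its dual $\dual(i)$ is already determined, while that item is uniform among those not yet seen, so in expectation its column $a_ix^*_i$ is a $1/(\text{remaining count})$ fraction of the leftover column mass, which is coordinatewise at most $2B|I|\cdot\one$; since $\dual(i)$ lies in the simplex, its inner product with such a vector is at most $2B|I|$ times that fraction, and summing these per-step bounds over the green items of $I$, then converting the bound on $\E[P]$ into a high-probability statement, yields $P\le 4B|I|$. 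Combining the events of (a) and (b) proves the lemma.

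The step I expect to be the main obstacle is precisely this bound on $P$: the duals are chosen \emph{adaptively}, so $\dual(i)$ is correlated with which green items have already been processed, and one cannot simply push the expectation through $\E[\dual(i)]$. The delicate point is to make the per-step contributions actually telescope to $O(B|I|)$ rather than accumulate an extra logarithmic factor from the harmonic-type sum over the shrinking remaining count; this requires the random-order structure to be exploited with care, using that the dual the algorithm accumulates in a coordinate is large exactly when that coordinate is already heavily occupied, so that coordinate carries little leftover $x^*$-mass there.
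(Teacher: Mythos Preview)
Your decomposition $\LOPT=R-\gamma P$ and your treatment of part~(a) are essentially the paper's Claim~\ref{claim:stochVal1}; that part is fine.

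The gap is in part~(b), and you have already put your finger on it. Once you condition on the \emph{set} $S=\Gitems\cap I$ and then reveal its elements in random order, the $j$-th green item is uniform over the remaining $|S|-j+1$ elements, so its expected column is at most $\frac{1}{|S|-j+1}$ times the leftover mass, which you upper bound by $2B|I|\cdot\one$. Summing $\frac{2B|I|}{|S|-j+1}$ over $j=1,\dots,|S|$ gives $2B|I|\,H_{|S|}$, a genuine logarithmic blow-up. Your final sentence proposes to rescue this via ``the dual is large in a coordinate exactly when that coordinate is already heavily occupied, so little $x^*$-mass remains there,'' but this is not correct: the dual $\dual_t$ tracks the \emph{algorithm's} occupation $\sum_s A_s\algx_s$, which can be driven by red items and by $\algx$-choices that have nothing to do with where $x^*$ puts its mass. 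There is no mechanism coupling the dual to the leftover $x^*$-mass, so this telescoping does not happen.

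The paper avoids the harmonic sum by conditioning differently. Instead of fixing the \emph{identities} of the green items in $I$, it fixes only $\Gsize(I)=k$ and the time positions $t_1,\dots,t_k$. Under this conditioning the $k$ green items appearing in $I$ are a uniform sample \emph{without replacement from all $G$ green items}, not just from $S$. Hence the (unconditional) expected column of each one is at most $\frac{B}{G}\cdot\one$, independent of $j$, and summing over $j\le k$ gives $\frac{kB}{G}\approx |I|B$ with no harmonic factor. The remaining work is to show that sampling without replacement keeps the conditional expectations close to $\frac{B}{G}\cdot\one$; this is exactly Lemma~\ref{lemma:breakCorr}, and it is here that the hypothesis $|I|\le\tfrac14$ matters, since it ensures $k\le\Gsize/2$ so one never depletes the pool. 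Your step~1 (the coordinatewise Chernoff bound on $\sum_{i\in S}a_ix^*_i$) is then unnecessary.
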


We give the essential intuition  of this lemma
 here (at least in expectation)  and defer the details to \Cref{sec:lemma:LOPT}.
Consider any timestep $t \in \stoch(I)$:
  \begin{align}
    \E \Big[C_t x^*_t - \gamma \ip{\dual_t}{ A_t x^*_t}\Big]
    ~\ge ~ \tfrac{\OPT}{2\Gsize} - \gamma \E\Big[\ip{\dual_t}{A_t
      x^*_t}\Big] \enspace , \label{eq:start-of-Llemma}
  \end{align}
  where the expectation is over the random ordering. For intuition
  only, suppose each column $A_t$ is an i.i.d.\ sample (this is not
  w.l.o.g.), so that $A_t x^*_t$ is independent of $\dual_t$.  Then
  the expectation can be pushed into the inner product; hence if there
  are $\Gsize$ green items overall, the expected value
  $\E[A_t x^*_t] \leq B/\Gsize \cdot \ones$, and so
  \begin{align}
    \E \Big[C_t x^*_t - \gamma \ip{\dual_t}{ A_t x^*_t}\Big]
    ~~\ge~~ \tfrac{\OPT}{2\Gsize} - \gamma \ip{\E \dual_t}{\tfrac{ B}{\Gsize} \cdot \ones} ~~\ge~~ \tfrac{\OPT}{2\Gsize} - \gamma \tfrac{B}{\Gsize} \enspace , \label{eq:midLlemma}
\end{align}
where the last inequality uses that
$\dual_t \in \Delta^{d-1}$. Finally, 
$\E[|\stoch(I)|] = |I| \cdot \Gsize$, so
we get $\E [ \LOPT] \ge |I|(\frac{\OPT}{2} - \gamma B)$ to complete the
proof of \Cref{lemma:LOPT} in expectation. 
However, 
the reason why this is just intuition and not a proof is that we sample \emph{without replacement},
so $\dual_t$ (which depends on the $t-1$ first items on the
interval $I$) is correlated with $A_t x^*_t$.
To handle this, in \Cref{sec:lemma:LOPT} we have to argue 
why these correlations are small.

Combining these lemmas gives the desired guarantee for our algorithm.

\begin{lemma}[Value of \Cref{alg:AD-oneint}] \label{lemma:packInterval}
	Let $|I| \leq \tfrac14$ and $B \ge \Omega\big(\frac{\log (4d/\delta)}{|I|}\big)$. Then for any $\gamma > 0$, with probability at least $1-\delta$ we have 
	\begin{align*}
	\alg(I, \gamma) \ge |I| \cdot \min\Big\{\tfrac{\gamma B}{2}\,,\, \tfrac{\OPT}{4} - 4 \gamma B \Big\} - 2 \gamma \log d.
	\end{align*}
\end{lemma}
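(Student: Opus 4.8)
The statement is the conjunction of the two preceding lemmas, so the plan is simply to feed the high-probability lower bound on $\LOPT(\stoch(I),\gamma)$ from \Cref{lemma:LOPT} into the deterministic inequality of \Cref{lem:AD}.

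First I would use \Cref{lem:AD} in the form in which its proof actually establishes it: for \emph{every} realization of the arrival order of the green items in $I$, inequalities \eqref{eq:AD1} and \eqref{eq:AD2} hold, and since exactly one of ``budget exhausted'' / ``budget left'' occurs, the value obtained by \Cref{alg:AD-oneint} on $I$ satisfies
\[
  \alg(I,\gamma) \;\ge\; \min\!\Big\{ \tfrac12 |I|\,\gamma B,\ \LOPT\big(\stoch(I),\gamma\big) \Big\} - 2\gamma \log d .
\]
Next, since $|I|\le\tfrac14$ and $B \ge \Omega\big(\log(4d/\delta)/|I|\big)$ are exactly the hypotheses of \Cref{lemma:LOPT}, that lemma applies verbatim and gives that, with probability at least $1-\delta$ over the green arrival times,
\[
  \LOPT\big(\stoch(I),\gamma\big) \;\ge\; |I|\Big(\tfrac{\OPT}{4} - 4\gamma B\Big).
\]
Finally I would condition on this $(1-\delta)$-probability event and substitute. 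The map $x \mapsto \min\{a,x\}$ is nondecreasing, so replacing $\LOPT(\stoch(I),\gamma)$ by the smaller quantity $|I|(\OPT/4 - 4\gamma B)$ can only decrease the minimum; using also $\tfrac12|I|\gamma B = |I|\cdot\tfrac{\gamma B}{2}$ and factoring out $|I|>0$,
\[
  \min\!\Big\{ \tfrac12 |I|\,\gamma B,\ \LOPT\big(\stoch(I),\gamma\big) \Big\}
  \;\ge\; |I|\cdot \min\!\Big\{ \tfrac{\gamma B}{2},\ \tfrac{\OPT}{4} - 4\gamma B \Big\}.
\]
Combining the two displayed bounds yields $\alg(I,\gamma) \ge |I|\cdot\min\{\tfrac{\gamma B}{2},\,\tfrac{\OPT}{4}-4\gamma B\} - 2\gamma\log d$ on an event of probability at least $1-\delta$, which is exactly the claim.

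\textbf{Main obstacle.} There is no substantive obstacle here: all the real work lives in \Cref{lem:AD} (the low-regret/self-correcting argument) and in \Cref{lemma:LOPT} (the sampling-without-replacement concentration argument), and the present lemma is just their routine conjunction. The only two points that require a little care are (a) invoking \Cref{lem:AD} at the level of the realized value of the algorithm on $I$ — which is what its proof delivers — so that the resulting inequality can be intersected with the probabilistic event of \Cref{lemma:LOPT}; and (b) observing that ``taking the min with the fixed quantity $\tfrac12|I|\gamma B$'' is monotone, so that substituting a lower bound for $\LOPT(\stoch(I),\gamma)$ inside the min is legitimate (and remains valid regardless of the sign of $\OPT/4 - 4\gamma B$).
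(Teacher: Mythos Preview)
Your proposal is correct and matches the paper's approach exactly: the paper simply writes ``Combining these lemmas gives the desired guarantee for our algorithm'' without spelling out the substitution. Your observation that \Cref{lem:AD} must be read as a pointwise (per-realization) bound---which is indeed what its proof establishes---so that it can be intersected with the high-probability event of \Cref{lemma:LOPT}, is precisely the one subtlety worth noting, and you handle it correctly.
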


As mentioned earlier, we see that a good choice of $\gamma$ is $\Theta(\frac{\OPT}{B})$, but this requires us to know the
value of $\OPT$; we now show how another layer of online learning can learn this value well enough.

%###########################################################
%###########################################################
%###########################################################
%###########################################################

\subsection{A Robust LP Algorithm via Learning the Multiplier $\gamma$}
\label{sec:optUnknown}

We partition the time interval $[0,1]$ into $\parts$ intervals
$I_1, I_2, \ldots, I_{\parts}$ of equal size, and
run~\Cref{alg:AD-oneint} in each interval with the value of $\gamma$
learned from previous intervals via an online learning
algorithm. Formally, define the gain functions:
\begin{align*}
	\alg_i(\gamma) &:= \textrm{value from
                \Cref{alg:AD-oneint} with parameter $\gamma$ over the $i^{th}$ interval $I_i$ } \\
                  \bar{\alg}_i(\gamma) &:= \min\{ \alg_i(\gamma)~, ~\tfrac{B}{\parts}  \gamma\} \text{  the algorithm's  truncated value} \enspace .
\end{align*} 
Using Assumption \ref{assum:smooth} that we know $\OPT$ up to $\poly(n)$ factors, let
$\Gamma$ be a list of $O(\log n)$ values that contains
$\frac{\OPT}{16B}$ within a factor of 2. The complete algorithm is the
following:

\begin{algorithm}
  \caption{ByzLP}
  \label{alg:AD-byz}
\begin{algorithmic}[1]
  \For{interval $i = 1, \ldots, K$}
    \State
    pick ${\gamma}_i \in \Gamma$ using the multiscale experts
    algorithm of \Cref{thm:multiscale} on
    $\bar{\alg}_1,
    \bar{\alg}_2,\ldots, \bar{\alg}_{i-1}$.
    \State
    run \Cref{alg:AD-oneint} over interval $I_i$
    with parameter ${\gamma}_i$, thereby getting
    value $\alg_i({\gamma}_i)$.
    \EndFor
\end{algorithmic}
\end{algorithm}

\begin{proof}[Proof of \Cref{thm:combo}]
  Since there are $\parts$ intervals, the feasibility follows directly from the stopping rule for the algorithm, and the fact that item sizes are at most $1$. For the second claim, the
  value of the algorithm is $\sum_i \alg_i({\gamma}_i)$, which is at
  least $\sum_i \bar{\alg}_i({\gamma}_i)$, so it suffices to lower
  bound the latter. Let $\gamma^*$ be a value in $\Gamma$
  that is in $\left[\frac{1}{32}\frac{\OPT}{B}, \frac{1}{16}\frac{\OPT}{B}\right]$. 
  Due to the truncation $\bar{\alg}_i(\gamma^*)$ is $O(\frac{\OPT}{\parts})$, so
  the multiscale regret guarantee of \Cref{thm:multiscale} 
  with $\e = \Theta(\sqrt{\log |\Gamma| / K})$ gives that
  in every scenario, 
  \begin{align}
    \sum_i \bar{\alg}_i({\gamma}_i) ~\ge~ \sum_i \bar{\alg}_i(\gamma^*) - O(1) \cdot \sqrt{\parts \log |\Gamma|} \cdot  \tfrac{\OPT}{\parts}    ~=~\sum_i \bar{\alg}_i(\gamma^*) - \OPT \cdot  O(\tfrac{\sqrt{\log \log n}}{\sqrt{\parts}}) \enspace . \label{eq:packingWrap1}
  \end{align}
  Using the guarantee of \Cref{lemma:packInterval} with $\delta = \frac{\delta'}{\parts}$, we have that with probability at least $1 - \frac{\delta'}{\parts}$, (using
  $\alg(I,\gamma^*)$ for the value obtained by the algorithm of the
  previous section, and $I_i$ for the $i$-th interval)
  \begin{align*}
    \bar{\alg}_i(\gamma^*) 
      ~\ge~ \min\big\{ \alg(I_i, \gamma^*)~,~\tfrac{\gamma^* B}{\parts}\big\} ~\ge~ \tfrac{1}{\parts}\cdot \min\big\{\tfrac{\gamma^* B}{2}~,~\tfrac{\OPT}{4} - 4\gamma^* B\big\} - 2\gamma^* \log d \enspace ,
        \end{align*}
      where the last inequality uses \Cref{lem:AD} and \Cref{lemma:LOPT}. Since  $\gamma^*\in \left[\frac{1}{32}\frac{\OPT}{B}, \frac{1}{16}\frac{\OPT}{B}\right]$, we  get
      \[    \bar{\alg}_i(\gamma^*)     
      ~\ge~ \left(\tfrac{1}{\parts}\cdot \min\big\{\tfrac{1}{64}, \tfrac{1}{4} - \tfrac{1}{8} \big\} - \tfrac{\log d}{8B}\right)\OPT
      ~\ge~\Omega(1)\cdot \tfrac{\OPT}{\parts}\enspace .
      \]
 Taking sum over all $\parts$
  intervals we get with probability
  at least $1-\delta'$ that
  $\sum_i \bar{\alg}_i(\gamma^*) \ge \Omega(\OPT).$
  Using this on \eqref{eq:packingWrap1} and
  using $\parts \ge \Omega(\log \log n)$ concludes the
  proof of \Cref{thm:combo}.
\end{proof} 

In
\iffull
\S\ref{sec:one-plus-eps}
\else
the full version
\fi we give an algorithm that gets an approximation
approaching $1$ when the number of red items gets small, thereby proving
\Cref{thm:inf-main2}.

%%%%%%%%%%%%%%%%%%%%%%%%%%%%%%%%%%%%%%%%%%%%
%
%

%######################################################
%######################################################
%######################################################
%######################################################

%%% Local Variables:
%%% mode: latex
%%% TeX-master: "main"
%%% End:

 %!TEX root = main.tex

\newcommand{\good}{nice}
\newcommand{\GoodInt}{N}
\newcommand{\Good}{\widehat{\mathcal{N}}}
\newcommand{\GOOD}{\mathcal{N}}
\newcommand{\High}{\widehat{\mathcal{H}}}
\newcommand{\HIGH}{\mathcal{H}}
\newcommand{\SearchII}{\textsc{SearchII}\xspace}
\newcommand{\Sample}{\textsc{Sample}\xspace}

\newcommand{\target}{C^*}
\newcommand{\gSecMax}{g_2}
\newcommand{\arrival}{\textsf{time}}
\newcommand{\lowBd}{L}
\newcommand{\upBd}{U}

\section{Byzantine Secretary for Single-Item Probability Maximization}
\label{sec:single-item}

We now consider the (single-item) Byzantine Secretary Problem, where the online
model is exactly the same as in the previous section but now we can
only pick one item: the goal is to maximize the \emph{probability} of selecting an item of value at least $\OPT(\Gitems\setminus\gmax)$,
i.e., the value of the second-most valuable green item. We show the
following:

	\SingleProb*
		
This improves on the algorithm of \cite{BGSZ-ITCS20}, which 
(a)~is nonconstructive, relying on the use of Yao's minimax principle, and (b) succeeds with a smaller probability of
$\Theta(1/\log^2 n)$.  

\subsection{Algorithm}
\label{sec:prob-max}

Our algorithm is based on two procedures, \textsc{Struct} and
\textsc{Search}. Both pick $K = O(\log n)$ items. We will show that
with constant probability at
least one of \Struct and \Search succeeds in picking an item with
value at least $\OPT(\Gitems\setminus\gmax)$. 
By picking all items chosen by either procedure, we get
an algorithm that picks $2K = O(\log n)$ items and succeeds with
probability $\Omega(1)$. Now picking uniformly at random one of these $2K$ items,
we get an algorithm that picks a single item of value at least $\OPT(\Gitems\setminus\gmax)$
with probability $\Omega(1/\log n)$. 
In \S\ref{sec:our-techniques} we outline the intuition
behind these algorithms.

For both our procedures, we partition the time interval $(\frac14, \frac34]$ into
$K=\Theta(\log n)$ intervals $I_1,\dots, I_K$, each of equal width
$\frac{1}{2K}$. Let $I_0 := [0,\frac14]$. For all $i$, let $\rho_i$ be
the maximum value of a \emph{red item} in interval $I_i$ and let
$\widehat{\mu}_i$ be the maximum value of any item in interval $I_i$
(note that $\rho_i$ is deterministic but unknown to the algorithm and $\widehat{\mu}_i$ depends on
when the green items arrive). For an item $e$, let $\arrival(e)$ and $\val(e)$
denote the arrival time and value of $e$ respectively. Recall that $\gSecMax$ is the 2nd-most valuable green item. To simplify the notation, let $C^* := \OPT(\Gitems\setminus\gmax) = \val(\gSecMax)$.

The procedure \Struct runs $K$ independent subroutines similar to the classic single-item secretary algorithm: the $i^{th}$ one
sets a threshold  $\widehat{\mu}_i$ and picks the first item after
 $I_i$ that reaches this threshold.

\begin{algorithm}
  \caption{Procedure \Struct}
  \label{procedure:struct}
\begin{algorithmic}[1]  
  \For{value $i = 1, 2, \ldots, K$ in parallel}
    \State$\widehat{\mu}_{i} := $ maximum value of any
    item seen in interval $I_{i}$
    \State pick the first item (if any) arriving after interval $I_i$ with value at least $\widehat{\mu}_i$. 
  \EndFor
\end{algorithmic}
\end{algorithm}

The second procedure \Search maintains a \emph{set of candidate thresholds}
containing a subset of all the values $\widehat{\Theta}_1$ seen in
interval $I_0 = [0,\tfrac14]$. In each interval, the current candidate
set is obtained by focusing on these values lying between the maximum
value of an item seen in the previous interval, and the largest
value item picked by this procedure so far. The threshold for the current interval is set to the median of
these values. (We define $\median(\emptyset) = -\infty$.)

\begin{algorithm}
  \caption{Procedure \textsc{Search}}
  \label{alg:search}
\begin{algorithmic}[1]
  \For{interval $i = 1, 2, \ldots, K$}
  	%{\setbox0\hbox{i} \spaceskip=\wd0 plus .15em minus .15em
    \State $\widehat\lowBd_i \gets $ max value of an item already picked by this procedure, $-\infty$ if no item has been picked
    \State $\widehat{\upBd}_i \gets \widehat{\mu}_{i-1}$ := maximum value of any item seen in interval $I_{i-1}$ 
    \State $\widehat{\Theta}_i \gets \{\val(e) \mid \arrival(e)\in [0, \frac14] \text{
      and } \widehat \lowBd_i < \val(e) \le \widehat \upBd_i\}$ 
    \State pick the first item (if any) in the interval $I_i$ having value at least
    $\median(\widehat{\Theta}_i)$ 
  \EndFor
\end{algorithmic}
\end{algorithm}

\subsection{Analysis}
It is clear that both procedures pick at most $K = O(\log n)$ items.
We show that with probability $\Omega(1)$, at least one of the two
procedures picks an item with value at least $\target$.  The intuition is
this: if the maximum value items in each interval are monotone
decreasing and greater than $\target$, then in each interval procedure \Search
halves the candidate set size. If this set contains the value
$\target$ (which happens, e.g., when $g_2$ arrives in the interval $I_0$),
then we must eventually pick an item of large value. Of course, the
maximum value items may not be monotone, and they may have values
below $\target$, but then we show that \Struct gets large value.

We now give the proof details.  Recall that $\rho_i$ is the maximum
value of any red item in interval $I_i$; define $\rho_i \gets -\infty$
if no such items exist. We say an interval $I_i$ (including $I_0$) is \emph{high} 
if $\rho_i \ge \target$, and \emph{low} otherwise. Recall that the property of being high
just depends on the location of the red items, which we  assume are
deterministically placed.

\begin{lemma}\label{lem:alg1-good}
For any value of $K$, \Struct succeeds (i.e. picks an item of value at least $\target$) with probability $\Omega(1)$ if either of the following properties fail:
\begin{OneLiners}
\item[(i)] If $I_i$ and $I_j$ are high intervals with $i<j$, then
  $\rho_i > \rho_j$.
  
\item[(ii)] There are fewer than $\frac{K}{4}$ low intervals.
\end{OneLiners}
\end{lemma}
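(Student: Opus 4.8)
The plan is to prove the two implications ``(i) fails $\Rightarrow$ \Struct succeeds with probability $\Omega(1)$'' and ``(ii) fails $\Rightarrow$ \Struct succeeds with probability $\Omega(1)$'' separately, both resting on one elementary observation I would record first. For any \emph{high} interval $I_i$ on which \Struct runs (so $i\in\{1,\dots,K\}$), its threshold $\widehat\mu_i$ is the largest item value inside $I_i$, so $\widehat\mu_i\ge\rho_i\ge\target$; consequently, \emph{if subroutine $i$ ever picks an item, then \Struct succeeds}. Moreover, since $g_{\max}$ is the only green item whose value exceeds $\val(\gSecMax)=\target$, on the event that $g_{\max}$ does not land in $I_i$ every green item of $I_i$ has value $\le\target\le\rho_i$, and hence $\widehat\mu_i=\rho_i$ exactly. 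I would use these two facts throughout.

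For the case that (i) fails, I would fix high intervals $I_i,I_j$ with $1\le i<j\le K$ and $\rho_i\le\rho_j$, and condition on $\{g_{\max}\notin I_i\}$. On this event $\widehat\mu_i=\rho_i$, while the red item attaining $\rho_j$ inside $I_j$ arrives after $I_i$ with value $\rho_j\ge\rho_i=\widehat\mu_i$; hence subroutine $i$ does pick an item, and the observation gives that \Struct succeeds. Since $\Pr[g_{\max}\in I_i]=|I_i|=\tfrac1{2K}$, this already yields success probability $\ge 1-\tfrac1{2K}=\Omega(1)$.

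For the case that (ii) fails, there are at least $\tfrac K4$ low intervals among $I_1,\dots,I_K$, of total width at least $\tfrac{K/4}{2K}=\tfrac18$. Here I would use that the arrival times of $\gSecMax$ and $g_{\max}$ are independent and uniform on $[0,1]$. For a low interval $I_\ell$, consider the event that $\gSecMax$ lands in $I_\ell$ and $g_{\max}$ arrives after $I_\ell$. Then $g_{\max}\notin I_\ell$, so the largest green value in $I_\ell$ equals $\val(\gSecMax)=\target$, while $\rho_\ell<\target$ since $I_\ell$ is low, so $\widehat\mu_\ell=\target$; and as $g_{\max}$ arrives later with value $\ge\target$, subroutine $\ell$ picks the first post-$I_\ell$ item of value $\ge\target$, hence an item of value $\ge\target$. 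These events are disjoint over low intervals $\ell$, and each has probability $|I_\ell|\,(1-\sup I_\ell)\ge\tfrac14|I_\ell|$ since $\sup I_\ell\le\tfrac34$; summing over the low intervals gives success probability $\ge\tfrac14\cdot\tfrac18=\tfrac1{32}=\Omega(1)$.

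I expect the only real subtlety to lie in the bookkeeping about which green item realizes an interval's maximum value: both arguments are driven by ``$g_{\max}$ avoids the pivotal interval'', which is exactly why the small losses $\tfrac1{2K}$ (case (i)) and the $\sup I_\ell\le\tfrac34$ slack (case (ii)) show up, and why we only get a constant rather than $1-o(1)$ in the second case. I would also double-check the corner cases --- ties such as $\val(g_{\max})=\val(\gSecMax)$, or very small $K$ --- but since every inequality used is ``$\ge\target$'' rather than strict, and the bounds above are uniform in $K$, nothing changes. (Throughout, the relevant intervals are $I_1,\dots,I_K$, since \Struct has no subroutine acting on $I_0$.)
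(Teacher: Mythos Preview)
Your proof is correct and follows essentially the same approach as the paper's: in case (i) you condition on $g_{\max}\notin I_i$ so that $\widehat\mu_i=\rho_i$ and the red item in $I_j$ triggers subroutine $i$; in case (ii) you place $g_2$ in a low interval and $g_{\max}$ after it. Your version is slightly more careful in places (recording the preliminary observation explicitly, and using $1-\sup I_\ell$ rather than the cruder $\Pr[g_{\max}\in(\tfrac34,1]]=\tfrac14$), but these lead to the same $\tfrac{1}{32}$ bound and the overall structure is identical to the paper's argument.
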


\begin{proof}
  Suppose property (i) fails, and let $I_i$ and $I_j$ be high intervals
  with $i<j$ and $\rho_i \le \rho_j$. The highness of $I_i$ implies
  $\rho_i \ge \target$. Now if $\gmax$ does not fall in $I_i$ (which
  happens with probability $1 - \nicefrac{1}{2K}$), then the maximum
  value $\widehat{\mu}_i$ in this interval is $\rho_i$, and therefore
  the run of \Struct corresponding to interval $i$ sets a threshold of
  $\rho_i \geq \target$.  This run would definitely select the item
  corresponding to $\rho_j$ in $I_j$, if it has not picked an item
  earlier; this selected item has value at least $\rho_i \geq \target$.

  Else suppose property (ii) fails, and there are at least $\nf{K}4$ low
  intervals. If $g_2$ arrives in a low interval $I_i$, then we set the
  threshold to be $\val(g_2) = \target$; now if $\gmax$ arrives in $(\frac34, 1]$, the
  $i^{th}$ run of \Struct is guaranteed to pick an item. Since there
  are at least $\frac{K}{4}$ low intervals, this event occurs with
  probability at least
  $\frac{K}{4}\cdot \frac{1}{2K} \cdot \frac14 \ge \frac{1}{32}$.
\end{proof}

The remainder of the proof shows that if the two properties of
\Cref{lem:alg1-good} are indeed satisfied, then \Search succeeds with
constant probability. Define
$\upBd_i := \min\{\rho_j \mid j < i \text{ and } I_j$ is high$\}$, and
\[ \textstyle \Theta_i := \lbrace \val(e) \mid \arrival(e)\in
  \left[0,\frac14\right] \text{ and } \widehat{\lowBd}_i < \val(e) \le
  \upBd_i\rbrace. \] The only difference from $\widehat{\Theta}_i$ is that
the upper bound is $\upBd_i$ instead of $\widehat{\upBd}_i$.  (The intuition
is that if property~(i) of \Cref{lem:alg1-good} holds, then loosely
speaking $\widehat{\Theta}_i$ and $\Theta_i$ should behave similarly,
and we can argue about the latter instead of the former.) Observe that $\widehat{\lowBd}_i$ can only increase and $\upBd_i$ can only decrease, so $\Theta_i\supseteq \Theta_{i+1}$.

To make this precise, define an interval $I_i$ (for $i\ge 1$) to be \emph{\good} if
the intervals $I_i$ and $I_{i-1}$ are both high, and moreover the item
$\gmax$ does not arrive in interval $I_{i-1}$. 
This property of being {\good} does depend on the location of the top green item 
$\gmax$, but otherwise is independent of the random locations of other green 
items. 

\begin{lemma}\label{fct:halving}
  If property~(i) from \Cref{lem:alg1-good} holds and the interval $I_i$ is {\good}, then $|\Theta_{i+1}| \le \frac12 |\Theta_i|$.
\end{lemma}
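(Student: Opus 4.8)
The plan is to reduce everything to the \emph{idealized} candidate set $\Theta_i$ by first showing that, under the two hypotheses, the set $\widehat{\Theta}_i$ that \Search actually uses in interval $I_i$ equals $\Theta_i$, and then to run the standard binary-search halving argument on $\Theta_i$ via a two-case split on what \Search does in $I_i$.

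\textbf{Step 1 (the actual and idealized sets coincide).} First I would show $\widehat{\upBd}_i = \upBd_i = \rho_{i-1}$. Since $I_i$ is \good, interval $I_{i-1}$ is high and $\gmax$ does not arrive in $I_{i-1}$; hence every item of $I_{i-1}$ is either red (value $\le \rho_{i-1}$) or green and distinct from $\gmax$ (value $\le \val(\gSecMax) = C^* \le \rho_{i-1}$, using that $I_{i-1}$ is high), so $\widehat{\upBd}_i = \widehat{\mu}_{i-1} = \rho_{i-1}$. For the idealized bound, $j = i-1$ lies in $\{j < i : I_j \text{ high}\}$, and property~(i) of \Cref{lem:alg1-good} forces $\rho_j > \rho_{i-1}$ for every other high $j < i-1$, so the minimum defining $\upBd_i$ is attained at $j = i-1$ and $\upBd_i = \rho_{i-1}$ as well. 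As $\widehat{\Theta}_i$ and $\Theta_i$ share the same lower cutoff $\widehat{\lowBd}_i$, this gives $\widehat{\Theta}_i = \Theta_i$; in particular the threshold $\median(\widehat{\Theta}_i)$ that \Search plays in $I_i$ equals $m := \median(\Theta_i)$.

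\textbf{Step 2 (halving).} Property~(i) applied to the high intervals $I_{i-1}, I_i$ also gives $\rho_i < \rho_{i-1} = \upBd_i$, hence $\upBd_{i+1} = \min(\upBd_i, \rho_i) = \rho_i$; recall too that $\widehat{\lowBd}_j$ is nondecreasing in $j$. Now split on \Search's move in $I_i$. If it picks an item, that item has value $\ge m$, so $\widehat{\lowBd}_{i+1} \ge m$, and therefore $\Theta_{i+1} \subseteq \{v \in \Theta_i : v > m\}$, which has size $\le \tfrac12 |\Theta_i|$ by choice of $m$ as the median. If it picks nothing, then every item of $I_i$ — in particular the one achieving $\widehat{\mu}_i \ge \rho_i$ — has value $< m$, so $\upBd_{i+1} = \rho_i \le \widehat{\mu}_i < m$, whence $\Theta_{i+1} \subseteq \{v \in \Theta_i : v < m\}$, again of size $\le \tfrac12 |\Theta_i|$. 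Either way $|\Theta_{i+1}| \le \tfrac12 |\Theta_i|$. (The degenerate case $\Theta_i = \emptyset$, where $m = -\infty$, falls into the first branch and gives $\Theta_{i+1} \subseteq \Theta_i = \emptyset$.)

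The main obstacle — really the only subtle point — is Step 1: one must use \emph{both} parts of the hypothesis — that $I_i$ is \good (so $\gmax \notin I_{i-1}$, which forces $\widehat{\mu}_{i-1} = \rho_{i-1}$) and property~(i) (so $\rho_{i-1}$ is the smallest $\rho_j$ among high $j < i$) — to conclude that the threshold \Search truly plays in $I_i$ is the median of the clean set $\Theta_i$, rather than of some larger or smaller set. Once that is in hand the two-case halving is routine, modulo minor care about the median convention (the number of elements of a finite set strictly above, resp.\ strictly below, its median is at most half the cardinality) and the empty-set edge case noted above.
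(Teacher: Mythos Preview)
Your proposal is correct and follows essentially the same two-step argument as the paper: first use \good ness of $I_i$ together with property~(i) to show $\widehat{\Theta}_i = \Theta_i$ (via $\widehat{\mu}_{i-1} = \rho_{i-1} = U_i$), then do the median-based case split on whether \Search picks an item in $I_i$. Your treatment is in fact slightly more explicit than the paper's in two places---you spell out why $\widehat{\mu}_{i-1} = \rho_{i-1}$ requires both $\gmax \notin I_{i-1}$ and the highness of $I_{i-1}$, and you handle the $\Theta_i = \emptyset$ edge case---but the structure and key ideas are identical.
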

\begin{proof}
  Property~(i) of \Cref{lem:alg1-good} means the $\rho$-values of the
  high intervals are in decreasing order, and the minimum in the
  definition of $\upBd_i$
  is achieved at the
  last high interval before $i$.  Since interval $I_{i-1}$ is high, we
  have $\upBd_i = \rho_{i-1}$. Moreover, $\gmax$ does not arrive in interval
  $I_{i-1}$, and therefore $\rho_{i-1} = \widehat{\mu}_{i-1}$. This in
  turn means that $\upBd_i = \widehat{\mu}_{i-1} =  \widehat{\upBd}_i$, and 
  $\widehat{\Theta}_i = \Theta_i$. 

  If $m_i := \median(\Theta_i) = \median(\widehat{\Theta}_i)$, then
  define $\Theta_i^+ := \{v\in \Theta_i \mid v > m_i\}$ and
  $\Theta_i^- := \{v\in \Theta_i \mid v < m_i\}$. Each has
  size at most $\frac12 |\Theta_i|$. If the \Search procedure chose an
  item in $I_i$, then $\widehat{\lowBd}_{i+1} \geq m_i$ and therefore
  $\Theta_{i+1} \sse \Theta_i^+$. Otherwise \Search did not chose an
  item in $I_i$, so all its items must have been smaller than $m_i$;
  in particular, $\rho_i < m_i$. Since $I_i$ is high, $u_{i+1} \le \rho_i < m_i$
  and $\Theta_{i+1}\sse \Theta_i^-$.
\end{proof}

\begin{lemma} \label{lemma:singleProbMaxMain}
  Let $K = 2\log_2 n+4$. Then with probability $\Omega(1)$, at least
  one of \Struct and \Search picks an item of value at least $\target$.
\end{lemma}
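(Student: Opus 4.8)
The plan is to do a case analysis based on whether the two structural properties of \Cref{lem:alg1-good} hold. If either fails, \Cref{lem:alg1-good} immediately gives that \Struct succeeds with probability $\Omega(1)$, and we are done. So the heart of the argument is the remaining case: both property~(i) and property~(ii) hold, and we must show \Search succeeds with constant probability.

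First I would set up the good event for \Search. The key case is when $\gSecMax = g_2$ arrives in $I_0 = [0,\frac14]$, which happens with probability $\frac14$ and is independent of the red placements; condition on this. Then $C^* = \val(g_2)$ is one of the values in $\widehat\Theta_1$ initially (it arrived in $[0,\frac14]$), and since $\gmax$ is the unique green item that can exceed it and the high intervals have $\rho_j \ge C^*$, one checks $C^* \le \upBd_i$ as long as $\Theta_i$ is still the ``tracked'' set — i.e., $C^*$ can only be excluded from $\Theta_i$ from below, via $\widehat\lowBd_i$ rising above it, and that happens exactly when \Search has already picked an item of value $> C^*$... wait, of value $\ge m_{i-1} > $ something; I need to be careful here, but the point is that if $C^*$ ever drops out of $\Theta_i$ through the lower bound, then \Search has picked an item of value at least $C^*$ and we are done. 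So assume $C^* \in \Theta_i$ for all $i \le K$.

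Next, count the good intervals. By property~(ii) there are fewer than $K/4$ low intervals among $I_1,\dots,I_K$ (and we may assume $I_0$ is high, else again something — actually $I_0$ high just helps); so at least $3K/4$ of the intervals $I_1,\dots,I_K$ are high. An interval $I_i$ is \good{} if $I_i$ and $I_{i-1}$ are both high and $\gmax \notin I_{i-1}$; the number of $i$ with $I_i$ or $I_{i-1}$ low is at most $2 \cdot (K/4) = K/2$, and $\gmax$ kills at most one more, so at least $K/2 - 1$ intervals are \good. By \Cref{fct:halving}, each \good{} interval at least halves $|\Theta_i|$. Since $|\Theta_1| \le n$ (at most $n$ items total), after $\log_2 n$ \good{} intervals we would have $|\Theta_i| < 1$, i.e. $\Theta_i = \emptyset$ — but $\Theta_i$ always contains $C^*$, a contradiction. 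With $K = 2\log_2 n + 4$ we have $K/2 - 1 = \log_2 n + 1 > \log_2 n$ \good{} intervals, giving the contradiction. The only way to avoid it is that at some \good{} interval, \Search picked an item causing $\widehat\lowBd$ to rise — and since $\widehat\lowBd$ only takes values equal to values of items \Search picked, and $C^*$ stayed in $\Theta_i$ means every picked item had value $\le \upBd_i$, hmm — the clean statement is: $\Theta_i$ never empties while $C^* \in \Theta_i$, so the halving must be interrupted, which (reading the proof of \Cref{fct:halving}) means \Search did pick an item in some \good{} interval $I_i$, and that item has value $\ge \median(\widehat\Theta_i) \ge \min \widehat\Theta_i \ge$ ... this is where I must confirm the picked item has value $\ge C^*$. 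Since $C^* \in \widehat\Theta_i = \Theta_i$ and $C^* \le m_i$ would put $C^* \in \Theta_i^-$, shrinking into $\Theta_i^-$, while picking an item forces $\Theta_{i+1} \subseteq \Theta_i^+$ and $\widehat\lowBd_{i+1} \ge m_i$; so if \Search picks in $I_i$ then $\widehat\lowBd_{i+1} \ge m_i \ge C^*$ once $C^*$ is at or below the median — and if $C^*$ is above the median in every \good{} interval, then $C^* \in \Theta_i^+ \subseteq \Theta_i$ survives and we just keep halving until empty, contradiction. Either way, \Search must pick an item of value $\ge C^*$.

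The main obstacle I anticipate is making the bookkeeping in the last step fully rigorous: cleanly tracking the invariant ``$C^* \in \Theta_i$ or \Search has already succeeded'' across both \good{} and non-\good{} intervals, and verifying that whenever the halving is interrupted the interrupting pick genuinely has value $\ge C^*$ rather than merely $\ge \median(\widehat\Theta_i)$, which requires knowing $C^* \le \median(\widehat\Theta_i)$ at that step. The resolution is the dichotomy above — $C^*$ is either $\le$ the median (so a pick there beats $C^*$) or $>$ the median (so it survives into $\Theta^+$) — combined with the pigeonhole on the $\ge \log_2 n + 1$ \good{} intervals forcing an empty $\Theta_i$ in the surviving case. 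Collecting the probabilities: conditioning on $g_2 \in I_0$ costs a factor $\frac14$, and this event is independent of the red placements and of the property-(i)/(ii) dichotomy, so in total at least one of \Struct, \Search succeeds with probability $\Omega(1)$, as claimed.
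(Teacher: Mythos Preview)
Your approach matches the paper's: case on the two properties of \Cref{lem:alg1-good}; when both hold, condition on $g_2 \in I_0$, count \good{} intervals, apply \Cref{fct:halving} repeatedly to drive $\Theta$ to empty, and conclude. All the right pieces are there.

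The obstacle you flag at the end is illusory, and stems from a misreading of \Cref{fct:halving}. That lemma guarantees $|\Theta_{i+1}| \le \tfrac12|\Theta_i|$ for \emph{every} \good{} interval $I_i$, whether or not \Search picks an item there---picking and not picking are precisely the two cases in its proof, and halving happens in both. So nothing ``interrupts'' the halving; with more than $\log_2 n$ \good{} intervals you reach some index $i$ with $\Theta_i = \emptyset$, period. The finish is then one line, and it is the one you wrote first and then second-guessed: $U_i \ge C^*$ holds for all $i$ (it is a minimum of $\rho_j$'s over high intervals, each of which is $\ge C^*$ by definition of high), so once $\Theta_i = \emptyset$ and $\val(g_2)=C^*$ is among the values seen in $I_0$, the only way $C^* \notin \Theta_i$ is that $\widehat{L}_i \ge C^*$---meaning \Search has already picked an item of value at least $C^*$. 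Your dichotomy on whether $C^*$ sits above or below the median is correct but unnecessary; the direct argument via $U_i \ge C^*$ is what the paper does and avoids all the bookkeeping you were worried about.
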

\begin{proof}
  By Lemma~\ref{lem:alg1-good}, if either of its properties~(i)
  or~(ii) fails then \Struct succeeds with constant probability. So
  suppose both properties hold. Now condition on the location of
  item $\gmax$; this decides on the {\good}ness of the
  intervals. Property (ii) being satisfied means there are at least
  $K- \frac{K}{4}-\frac{K}{4} - 1 = K/2 - 1 > \log_2 n$ {\good} intervals. 
  (Indeed, we may discard at most $K/4$ intervals because
  $I_{i-1}$ is bad, $K/4$ others because $I_{i}$ is bad, and one more
  because $\gmax$ falls in $I_{i-1}$.) 

  Now let us condition on the event that $g_2$ arrives in the time
  interval $[0,\frac14]$, which happens with probability $\nf14$.
  Since $|\Theta_1| \le n$,
  applying \Cref{fct:halving} (which relies on property~(i)) to each
  of the {\good} intervals implies that after $\log_2 n$ {\good}
  intervals, we get to some index $i$ for which $\Theta_i$ is empty,
  and interval $I_i$ is {\good}. The upper bound $\upBd_i$ for
  $\Theta_i$ is at least $\target$, by definition.
  Since
  item $g_2$ arrived in $[0,\frac14]$ but yet
  $\target = \val(g_2) \notin \Theta_i$, the only reason would be that
  $\widehat{\lowBd}_i \geq \target$. This means that \Search must have already
  chosen an item of value at least $\target$.
\end{proof}

\begin{proof}[Proof of \Cref{thm:main3}]
  Since \Struct and \Search pick at most $K$ items each, the final
  algorithm is to run a random one of these two algorithms, and to
  randomly output one of the $K = O(\log n)$ items picked by that
  algorithm. This gives an item of value at least $\target$ with
  probability $\Omega(\nicefrac{1}{\log n})$.
\end{proof}

%%% Local Variables:
%%% mode: latex
%%% TeX-master: "main"
%%% End:
% !TEX root = main.tex

\section{Prophet-with-Augmentations for Packing Integer Programs}
\label{sec:prophets}

To show the power of our robust threshold selection idea from
\S\ref{sec:byzant-PIP}, we use it to give
robust algorithms for PIPs in the \emph{prophet model} as well. Recall
that in the classic prophets model, the inputs are drawn from
independent (but possibly non-identical) distributions. Here we
consider the \emph{Prophets-with-Augmentations} model~\cite{ISW-EC20}, in which an adversary is allowed to perturb the
values by adding non-negative values. We now show how to make
PIP algorithms robust to such
perturbations. 

\subsection{Model and Notation}

We are given a \emph{base prophet} instance
$((a_1, \cD_1), \ldots, \allowbreak (a_n, \cD_n), B)$  with budget
$B\cdot \ones$ and $n$ items whose values will be perturbed by an
adversary.  The $t^{th}$ item has a known deterministic size vector
$a_t \in [0,1]^d$ and an initially unknown value $V_t \geq 0$ that is
drawn independently from a known distribution $\cD_t$.  Items values are revealed one-by-one, and before the
$t$-th item's value is revealed, an adversary adds an unknown
perturbation $R_t \geq 0$ to $V_t$. This perturbation may depend on
the history up to (and including) time $t$, as well as the algorithm's
decisions up to time $t-1$. The player then sees $C_t := V_t + R_t$
and has to immediately pick or reject the item. The goal is to pick a
set of items that pack into the known budget $B \cdot \ones$, in order to
maximize the sum of seen values $C_t$ of the picked items.

The algorithm competes against the base instance, i.e., the expected
offline optimum when all perturbations $R_t$ are zero. Let
$\OPT_{base}$ be the value of this expected offline optimum. The main
result of this section is as follows.

\ProphetsAug*

\IGNORE{
{\color{red}
  \begin{question}
    The theorem might still work if we also allow the adversary to
    subtract a value $S_t$ from the item, as long as $S_t \ge
    \frac{1}{2} V_t$, say? 
  \end{question}
  }
  {\color{blue}Not sure how much we want to discuss this.}
}
	
To make our proofs simpler, we assume w.l.o.g.\ that there are no
``large values''. Here we sketch the proof; the full proof is deferred to \S\ref{sec:redSmoothProphet}.
\begin{assumption}
  \label{assum:smoothProphet}
  Each distribution $\cD_t$ is supported on values that are at most $\frac{\OPT_{base}}{20}$. 
\end{assumption}

\begin{proof}[Proof Sketch] 
	Consider running simultaneously an algorithm that obtains a constant approximation under Assumption \ref{assum:smoothProphet} and also an algorithm that picks the first item that takes a value above $\frac{\OPT_{base}}{20}$. Intuitively, in the scenarios where all items come up with value at most $\frac{\OPT_{base}}{20}$ the approximation guarantee of the first algorithm kicks in, and in the remaining scenarios the second algorithm already guarantees value at least $\frac{\OPT_{base}}{20}$; overall, we should get a constant approximation. Running both algorithm may lead to budget occupation up to $(B + 1) \cdot \ones$, but rescaling the solution by $\frac{B}{B+1} > 1-o(1)$ restores feasible while maintaining the same approximation guarantee. 
\end{proof}

\subsection{Algorithm}

The idea of the algorithm is similar to that used for Byzantine PIPs
in \S\ref{sec:byzant-PIP}: to make decisions $x_t \in \{0,1\}$ based
on the Lagrangified value $C_t x_t - \gamma \ip{\lambda_t}{a_t x_t}$,
for a scaling factor $\gamma$. As before, the ``right'' value for
$\gamma$ is $\approx \frac{\OPT_{base}}{B}$. Previously we learned
$\gamma$ over multiple intervals using online learning (since $\OPT$
was not known), we can now directly compute it using the known value
distributions $\cD_t$'s. However, new challenges arise. Firstly, we
will again need to bound a Lagrangified value of the offline optimum
with \emph{good probability} (as in \Cref{lemma:LOPT}), but since the
optimal solution's decisions to pick items depend on the outcomes of
the values of \emph{all} other items, we are not guaranteed to have
any concentration. (Previously $\OPT$ was such that picking the $t$-th
item only depended on the identity of that item.) To fix this, we
compare not against the optimal solution, but against a surrogate
$\psi_t(V_t) \in \{0,1\}$ that makes decisions about item $t$ based
only on its base value $V_t$. Another challenge is that the expected
occupation of such a solution is different in each time step $t$
(which was not the case in random-order, see \Cref{eq:midLlemma}):
this makes it harder to bound the quantity
$\ip{\lambda_t}{a_t\, \psi(V_t)}$. To fix this, we define the
Lagrangian in terms of the modified penalty function
$f_t(\lambda) := \ip{\lambda}{a_t x_t - a_t\, \E \psi_t(V_t)}$.  On a
technical note, since this penalty can be negative, we consider
$\lambda$ values in the ``full-dimensional simplex'' (including the 0
vector), namely
$\blacktriangle^d := \{ \lambda \in [0,1]^d : \sum_i \lambda_i \le
1\}$.

	To make this precise, we start with the existence of the good solution $\psi_1(V_1),\ldots,\psi_n(V_n)$ for the base prophet instance. Similar solutions algorithms have been previously designed for related
problems (e.g., see~\cite{Alaei-SICOMP14,AHL-EC12}), and we defer the
proof to Appendix~\ref{sec:non-robust-prophet}.
	
\begin{lemma} \label{lem:prophBoundOPT} Given a base prophet instance
  $((a_1, \cD_1), \ldots, \allowbreak (a_n, \cD_n), B)$, there are 
  functions $\psi_1, \ldots, \psi_n$, where each $\psi_t$ maps the
  value $V_t$ to a decision in $\{0,1\}$ 
  such that:
  \begin{enumerate}
  \item Total expected value~ $\E [\sum_t V_t\, \psi_t(V_t)] \ge \frac{\OPT_{base}}{4}$. \label{eq:relaxBndOPT}
  \item Total expected utilization~ $\E [\sum_t a_t \,\psi_t(V_t)] \le  \frac{B}{4}\cdot \ones$. \label{eq:budgetOPTbase}
  \end{enumerate} 
\end{lemma}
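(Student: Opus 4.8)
The plan is to prove the existence of the $\psi_t$'s via an \emph{ex-ante relaxation} of the base prophet instance. Let $z^\star = z^\star(V_1,\dots,V_n)\in\{0,1\}^n$ be an offline optimum of the (random) base instance --- a maximizer of $\sum_t V_t z_t$ over $z\in\{0,1\}^n$ with $\sum_t a_t z_t \le B\cdot\ones$, with a fixed tie-breaking rule so that $z^\star$ is a well-defined random vector --- and let $q_t := \Pr[z^\star_t = 1]$ be its marginal selection probability. Since $z^\star$ is feasible for \emph{every} realization of $(V_1,\dots,V_n)$, taking expectations gives $\sum_t a_t q_t = \E[\sum_t a_t z^\star_t]\le B\cdot\ones$, and by linearity $\sum_t \E[V_t\,\mathbf{1}\{z^\star_t=1\}] = \E[\sum_t V_t z^\star_t] = \OPT_{base}$.

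Next I would replace each marginal $q_t$ by a threshold rule that looks only at $V_t$: for each $t$ pick a threshold $\theta_t\ge 0$ (together with an independent tie-breaking coin if $\cD_t$ has an atom at $\theta_t$, equivalently assuming w.l.o.g.\ that $\cD_t$ is atomless) so that the event $E_t$ that ``$V_t$ clears the threshold'' has $\Pr[E_t] = q_t/4$, and set $\psi_t(V_t) := \mathbf{1}\{E_t\}$. The second claim of the lemma is then immediate: $\E[\sum_t a_t\psi_t(V_t)] = \sum_t a_t\Pr[E_t] = \tfrac14\sum_t a_t q_t \le \tfrac14 B\cdot\ones$.

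For the first claim I would use a two-step ``bathtub'' (rearrangement) argument per coordinate $t$. Step one: among all events $A$ with $\Pr[A]=q_t$, the quantity $\E[V_t\,\mathbf{1}\{A\}]$ is maximized by an upper-tail event $\{V_t\ge\theta_t'\}$ with $\Pr[V_t\ge\theta_t'] = q_t$; since $\{z^\star_t=1\}$ is such an event, $\E[V_t\,\mathbf{1}\{z^\star_t=1\}]\le\E[V_t\,\mathbf{1}\{V_t\ge\theta_t'\}]$. This holds even though $\{z^\star_t=1\}$ is positively correlated with $V_t$: writing $X := \mathbf{1}\{V_t\ge\theta_t'\}-\mathbf{1}\{A\}\in\{-1,0,1\}$ one checks $V_t X\ge\theta_t' X$ pointwise, hence $\E[V_t X]\ge\theta_t'(\Pr[V_t\ge\theta_t']-\Pr[A])=0$. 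Step two: $E_t$ captures the top $q_t/4$ of the mass of $V_t$, which lies inside $\{V_t\ge\theta_t'\}$, and $s\mapsto\E[V_t\mid V_t\ge s]$ is non-decreasing, so $\E[V_t\,\mathbf{1}\{E_t\}] = \tfrac{q_t}{4}\,\E[V_t\mid E_t] \ge \tfrac{q_t}{4}\,\E[V_t\mid V_t\ge\theta_t'] = \tfrac14\,\E[V_t\,\mathbf{1}\{V_t\ge\theta_t'\}]$. Chaining the two steps and summing over $t$ gives $\E[\sum_t V_t\psi_t(V_t)]\ge\tfrac14\sum_t\E[V_t\,\mathbf{1}\{z^\star_t=1\}] = \tfrac14\OPT_{base}$.

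The bookkeeping in the first paragraph (feasibility of $z^\star$, linearity of expectation) is routine. The step needing the most care is the rearrangement argument: one must apply the ``best event of given probability is an upper tail'' claim to $\{z^\star_t=1\}$, which is \emph{not} independent of $V_t$ (it is skewed toward large $V_t$), and one must handle atoms of $\cD_t$ cleanly so the threshold events can be tuned to exactly the prescribed probability --- this is where one allows $\psi_t$ an independent coin or first reduces to atomless distributions. Note the constant $4$ is generous: the argument loses only the factor introduced when choosing $\Pr[E_t]=q_t/4$, so taking $\Pr[E_t]=q_t$ would already give value $\ge\OPT_{base}$ and utilization $\le B\cdot\ones$; the weaker stated constants simply leave slack for the later analysis.
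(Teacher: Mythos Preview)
Your proof is correct and follows essentially the same approach as the paper: both take the marginal selection probabilities of the offline optimum, use a top-quantile threshold rule, and invoke the rearrangement (``bathtub'') inequality to compare the threshold rule's value to $\OPT_{base}$. Your version is slightly more direct in that you work with the specific feasible point $q_t/4$, whereas the paper writes down the concave ex-ante relaxation explicitly and takes its optimizer $x^*$, but the underlying argument is the same.
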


We now describe our robust algorithm for the prophet-with-augmentations
model. To simplify notation, define $x^*_t := \E[\psi_t(V_t)]$, that is, the probability that this solution picks the $t$-th item. 
Our algorithm requires these $x^*_t$'s, but since they only depend on the distributions of the $V_t$'s thay can be computed a priori. It also needs to know $\OPT_{base}$ to set the value of $\gamma$, which can also be computed a priori (and as the proof shows, a constant-factor approximation to this value suffices).

\begin{algorithm}
  \caption{Procedure \textsc{Prophet-with-Augmentations}}
  \label{alg:AD-prophet}
  \begin{algorithmic}[1]
  \For{steps $t = 1, 2, \cdots$}
    \State 
    Compute $\lambda_t \in \blacktriangle^d$ by using the
    low-regret algorithm of \Cref{lem:OLOregret} with $\e = \frac{1}{2}$ on the functions
    $f_1,\ldots,f_{t-1}$, where $f_t(\lambda) := \ip{\lambda}{a_t x_t - a_t x^*_t}$.
    \State
    Compute $x_t \in [0,1]$ maximizing $x_t \mapsto C_t x_t - \gamma
    \ip{\lambda_t}{a_t x_t - a_t x^*_t}$, where $\gamma =
    \frac{\OPT_{base}}{B}$.  
    \State
    \textbf{break} if the budget is exhausted, i.e., if a coordinate
    of $\sum_{s \leq t} a_s
    \algx_s$ exceeds $B$. Let $\tau$ denote this stopping time step, where $\tau=n$ if the budget is never exhausted.
  \EndFor
\end{algorithmic}
\end{algorithm}

%############################################################

\subsection{Analysis} 

	We now analyze the above algorithm, proving that it attains the guarantee stated in \Cref{thm:prophetWithAug} under Assumption \ref{assum:smoothProphet} (without loss of generality). First, the algorithm violates the budget by at most +1, but this is
easily fixed by rescaling or subsampling, so we ignore this issue
henceforth. To prove that it gets high value, fix a scenario of
$V_t$ and $R_t$. In this scenario, since our decision $x_t$ is a best
response, it is at least as good as $\psi_t(V_t)$. In other words, the
best-response ensures
\begin{align*}
  &\underbrace{\sum_{t \le \tau} C_t x_t}_{= \alg} - \gamma
    \underbrace{\sum_{t \le \tau} \ip{\lambda_t}{a_t(x_t -
    x^*_t)}}_{=: M_{lhs}} ~\ge~ \sum_{t \le \tau} C_t \; \psi_t(V_t) -
    \gamma \underbrace{\sum_{t \le \tau}
    \ip{\lambda_t}{a_t(\psi_t(V_t) - x^*_t)}}_{=: M_{rhs}}. 
\end{align*}
Rewriting, we get
\begin{align} \label{eq:algProphLower}
  \alg ~\ge ~ \sum_{t \le \tau} C_t \;\psi_t(V_t) + \gamma M_{lhs} - \gamma M_{rhs}.
\end{align}
To lower bound the value of $\alg$, we first lower bound $M_{lhs}$. From
the regret guarantee in
\Cref{lem:OLOregret}, we can compare against the action $\lambda=0$ 
to infer
\begin{align}
  M_{lhs} &~\ge~ - O(\log d). \label{eq:lhs1}
\end{align}
Similarly, comparing against the action $\lambda = e_i$ we get (using $|\ip{e_i}{a_t(x_t -x^*_t)}| \le \ip{e_i}{a_t x_t} + \ip{e_i}{a_t x^*_t}$) 
\begin{align*}
  M_{lhs} ~\ge~ \frac{1}{2} \ip{e_i}{\tsum_{t \le \tau} a_t x_t} - \frac{3}{2} \ip{e_i}{\tsum_{t \le \tau} a_t x^*_t} - O(\log d),
\end{align*}
	which then implies 
\begin{align}
  M_{lhs} &~\ge~ \frac{1}{2} \bigg\|\sum_{t \le \tau} a_t x_t
            \bigg\|_{\infty} - \frac{3}{2}\bigg\|\sum_{t \le \tau} a_t
            x^*_t \bigg\|_{\infty} - O(\log d) \notag \\ 
          &~\ge~ \frac{1}{2} \bigg\|\sum_{t \le \tau} a_t x_t
            \bigg\|_{\infty} - ~\frac{3B}{8}~ - O(\log
            d), \label{eq:lhs2} 
\end{align}
where the last inequality uses that $\|\sum_{t \le \tau} a_t x^*_t
\|_{\infty} \leq B/4$ due to \Cref{eq:budgetOPTbase} in \Cref{lem:prophBoundOPT} and the definition $x^*_t = \E[\psi_t(V_t)]$.

For the scenario when $\tau = n$,   \eqref{eq:algProphLower} and
\eqref{eq:lhs1} give that  
\begin{align*}
  \alg ~\ge~ \sum_\blue{t} C_t \; \psi_t(V_t) - \gamma \cdot O(\log d)  - \gamma M_{rhs} .
\end{align*}
Note that the sum is now over all $t$, not only $t\le \tau$.
For the other scenario, when $\tau < n$ (i.e., we exhaust the
budget), we know that $\|\sum_{t \le \tau} a_t x_t\|_{\infty} > B$,
and so using \eqref{eq:algProphLower} and \eqref{eq:lhs2} we get
\begin{align*}
  \alg ~\ge~ \frac{\gamma B}{8}  - \gamma \cdot O(\log d)  - \gamma M_{rhs}.
\end{align*}	
Taking a minimum of both the scenarios $\tau=n$ and $\tau<n$, we get
the following bound that holds for every scenario:
\begin{align*}
  \alg ~\ge~ \min\bigg\{\sum_t C_t \; \psi_t(V_t)~,~  \frac{\gamma B}{8}
  \bigg\} ~- \gamma \cdot O(\log d)  ~- \gamma M_{rhs}. 
\end{align*}
To calculate $\E[M_{rhs}]$, notice that both $\one(\tau \ge t)$ and
$\lambda_t$ depend only on the history up to time $t-1$. (This is where we crucially use that the augmentations $R_t$
do not depend on the future.) So taking conditional expectation,
\begin{align*}
\E_{t-1}[\one(\tau \ge t) \cdot \ip{\lambda_t}{a_t(\psi_t(V_t) - x^*_t)}]
&~=~ \one(\tau \ge t) \cdot \ip{\lambda_t}{a_t\, \E_{t-1}[\psi_t(V_t) - x^*_t]} \\
&~=~ \one(\tau \ge t) \cdot \ip{\lambda_t}{a_t\, \E[\psi_t(V_t) - x^*_t]}
~=~0,
\end{align*} 
where the second equality follows from the fact that $\psi_t$ decides
based only on $V_t$, and hence is independent of the past.
Taking expectations and summing over $t$ gives $\E[M_{rhs}] = 0$.
Hence,
\begin{align}
  \E [\alg] \ge \E \min\bigg\{\sum_t C_t\; \psi_t(V_t) ~,~ 
  \frac{\gamma B}{8} \bigg\} ~- \gamma \cdot O(\log d).  \label{eq:prop-eq}
\end{align}

To prove \Cref{thm:prophetWithAug}, it suffices to show that the first
term in the minimization is $\Omega(\OPT_{base})$ with constant
probability. Notice that this term is always non-negative, and that
the second term in the minimization is $\Omega(\OPT_{base})$, since we
set $\gamma = \Theta(\frac{\OPT_{base}}{B}$).

\begin{lemma}
  \label{lem:cheb}
  With probability at least $0.6$ it holds that $\sum_t C_t \; \psi_t(V_t) \ge \Omega(\OPT_{base}).$
\end{lemma}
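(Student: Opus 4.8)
The plan is to lower bound the random variable $Y := \sum_t C_t\,\psi_t(V_t)$ from below via a second-moment (Chebyshev) argument. First I would note that $C_t = V_t + R_t \ge V_t$ and $R_t \ge 0$, so it suffices to prove the bound for the smaller quantity $Z := \sum_t V_t\,\psi_t(V_t)$; this removes the adversary from the picture entirely. The key structural observation is that the summands $V_t\,\psi_t(V_t)$ are \emph{independent} across $t$, because $\psi_t$ decides based only on $V_t$ and the $V_t$'s are drawn independently. So $Z$ is a sum of independent nonnegative random variables with $\E[Z] = \E[\sum_t V_t\,\psi_t(V_t)] \ge \OPT_{base}/4$ by part~\ref{eq:relaxBndOPT} of \Cref{lem:prophBoundOPT}.

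Next I would bound the variance. Each term satisfies $0 \le V_t\,\psi_t(V_t) \le V_t \le \OPT_{base}/20$ by \Cref{assum:smoothProphet}, so for each $t$ we have $\Var(V_t\,\psi_t(V_t)) \le \frac{\OPT_{base}}{20}\cdot \E[V_t\,\psi_t(V_t)]$ (variance of a bounded nonnegative r.v.\ is at most its range times its mean). Summing over $t$ and using independence, $\Var(Z) = \sum_t \Var(V_t\,\psi_t(V_t)) \le \frac{\OPT_{base}}{20}\,\E[Z]$. Hence the standard deviation is at most $\sqrt{\tfrac{\OPT_{base}}{20}\E[Z]}$, which is small compared to $\E[Z]$ precisely when $\E[Z] \gg \OPT_{base}/20$ — and indeed $\E[Z] \ge \OPT_{base}/4 > \OPT_{base}/20$, so $\Var(Z) \le \frac{1}{5}\E[Z]^2$. (If one wants a cleaner constant one can observe $\Var(Z) \le \frac{\OPT_{base}}{20}\E[Z] \le \frac{1}{5}\E[Z]^2$.)

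Then Chebyshev's inequality gives
\[
  \Pr\!\Big[Z \le \tfrac12 \E[Z]\Big] ~\le~ \frac{\Var(Z)}{(\E[Z]/2)^2} ~=~ \frac{4\,\Var(Z)}{\E[Z]^2} ~\le~ \frac{4}{5} \cdot \frac{\OPT_{base}/20}{\E[Z]} \le \frac{4}{5}\cdot\frac{1}{5} = \frac{4}{25} < 0.4,
\]
so with probability at least $0.6$ we have $Z \ge \tfrac12\E[Z] \ge \OPT_{base}/8 = \Omega(\OPT_{base})$, and therefore $\sum_t C_t\,\psi_t(V_t) \ge Z \ge \Omega(\OPT_{base})$ as claimed. (The exact numerical constants can be adjusted; the point is that $\Var(Z)/\E[Z]^2$ is bounded by a small constant because $\E[Z] = \Theta(\OPT_{base})$ while each term is at most $\OPT_{base}/20$.) The main thing to be careful about is the independence claim — one must verify that $\psi_t$ genuinely depends only on $V_t$ (guaranteed by \Cref{lem:prophBoundOPT}) and not on the other $V_s$ or on any adversarial choices, so that the $V_t\,\psi_t(V_t)$ are a genuine independent sum; everything else is a routine bounded-variance computation. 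There is essentially no deep obstacle here once \Cref{lem:prophBoundOPT} and \Cref{assum:smoothProphet} are in hand — the lemma is just a concentration wrapper turning the in-expectation guarantee of \Cref{lem:prophBoundOPT} into an in-probability one.
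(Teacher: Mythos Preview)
Your proposal is correct and follows essentially the same approach as the paper: drop to $Z=\sum_t V_t\,\psi_t(V_t)$ via $C_t\ge V_t$, use independence of the $V_t\,\psi_t(V_t)$ together with the $\OPT_{base}/20$ bound from \Cref{assum:smoothProphet} to control the variance, and apply Chebyshev. One small arithmetic slip: in your displayed chain the factor multiplying $\frac{\OPT_{base}/20}{\E[Z]}$ should be $4$, not $\tfrac{4}{5}$, so the bound comes out to $0.8$ rather than $4/25$; the paper has a matching numerical looseness (it invokes $\E[Z]\ge \OPT_{base}/2$ where \Cref{lem:prophBoundOPT} only states $\OPT_{base}/4$), and in any case the constants are immaterial to the $\Omega(\cdot)$ conclusion.
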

   
\begin{proof}
  First, we always have
  $\sum_t C_t \; \psi_t(V_t) \ge \sum_t V_t \; \psi_t(V_t)$, since
  $R_t \geq 0$. By the upper bound in
  Assumption \ref{assum:smoothProphet} on $V_t$, and the fact that
  $\psi_t(V_t) \in [0,1]$ and that for any random variable
  $X \in [0,\alpha]$ its variance satisfies $\Var(X) \le \alpha\E[X]$, we have
  \[
    \Var(V_t \,\psi_t(V_t)) ~\le~ \frac{\OPT_{base}}{20}\, \E [V_t\,\psi_t(V_t)].
  \]
  Using the fact that $\psi_t(V_t)$ are independent and \Cref{eq:relaxBndOPT} in \Cref{lem:prophBoundOPT}, we have
  \[
    \Var\Big(\sum_t V_t \psi_t(V_t)\Big) 
    	~\le~ \frac{\OPT_{base}}{20}\, \E\Big[\sum_t V_t\,\psi_t(V_t)\Big]
	~\le~ \frac{1}{10} \Big(\E\Big[\sum_t V_t\,\psi_t(V_t)\Big]\Big)^2.
  \]
  Applying Chebychev's inequality, we have
  \begin{align*}
    \Pr\Big(\sum_t V_t\,\psi_t(V_t) \le \tfrac{1}{2} \E\Big[\sum_t V_t\,\psi_t(V_t)\Big] \Big) ~\le~ 0.4.
  \end{align*}
  Finally, using  $\E[\sum_t V_t\,\psi_t(V_t)] \ge \frac{\OPT_{base}}{2}$ by \Cref{eq:relaxBndOPT} in \Cref{lem:prophBoundOPT} concludes the proof.  
\end{proof}

Combining \Cref{lem:cheb} with~(\ref{eq:prop-eq}) shows the expected
value of the algorithm is at least a constant fraction of the
$\OPT_{base}$, and hence proves~\Cref{thm:prophetWithAug}.

%%% Local Variables:
%%% mode: latex
%%% TeX-master: "main"
%%% End:

\iffull
% !TeX root = main.tex
% !TEX root = main.tex

%############################################################
%############################################################

	\section{Refined Approximation for Byzantine Packing Integer Programs}
        \label{sec:one-plus-eps}
	
	We consider again the Byzantine Packing Integer Program problem $\textrm{LP}(S)$ from \S\ref{sec:byzant-PIP}. In contrast to the constant approximation presented in that section, here we design a $(1-O(\e + \alpha))$-approximation when only an $\alpha < 1$ fraction of the items is red and a weak estimate of $\OPT(\Gitems)$ is available. 
	
\begin{theorem} \label{thm:combo2}
	Consider $\e \in (0,\frac{1}{10}]$ and $\delta \in [0,1]$ and assume $B \ge \Omega\big(\frac{\parts \log (d \parts/\delta)}{\e^2}\big)$ and $\parts \ge
  \Omega(\frac{\log\!\log n + \log 1/\e}{\e^2})$. Suppose the total value of the green items of value $> \frac{\OPT}{B}$ is at most $\e \OPT$, and that it is available an estimate $\widehat{\OPT} \in [\frac{1}{n} \OPT, n \OPT]$. Then the solution returned by \Cref{alg:AD-byz2} satisfies the following:
  \begin{OneLiners}
  \item[i.] (Feasibility) The solution always packs into the modified budget of $(B + \parts)\cdot\ones$.
  
  \item[ii.] (Value) The solution has value at least $(1-\alpha-O(\e)) \OPT$ with probability at least $1-\delta'$, where $\alpha$ is the fraction of red items in the instance. 
  \end{OneLiners}
\end{theorem}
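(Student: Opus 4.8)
The plan is to mirror the two-level online-learning architecture behind \Cref{thm:combo}, tightening every constant-factor slack there to a $(1\pm O(\e))$ factor. As in \Cref{alg:AD-byz}, partition $[0,1]$ into $\parts$ equal intervals $I_1,\dots,I_\parts$; inside $I_i$ run a single-interval primal--dual subroutine with scaling parameter $\gamma_i$, and learn the $\gamma_i$'s online using the multiscale experts algorithm of \Cref{thm:multiscale} on (truncations of) the per-interval gains $\alg_i(\gamma)$ (this is \Cref{alg:AD-byz2}). Two changes from \Cref{alg:AD-byz} are forced by the target accuracy. First, the candidate list $\Gamma$ must be multiplicatively $(1+\e)$-fine rather than factor-$2$ coarse, so that, using $\widehat{\OPT}\in[\OPT/n,n\OPT]$, it still has length $|\Gamma|=\Theta(\tfrac{\log n}{\e})$ but is now guaranteed to contain a value $\gamma^*$ within a $(1+\e)$ factor of the ideal Agrawal--Devanur scaling $\Theta(\OPT/B)$. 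Second, the OLO subroutine of \Cref{lem:OLOregret} inside each interval is run with accuracy $\Theta(\e)$ rather than $\tfrac12$, which is why we need the per-interval budget $B/\parts\gtrsim \log(d\parts/\delta)/\e^2$, i.e.\ $B\gtrsim \parts\log(d\parts/\delta)/\e^2$. Feasibility (part (i)) is then immediate exactly as in \Cref{thm:combo}: there are $\parts$ intervals, each with budget $B/\parts$, and column sizes are at most $1$, so the overshoot is at most $\parts$.

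The core step is a refined single-interval guarantee replacing \Cref{lemma:packInterval}: for the near-ideal $\gamma^*$, with probability $1-\delta/\parts$ one should get $\alg_i(\gamma^*)\ge(1-\alpha_i-O(\e))\,|I_i|\,\OPT$, where $\alpha_i$ measures the budget the red columns in $I_i$ could consume. Here the crude ``minimum of two cases'' bound behind \Cref{lem:AD}--\Cref{lemma:LOPT} has to be abandoned — it inherently loses a constant — and replaced by the full Agrawal--Devanur trajectory argument. Concretely: since $B/\parts\gg\log(d\parts/\delta)/\e^2$, with scaling $\gamma^*$ the primal--dual consumes budget roughly proportionally to the arrival time within $I_i$, so the picked set has value $\ge(1-O(\e))$ times the LP value of the columns seen in $I_i$ before the per-interval budget $B/\parts$ is exhausted. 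A Chernoff bound (again using $B/\parts\gg\log d/\e^2$) shows the restriction of the small-value optimum $x^*$ to the green columns landing in $I_i$ is $(1+\e)$-feasible for budget $B/\parts$ and has value $\ge(1-\e)|I_i|\OPT$; hence the LP optimum over just the green columns of $I_i$ is already $\ge(1-O(\e))|I_i|\OPT$. What remains is to bound the damage from the adversarial red columns: a red column is picked only when it is Lagrangian-positive, $C_t>\gamma^*\ip{\dual_t}{A_t}$, so once the relevant dual coordinate is loaded it must contribute value commensurate with the budget it occupies; carrying this out should show that the green value forfeited because the budget went to red columns sums to only $O(\alpha)\OPT$ over all intervals, which is the origin of the $(1-\alpha)$ term.

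To finish, I would combine the intervals as in the proof of \Cref{thm:combo}. Let $\bar{\alg}_i(\gamma):=\min\{\alg_i(\gamma),\tfrac{B}{\parts}\gamma\}$, so $\bar{\alg}_i(\gamma^*)=O(\OPT/\parts)$, and run \Cref{thm:multiscale} with $\e_{\mathrm{MWU}}=\Theta(\sqrt{\log|\Gamma|/\parts})\le\e$; its regret bound gives, in every scenario,
\begin{align*}
  \tsum_i \alg_i(\gamma_i)\ \ge\ \tsum_i \bar{\alg}_i(\gamma_i)\ \ge\ \tsum_i \bar{\alg}_i(\gamma^*)\;-\;O\!\big(\sqrt{\parts\log|\Gamma|}\big)\cdot\tfrac{\OPT}{\parts}\ =\ \tsum_i \bar{\alg}_i(\gamma^*)\;-\;O(\e)\,\OPT,
\end{align*}
using $\log|\Gamma|=\Theta(\log\log n+\log\tfrac1\e)$ and $\parts\ge\Omega((\log\log n+\log\tfrac1\e)/\e^2)$. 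A union bound over the $\parts$ intervals (each failing with probability $\le\delta/\parts$), together with the per-interval bound above and $\sum_i\alpha_i|I_i|\le\alpha$, gives $\sum_i\bar{\alg}_i(\gamma^*)\ge(1-\alpha-O(\e))\OPT$ with probability $\ge 1-\delta'$, which with the displayed inequality proves part (ii).

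\textbf{Main obstacle.} The hard part is the refined single-interval analysis: one cannot tolerate the constant-factor slack of \Cref{lem:AD}/\Cref{lemma:LOPT}, so the argument must follow the budget-consumption trajectory à la Agrawal--Devanur and, more delicately, must show that the adversarially placed red columns — indistinguishable from green ones to the algorithm — can depress the captured green value by at most an $O(\alpha)$ fraction of $\OPT$, not a constant fraction. A secondary issue is that the candidate grid $\Gamma$ must be $(1+\e)$-fine, which lengthens it to $\Theta(\log n/\e)$ and inserts a $\log(1/\e)$ term inside the logarithm of the multiscale regret — exactly the extra $\log(1/\e)$ appearing in the hypothesis on $\parts$.
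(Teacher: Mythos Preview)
Your high-level plan is right --- two-level learning, a $(1+\e)$-fine grid $\Gamma_\e$ of length $\Theta(\tfrac{\log n}{\e})$, the per-interval OLO run at accuracy $\Theta(\e)$, and the outer multiscale regret bound combined with a union bound over intervals --- and those pieces are exactly what the paper does. The gap is in your single-interval analysis.

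You propose to keep \Cref{alg:AD-oneint} as the inner algorithm and replace the crude \Cref{lem:AD}/\Cref{lemma:LOPT} bound by ``the full Agrawal--Devanur trajectory argument'', then separately bound the damage from red columns using that a picked red column is Lagrangian-positive. This does not work. With the original penalty $f_t(\lambda)=\ip{\lambda}{A_t x_t}$ over $\triangle^{d-1}$, the best-response comparison in the non-exhausted case gives only $\alg(I,\gamma)\ge \LOPT(\stoch(I),\gamma)$, and for the ``right'' scaling $\gamma\approx \OPT/B$ one has $\LOPT\approx |I|\big((1-\e)\OPT-\gamma B\big)\approx -\e|I|\,\OPT$, which is useless. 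Your fallback --- that picked red columns satisfy $C_t\ge\gamma\ip{\lambda_t}{A_t}$ and hence ``pay for'' the budget they consume --- does not control the actual coordinate occupation $\ip{e_j}{A_t}$, only the $\lambda_t$-weighted one; an adversary can place reds that are Lagrangian-positive under the current $\lambda_t$ yet load an orthogonal coordinate, so the argument does not yield an $O(\alpha)\OPT$ bound on the green value displaced.

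The fix the paper uses is an \emph{algorithmic} change inside each interval (\Cref{alg:AD-oneint2}): the penalty is shifted to
\[
\tilde f_t(\lambda)\;=\;\Big\langle \lambda,\;A_t x_t - \tfrac{B\,\ones}{n}\Big\rangle
\]
and the OLO is run over the full-dimensional simplex $\blacktriangle^d=\{\lambda\ge 0:\|\lambda\|_1\le 1\}$. Two things happen. First, because $\lambda=0$ is now feasible, in the non-exhausted case the regret bound gives $\sum_t\tilde f_t(\lambda_t)\ge -O(\e B/\parts)$ for free, so the Lagrangian lower bound no longer collapses. Second, the shift $-\tfrac{B\ones}{n}$ is exactly what makes the $(1-\alpha)$ appear: the expected occupation of $x^*$ at a green step is $\le \tfrac{B\ones}{G}$, and
\[
\Big\langle \lambda_t,\;\tfrac{B\ones}{G}-\tfrac{B\ones}{n}\Big\rangle\;\le\;B\Big(\tfrac{1}{G}-\tfrac{1-\alpha}{G}\Big)\;=\;\tfrac{\alpha B}{G},
\]
using $G\ge(1-\alpha)n$. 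Carrying this through (\Cref{lem:AD2}, via \Cref{lemma:lag}) gives
\[
\alg(I,\gamma)\;\ge\;\tfrac{1}{\parts}\min\{\gamma B,\;\OPT-\alpha\gamma B\}\;-\;O\!\Big(\tfrac{\e\,\OPT}{\parts}+\tfrac{\e\gamma B}{\parts}+\tfrac{\e\gamma B\,n_I}{n}\Big),
\]
and now both branches of the min are $\ge(1-\alpha-O(\e))\OPT/\parts$ at $\gamma^*\in[(1-\e)\tfrac{\OPT}{B},\tfrac{\OPT}{B}]$. Your outer-loop argument (multiscale regret, union bound, summing the $n_{I_i}$ terms to $n$) then finishes the proof exactly as you wrote. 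So the missing idea is not a sharper analysis of \Cref{alg:AD-oneint}, but the shifted penalty over $\blacktriangle^d$; without it the single-interval bound cannot be pushed past a constant factor.
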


	Again notice that we can scale down the solution found by the algorithm by $\frac{B}{B+K} = 1 - O(\e^2)$ to get a feasible solution with the same approximation guarantee.
	
	The algorithm is very similar % same as Algorithm \ref{alg:AD-byz}
        to the one used for the constant-factor approximation in
        \S\ref{sec:byzant-PIP}, but: 1) we now use a slightly modified
        version of Algorithm \ref{alg:AD-oneint} for each interval,
        and 2) we use a finer discretized set $\Gamma$ of possible
        $\gamma$'s. We now make these modifications precise and prove \Cref{thm:combo2}. 
	
%############################################################
	
	\subsection{Modified Version of \Cref{alg:AD-oneint}}

	As the ``within-a-single-interval'' algorithm we use the original Agrawal-Devanur algorithm for Packing LP's in the stochastic model~\cite{AD15} applied to items in an interval $I$ with the scaled budget $|I| B$. Using again the r.v.s $(C_t, A_t)$ to denote the value and size of the $t$-th item that appears in this interval, the main difference between this new algorithm and \Cref{alg:AD-oneint} is that it now uses the penalty functions 
	\begin{align*}
		\textstyle \tilde{f}_t(\lambda) := \ip{\lambda}{A_t x_t \blue{- \frac{B \cdot \ones}{n}}},
	\end{align*}
	(recall that $n$ is the total number of items in the whole
        instance) and, because some of the penalties can be negative, it
        considers $\lambda$ to be in the ``full-dimensional simplex''
        that includes the 0 vector, namely $\blacktriangle^d := \{ \lambda \in [0,1]^d : \sum_i \lambda_i \le 1\}$. The motivation for the additional term $- \frac{B \cdot \ones}{n}$ is that when there are few red items it helps canceling out the ``first order term'' of the green $\OPT$'s occupation given that $A_t x^*_t \le \frac{B \ones}{G} \approx \frac{B \ones}{n}$. For convenience, the algorithm is presented in \Cref{alg:AD-oneint2}.

\begin{algorithm}
  \caption{\blue{Modified} IntervalByzLP$(I,\gamma)$}
  \label{alg:AD-oneint2}
\begin{algorithmic}[1]
  \For{steps $t = 1, 2, \cdots$ in interval $I$}
    \State
    Use the low-regret OLO algorithm from \Cref{lem:OLOregret} on \blue{$\tilde{f}_1,\ldots,\tilde{f}_{t-1}$} to get
    $\dual_t \in \blue{\blacktriangle^d}$. 
    \State
    Compute $\algx_t \in [0,1]$ maximizing $x \mapsto C_t x - \gamma \ip{\dual_t}{A_t x}$.
    \State
    \textbf{break} if the scaled budget is violated, i.e., if a coordinate
    of $\sum_{s \leq t} A_s
    \algx_s$ exceeds $B\,|I|$.
  \EndFor
\end{algorithmic}
\end{algorithm}	

	The following is the  guarantee of this algorithm for interval $I$, with a proof similar to that of \Cref{lem:AD}. 

\begin{lemma} \label{lem:AD2}
	Under the assumptions from \Cref{thm:combo2}, for any interval $I \sse [0,1]$ with $|I| = \frac{1}{K}$ and $\gamma \in [0, \frac{\OPT}{B}]$, with probability at least $1 - \delta$ we have
  \begin{align*}
    \alg(I,\gamma) ~\ge~ \frac{1}{K} \cdot \min\bigg\{\gamma B\,,\, \OPT - \alpha \gamma B \bigg\} ~-~ O\bigg(\frac{\e \OPT}{K} + \frac{\e \gamma B}{K} + \frac{\e \gamma B n_I}{n} \bigg) \enspace. 
  \end{align*}

\end{lemma}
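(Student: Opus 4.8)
The plan is to mimic the two-part argument behind \Cref{lem:AD} and \Cref{lemma:LOPT}, tracking the extra error terms that the refinement forces. As the green reference solution I would take $x^*$, the optimal solution supported only on green items of value at most $\OPT/B$; under the hypothesis that the remaining green items contribute at most $\e\OPT$, this $x^*$ is feasible with value at least $(1-\e)\OPT$, so $\sum_{t\in\Gitems}A_t x^*_t \le B\ones$. Write $z_t := A_t x_t - \tfrac{B\ones}{n}$, so the penalty used by \Cref{alg:AD-oneint2} is $\tilde{f}_t(\lambda)=\ip{\lambda}{z_t}$, and let $\tau$ be the step at which the algorithm stops.

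\emph{Deterministic part.} First I would repeat the best-response step of \Cref{lem:AD}: since $x_t$ maximizes $x \mapsto C_t x - \gamma\ip{\dual_t}{A_t x}$ and $x^*_t$ is a feasible choice, after rewriting the recentering offsets and discarding the nonnegative recentering contribution of the red steps one gets
\[
\alg(I,\gamma) ~\ge~ \sum_{t\in\stoch(I),\,t\le\tau}\big(C_t x^*_t - \gamma\ip{\dual_t}{A_t x^*_t - \tfrac{B\ones}{n}}\big) ~+~ \gamma\sum_{t\le\tau}\tilde{f}_t(\dual_t).
\]
Then I would split into the two cases of \Cref{lem:AD}. If the budget overflows in coordinate $i$, compare $x_t$ against $x=0$ to get $\alg(I,\gamma)\ge\gamma\sum_{t\le\tau}\ip{\dual_t}{A_t x_t}\ge\gamma\sum_{t\le\tau}\tilde{f}_t(\dual_t)$, and feed the multiplicative-plus-additive guarantee of \Cref{lem:OLOregret} (run with accuracy $\Theta(\e)$) against $\lambda=e_i$; using $\|\sum_{t\le\tau}A_t x_t\|_\infty\ge|I|B$ together with the coordinatewise bounds $z_t^+\le A_t x_t$ and $z_t^-\le\tfrac{B\ones}{n}$, this should yield
\[
\alg(I,\gamma) ~\ge~ (1-O(\e))\,\gamma\,|I|B ~-~ O\big(\tfrac{\e\gamma B n_I}{n}\big) ~-~ O\big(\tfrac{\gamma\log d}{\e}\big),
\]
where $n_I$ is the number of items landing in $I$ and the last term is $O(\tfrac{\e\gamma B}{K})$ by the hypothesis $B=\Omega(\tfrac{K\log(dK/\delta)}{\e^2})$. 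If the budget is not exhausted, then $\tau=n_I$ and the first sum above runs over all green items of $I$; comparing the regret against $\lambda=0$ gives $\gamma\sum_{t\le\tau}\tilde{f}_t(\dual_t)\ge-O(\tfrac{\gamma\log d}{\e})$, leaving $\alg(I,\gamma)$ at least the ``value of $x^*$ in $I$'' minus $\gamma$ times the ``recentered penalty of $x^*$ in $I$'' minus $O(\tfrac{\gamma\log d}{\e})$.

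\emph{Probabilistic part.} It remains to control these last two quantities using the random green arrivals, exactly as in \Cref{lemma:LOPT}. I would show $\sum_{t\in\stoch(I)}C_t x^*_t$ concentrates around $|I|\cdot(\text{value of }x^*)\ge|I|(1-\e)\OPT$ and $\sum_{t\in\stoch(I)}A_t x^*_t$ concentrates around $|I|\sum_{t\in\Gitems}A_t x^*_t$; the width $B=\Omega(\tfrac{K\log(dK/\delta)}{\e^2})$ is exactly what makes both deviations at most $O(\e)$ fractions of $|I|\OPT$ and $|I|B$ with probability $1-\delta$, with the without-replacement correlation between $\dual_t$ and $(C_t,A_t)$ handled by the martingale bookkeeping of \Cref{sec:lemma:LOPT}. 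The genuinely new ingredient is the recentering: because $x^*$ is supported on the $G=(1-\alpha)n$ green items and is feasible,
\[
\E\Big[\sum_{t\in\stoch(I)}\big(A_t x^*_t - \tfrac{B\ones}{n}\big)\Big] ~=~ |I|\Big(\sum_{t\in\Gitems}A_t x^*_t - G\cdot\tfrac{B\ones}{n}\Big) ~\le~ |I|B\ones - (1-\alpha)|I|B\ones ~=~ \alpha\,|I|B\ones,
\]
and $\ip{\dual_t}{\ones}\le1$, so the recentered penalty of $x^*$ in $I$ is at most $\alpha|I|B+O(\e|I|B)$ with probability $1-\delta$. Combining gives $\alg(I,\gamma)\ge\tfrac1K(\OPT-\alpha\gamma B)-O(\tfrac{\e\OPT}{K}+\tfrac{\e\gamma B}{K})$ in this case, which together with the overflow bound produces the claimed minimum (also using $\gamma\le\OPT/B$ to keep the two branches comparable).

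\emph{Main obstacle.} The conceptual point is just the recentering by $\tfrac{B\ones}{n}$, which makes a feasible green solution spend essentially no penalty budget up to the $G$-versus-$n$ gap of $\alpha B$, turning the $\gamma B$ loss of \Cref{lem:AD} into an $\alpha\gamma B$ loss; all the work is in the accounting. I expect the most delicate step to be the overflow case: a crude application of the regret bound produces a bare $\gamma B n_I/n$ error, which is fatal since $\sum_I n_I = n$ and $\gamma B\approx\OPT$, so the $\e$ factor on this term has to be extracted from the multiplicative $(1\pm\e)$ slack of \Cref{lem:OLOregret} applied to the recentering offsets rather than from the concentration step. The remaining chore is to redo the concentration of \Cref{lemma:LOPT} for the recentered summands $A_t x^*_t-\tfrac{B\ones}{n}$, which now take values in $[-\tfrac{B}{n},1]^d$ rather than $[0,1]^d$, and to check that the $\e^{-2}$ in the hypotheses on $B$ and $K$ is precisely what makes each error term $O(\e)$-small after summing over the $K$ intervals.
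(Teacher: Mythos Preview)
Your plan is essentially the paper's proof: best-response against the truncated green optimum $x^*$, the same two-case split, the OLO regret bound against $e_i$ or $0$, and the same without-replacement concentration to control the occupation of $x^*$. You also correctly isolate the key new phenomenon: the recentering $-\tfrac{B\ones}{n}$ turns the per-green-step expected occupation from $\le \tfrac{B\ones}{G}$ into $\le \tfrac{\alpha B\ones}{G}$, which is exactly where the $\alpha\gamma B$ replaces the $\gamma B$ of \Cref{lem:AD}.

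There is one structural difference that matters. The paper does \emph{not} split the two cases by using different comparison points; it first writes a single inequality
\[
\alg(I,\gamma)~\ge~\widetilde{\cL}(\gamma)~+~\gamma(1-\e)\max_{\lambda\in\blacktriangle^d}\sum_{t\le\tau}\tilde f_t(\lambda)~-~O\Big(\tfrac{\e\gamma B n_I}{n}+\tfrac{\e\gamma B}{K}\Big),
\]
keeping the full Lagrangian $\widetilde{\cL}(\gamma)=\sum_{t\le\tau}\big(C_t x^*_t-\gamma\ip{\lambda_t}{A_t x^*_t-\tfrac{B\ones}{n}}\big)$ in \emph{both} cases, and only then bounds $\widetilde{\cL}$ by a sub-lemma in terms of the random stopped green count $\tilde\tau$. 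In your overflow branch you instead compare against $x=0$ and write $\alg\ge\gamma\sum_{t\le\tau}\tilde f_t(\lambda_t)$, which drops $\widetilde{\cL}$ entirely. That loses precisely the positive offset $+\gamma\sum_{t\le\tau}\ip{\lambda_t}{\tfrac{B\ones}{n}}$ that sits inside $\widetilde{\cL}$. This is the term that is supposed to cancel the $-\gamma(1-\e)\tau B/n$ coming from $\sum_{t\le\tau}\tilde f_t(e_i)=\|\sum A_t x_t\|_i-\tau B/n$; without it, your ``extract the $\e$ from the multiplicative slack'' idea does not go through, since the regret bound against $e_i$ gives $(1-\e)\big(|I|B-\tau B/n\big)$ and the factor on $\tau B/n$ is $(1-\e)$, not $\e$. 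So your no-overflow branch is fine and matches the paper, but in the overflow branch you should retain the comparison against $x^*$ (i.e., keep $\widetilde{\cL}$) rather than against $0$; that is exactly what the paper does and is what makes the $\tfrac{\e\gamma B n_I}{n}$ error term come out with the $\e$ factor.
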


\begin{proof}
Let $G(I)$ be the number of green items in the interval $I$, and let $n_I$ be the total number of items in $I$. Again let $\tau$ be the number of items seen when the algorithm stops. 
  
	Let $x^*$ be the optimal solution for the whole instance (not just on the interval $I$) considering only the green items that have individual value at most $\frac{\OPT}{B}$. Recall that by the assumption $x^*$ has value at least $(1-\e) \OPT$. By the ``best response'' definition of $x_t$ we have
  \begin{align*}
    \sum_{t \le \tau} C_t \algx_t - \gamma \sum_{t \le \tau} \ip{\lambda_t}{A_t \algx_t - \tfrac{B \ones}{n}} ~\ge~ \underbrace{\sum_{t \le \tau} C_t x^*_t - \gamma \sum_{t \le \tau} \ip{\lambda_t}{A_t x^*_t - \tfrac{B \ones}{n}}}_{=: \widetilde{\cL}(\gamma)}.
  \end{align*}
  Moreover, since for any $x \in [0,1]$ the absolute value of $\ip{\lambda_t}{A_t x - \tfrac{B \ones}{n}}$ can be checked to be at most $\ip{\lambda_t}{A_t x - \tfrac{B \ones}{n}} +  \frac{2B}{n}$, by the regret guarantee of the  OLO algorithm have 
  \begin{align*}
  	\sum_{t \le \tau} \ip{\lambda_t}{A_t \algx_t - \tfrac{B \ones}{n}} &\ge (1-\e) \max_{\lambda \in \blacktriangle^d}\, \sum_{t \le \tau} \ip{\lambda}{A_t \algx_t - \tfrac{B \ones}{n}} - \e \tau \frac{2 B}{n} - \frac{\log d}{\e}\\
  	&\ge (1-\e) \max_{\lambda \in \blacktriangle^d}\, \sum_{t \le \tau} \ip{\lambda}{A_t \algx_t - \tfrac{B \ones}{n}} - O\bigg(\e \frac{B n_I}{n} + \e \frac{B}{K} \bigg),
  \end{align*}
  where the last inequality uses the fact $B \ge \Omega(\frac{K \log d}{\e^2})$. Combining these inequalities we get
  \begin{align}
    \alg(I,\gamma) ~\ge~ 
    \widetilde{\cL}(\gamma) \, +\, \gamma (1-\e) \max_{\lambda \in \blacktriangle^d}\, \sum_{t \le \tau} \ip{\lambda}{A_t \algx_t - \tfrac{B \ones}{n}} - O\bigg(\e \frac{\gamma B n_I}{n}\bigg) - O\bigg(\e \frac{\gamma B}{K} \bigg). \label{eq:packEps}
  \end{align}  

	The next step is to lower bound the ``stopped Lagrangian'' $\widetilde{\cL}(\gamma)$. Before that we need a simple concentration on the number of green items on interval $I$. 
  
  \begin{lemma} \label{lemma:concGI}
  With probability at least $1-\frac{\delta}{4}$ we have $G_I \in [(1-\e) \frac{G}{K}, (1+\e) \frac{G}{K}]$.
  \end{lemma}
  
  \begin{proof}
  The expected number of green items in $I$ is $\E G(I) = \frac{G}{K}$, and $\Var(G(I)) \le \E G(I)$ (since $G(I)$ is the sum of independent random variables in $[0,1]$). Then Bernstein's inequality (\Cref{lem:Bern}) we have
		\begin{align}
			\Pr(|G(I) - G/K| > \e G/K) \le 2 \exp\bigg(- \frac{\frac{\e^2 G^2}{2K^2}}{\Var(G(I)) + \frac{\e G}{3 K}} \bigg) \le 2 \exp\bigg(- \frac{3}{8} \frac{\e^2 G}{K}  \bigg) \le \frac{\delta}{4}, \label{eq:concGI}
		\end{align}
		where the last inequality is because $G \ge (1-\e) B$ (since by assumption the green items of value $>\frac{\OPT}{B}$ have total value at most $\e \OPT$ there must be at least $(1-\e) B$ other green items to obtain the remaining $(1-\e) \OPT$ value) and $B \ge \Omega(\frac{K \log 1/\delta}{\e^2})$
  \end{proof}
	
	Now we proceed with lower bounding
        $\widetilde{\cL}(\gamma)$. For that, let $\tilde{\tau}$ denote
        the number of green items in the interval $I$ before (and including) time step $\tau$.  

  \begin{lemma} \label{lemma:lag}
  	With probability at least $1-\frac{\delta}{2}$, 
  	\begin{align*}
  		\widetilde{\cL}(\gamma) ~\ge~ \frac{\tilde{\tau}}{G} \bigg[ \OPT - \alpha \gamma B \bigg] - O\bigg(\e \frac{\OPT + \gamma B}{K}\bigg).
  	\end{align*}
  \end{lemma}

	\begin{proof}
	  Recall that $\Gsize(I)$ denotes the number of green items that fall in interval $I$, and let $t_j$ be the time step of the $j^{th}$ green item to appear in  interval $I$ (notice that the $t_j$'s are random). Then 
	  \begin{align}
	  	 \widetilde{\cL}(\gamma) ~=~ \sum_{t \le \tau} C_t x^*_t - \gamma \sum_{t \le \tau} \ip{\lambda_t}{A_t x^*_t - \tfrac{B \ones}{n}} ~=~ \sum_{j \le \tilde{\tau}} C_{t_j} x^*_{t_j} - \gamma \sum_{j \le \tilde{\tau}} \ip{\lambda_{t_j}}{A_{j_t} x^*_{t_j}} - \gamma \sum_{t \le \tau} \ip{\lambda_t}{\tfrac{B \ones}{n}} . \label{eq:packHardMain}
	  \end{align}
	 We lower bound the first two terms of the right-hand side with high probability. 
	 
	 For that we condition on the number of green items on this interval $\Gsize(I) = N$ where $N \le \frac{2 G}{K}$. Also condition on the positions of the green items $(t_1,\ldots,t_N) =: t_{\le N}$. Notice that under this conditioning the stochastic items $(C_{t_j}, A_{t_j})$ are still sampled without
  replacement from the green items. Therefore, the conditional expected revenue from one (any) time step $t_j$
  \begin{align*}
  	\mu \,:=\, \E[C_{t_j} x^*_{t_j} \mid \Gsize(I) = N, t_{\le N}]
  \end{align*} 
  takes value between $(1-\e) \frac{\OPT}{G}$ and $\frac{\OPT}{G}$, and the conditional expected occupation 
  \begin{align*}
    \vec{\mu} := \E [A_{t_j} x^*_{t_j} \mid \Gsize(I) = N,\, t_{\le N} ]
  \end{align*}
  is coordinate-wise at most $\frac{B \ones}{G}$. 
  
  Moreover, the revenue is concentrated around the expectations: using
  the maximal Bernstein inequality \Cref{lemma:maximalBernstein}
  conditionally (with $X_j := C_{t_j} x^*_{t_j}$ and $M =
  \frac{\OPT}{B}$), we have
		\begin{align*}
			\Pr\Big(\max_{\ell \le N} \big|\tsum_{j \le \ell} (C_{t_j} x^*_{t_j} - \mu)\big| \ge \tfrac{\e c \OPT}{K} ~\Big|~ G(I) = N, t_{\le N} \Big) 
			&~\le~ 30 \exp\Big(- \tfrac{\e^2 B}{K} \cdot \min\big\{\tfrac{2 G}{K N}\,,\,1\big\} \Big) \notag ~\le~ \tfrac{\delta}{8}, 
		\end{align*}
	for a suitably large constant $c$. Here the last inequality
        uses $B \ge \Omega(\frac{K \log 1/\delta}{\e^2})$ and $N \le
        \frac{2G}{K}$. Notice that since $\tilde{\tau} \le N$ this
        also implies concentration for the sum $\sum_{j \le \tilde{\tau}} (C_{t_j} x^*_{t_j} - \mu)$ with random range $j \le \tilde{\tau}$. 
	
	In addition, the occupation is also concentrated: applying \Cref{lemma:breakCorr} conditionally we get 
		\begin{align*}
			\Pr\bigg(\sum_{j \le\tilde{\tau}} \ip{\dual_{t_j}}{ A_{t_j} x^*_{t_j}} > (1+4 \e) \sum_{j \le \tilde{\tau}} \ip{\lambda_t}{\vec{\mu}}  + \Omega\big(\tfrac{\log d/\delta}{\e}\big)  ~\bigg|~  G(I) = N, t_{\le N}\bigg) ~\le~ \frac{\delta}{8}.
		\end{align*}
	Notice we can indeed apply \Cref{lemma:breakCorr} because   $\dual_{t_j}$ is a function of the green items before the
  $j^{th}$ green item in the interval, namely 
  $(C_{t_1}, A_{t_1}), \ldots, (C_{t_{j-1}}, A_{t_{j-1}})$ (it also depends on the
  red items in $I$, but these are deterministic), and similarly $\tilde{\tau}$ is a stopping time with respect to the sequence $((C_{t_j}, A_{t_j}))_j$ and $\tilde{\tau} \le G(I) = N \le \frac{2 G}{K} \le \frac{G}{2}$. Taking a union bound over both concentration inequalities and then expectation over $t_{\le N}$ and over $G(I) \le \frac{2G}{K}$ and using the fact that $B \ge \Omega(\frac{K \log d/\delta}{\e^2})$ to upper bound the term $\Omega\big(\frac{\log d/\delta}{\e}\big)$ we get 
		\begin{align*}
			\Pr\bigg(\sum_{j \le \tilde{\tau}} C_{t_j} x^*_{t_j} - \gamma \sum_{j \le\tilde{\tau}} \ip{\dual_{t_j}}{ A_{t_j} x^*_{t_j}} < \tilde{\tau} \mu - (1+4 \e) \gamma \sum_{j \le \tilde{\tau}} \ip{\lambda_t}{\vec{\mu}}  + \Omega\big(\e \tfrac{\OPT + \gamma B}{K} \big)  ~\bigg|~  G(I) \le \frac{2G}{K}\bigg) \le \frac{\delta}{4}.
		\end{align*}

	From \Cref{lemma:concGI}, $G(I) \le \frac{2 G}{K}$ with probability at least $1 - \frac{\delta}{4}$, and so taking a union bound with the displayed inequality and employing it on \eqref{eq:packHardMain} we get  that with probability at least $1 - \frac{\delta}{2}$ we have $G(I) \le \frac{2 G}{K}$  and 
	\begin{align*}
		\widetilde{\cL}(\gamma) &\ge \tilde{\tau} \mu - (1+4 \e) \gamma \sum_{j \le \tilde{\tau}} \ip{\lambda_t}{\vec{\mu}}   - \gamma \sum_{t \le \tau} \ip{\lambda_t}{\tfrac{B \ones}{n}} + \Omega\big(\e \tfrac{\OPT + \gamma B}{K} \big)\\
		&\ge (1-\e) \tilde{\tau} \tfrac{\OPT}{G} - (1+4\e) \gamma \sum_{j \le \tilde{\tau}} \ip{\lambda_t}{\tfrac{B \ones}{G}} - \gamma \sum_{j \le \tilde{\tau}} \ip{\lambda_t}{\tfrac{B \ones}{n}} + \Omega\big(\e \tfrac{\OPT + \gamma B}{K} \big)\\
		&\ge (1-\e) \tilde{\tau} \tfrac{\OPT}{G} - \tilde{\tau} \gamma B \big[\tfrac{1+4\e}{G} - \tfrac{1}{n}\big] + \Omega\big(\e \tfrac{\OPT + \gamma B}{K} \big)\\
		&\ge (1-\e) \tilde{\tau} \tfrac{\OPT}{G} - \tilde{\tau} \gamma B \big[\tfrac{1+4\e}{G} - \tfrac{1-\alpha}{G}\big] + \Omega\big(\e \tfrac{\OPT + \gamma B}{K} \big) ,
	\end{align*}
	where the second follows inequality because $\ip{\lambda_t}{\ones}\le 1$ and the last inequality follows from $G \ge (1-\alpha) n$, since by assumption at most an $\alpha$ fraction of the $n$ items are red. Finally, since $G(I) \le \frac{2 G}{K}$ we have $\frac{\tilde{\tau}}{G} \le \frac{G(I)}{G} \le \frac{2}{K}$, and so all terms depending on $\e$ can be absorbed by the last term $O(\e \frac{\OPT + \gamma B}{K})$. The displayed inequality then gives the desired result. 
	\end{proof}

	Let $\cE$ be the event where \Cref{lemma:lag,lemma:concGI} hold, so by a union bound $\Pr(\cE) \ge
        1-\delta$. Fix a scenario in $\cE$, and break the analysis into two cases depending whether
        the algorithm exhausts a budget or not. 
    
%###########################################################
	
	\paragraph{Case 1: $\tau < n_I$.} In this case the algorithm exhausted a budget $|I|B$ and hence
	\begin{align*}
		\max_{\lambda \in \blacktriangle^d} \, \sum_{t \le \tau} \ip{\lambda}{A_t \algx_t - \tfrac{B \ones}{n}} ~\ge~ |I| B - \frac{\tau B}{n} \,=\, \frac{B}{K} - \frac{\tau B}{n}.
	\end{align*}
	Since we are in a scenario in $\cE$, \eqref{eq:packEps} gives
  \begin{align*}
    \alg(I,\gamma) ~&\ge~ \frac{\tilde{\tau}}{G} \bigg[ \OPT -   \alpha \gamma B\bigg] + (1-\e) \frac{\gamma B}{K} - 2\e  \frac{\gamma B \cdot n_I}{n} - O\bigg(\e \frac{\OPT + \gamma B}{K} \bigg)\\
    &\ge~ \frac{\gamma B}{K} - O\bigg(\frac{\e \OPT}{K} + \frac{\gamma B}{K} + \frac{\e \gamma B \cdot n_I}{n} \bigg),
  \end{align*}
  where the second inequality follows from the fact $\gamma \le \frac{\OPT}{B}$, which makes $ \frac{\tilde{\tau}}{G} [ \OPT -   \alpha \gamma B] \ge 0$ and so we can just drop this term. This gives the first term in the minimum in \Cref{lem:AD2}.
  
%###########################################################

	\paragraph{Case 2: $\tau = n_I$.} Since we can take $\lambda = 0$, we have 
	\begin{align*}
		\max_{\lambda \in \blacktriangle^d}\, \sum_{t \le \tau} \ip{\lambda}{A^t \algx_t - \tfrac{B \ones}{n}} \ge 0.
	\end{align*}	
	Moreover, in the current case (and scenario in $\cE$) we have $\tilde{\tau} = G_I \ge (1-\e) \frac{G}{K}$. Thus, using \Cref{lem:AD2} and inequality \eqref{eq:packEps} we get
  \begin{align*}
    \alg(I,\gamma) ~\ge~ \frac{(1-\e)}{K} \bigg[ \OPT - \alpha \gamma B\bigg] - \e \frac{\gamma B n_I}{n} + O\bigg(\e \frac{\OPT + \gamma B}{K} \bigg). 
  \end{align*}  	
  This gives the second term in the minimum in \Cref{lem:AD2}, and concludes the proof of the lemma.
\end{proof}

%##############################################################
%##############################################################
%##############################################################

\subsection{Modified Version of \Cref{alg:AD-byz}}

	The only modification to the ``across intervals'' \Cref{alg:AD-byz} from \S\ref{sec:optUnknown} is to run it using the previous ``within-a-single-interval''  \Cref{alg:AD-oneint2} and to use a finer grid $\Gamma_\e$	of size $O(\frac{\log n}{\e})$ that contains $\frac{\OPT}{B}$ within a factor of $(1-\e)$, constructed based on the assumed estimate of $\OPT$ within $\poly(n)$ factors. This is stated more precisely in \Cref{alg:AD-byz2}.

\begin{algorithm}
  \caption{\blue{Modified} ByzLP}
  \label{alg:AD-byz2}
\begin{algorithmic}[1]
  \For{interval $i = 1, \ldots, K$}
    \State
    pick ${\gamma}_i \in \blue{\Gamma_\e}$ using the multiscale experts
    algorithm of \Cref{thm:multiscale} on 
    $\bar{\alg}_1,
    \bar{\alg}_2,\ldots, \bar{\alg}_{i-1}$.
    \State
    run \blue{\Cref{alg:AD-oneint2}} over interval $I_i$
    with parameter ${\gamma}_i$, thereby getting
    value $\alg_i({\gamma}_i)$.
  \EndFor
\end{algorithmic}
\end{algorithm}	
	
	We finally prove that the above algorithm has the approximation guarantee stated in \Cref{thm:combo2}.

\begin{proof}[Proof of \Cref{thm:combo2}]
 Again the feasibility follows directly by the feasibility guarantee of \Cref{alg:AD-oneint2}.
 
 Let $\gamma^*$ be a value in $\Gamma$ in the interval $\big[(1-\e) \frac{\OPT}{B}, \frac{\OPT}{B}\big]$. By the truncation we get that $\bar{\alg}_i(\gamma^*) \le \frac{\OPT}{K}$. Thus, the multiscale regret guarantee and the fact $K \ge \Omega(\frac{\log |\Gamma_\e|}{\e^2})$ gives that in every scenario the total value of the algorithm is at least 
  \begin{align}
    \sum_i \bar{\alg}_i({\gamma}_i) ~&\ge~ (1-\e) \sum_i \bar{\alg}_i(\gamma^*) - \OPT \cdot O\bigg(\frac{\log |\Gamma_\e|}{\e \parts}\bigg) \ge (1-\e) \sum_i \bar{\alg}_i(\gamma^*) - O(\e \OPT). \label{eq:packingWrapEps1}
  \end{align}
  
  With probability at least $1-\delta'$, the guarantee from \Cref{lem:AD2} (with $\delta$ set as $\frac{\delta'}{\parts}$) holds for all $\parts$ intervals. Whenever this holds we have
  \begin{align*}
    \bar{\alg}_i(\gamma^*) ~&\ge~ |I_i| \min\bigg\{(1-\e) \OPT \,,\, (1-\alpha) \OPT \bigg\} ~-~ O\bigg(\frac{\e \OPT}{K} + \frac{\e \OPT n_{I_i}}{n} \bigg)\\
    & \ge~ |I_i| (1 - \alpha - O(\e)) \OPT - O\bigg(\frac{\e \OPT n_{I_i}}{n}\bigg)
  \end{align*}
  for all $i \in [\parts]$. Adding over all $i$'s then gives $\sum_i \bar{\alg}_i(\gamma^*) \ge (1-\alpha - O(\e)) \OPT$. Applying this on \eqref{eq:packingWrapEps1} then concludes the proof of the theorem. 
\end{proof}

%%% Local Variables:
%%% mode: latex
%%% TeX-master: "main"
%%% End:

\fi

% !TEX root = main.tex

\section{Byzantine Secretary for Value Maximization}
\label{sec:valueMaxi}
% \alert{Say somewhere that we adopt notation from
%   \Cref{sec:single-item}: $\target$, high, {\good}, $\rho_i$,
%   $\widehat{\mu}_i$.} \agnote{Copied over the relevant notation.}

%\marco{I think we should define the problems a bit more clearly, say
%  that is again like the packing problem, byzantine model, etc., maybe
%  like done in single-item prob. maximization section} \agnote{OK now?}
% This section contains two algorithms for maximizing the total value of the picked item(s) in the Byzantine model, competing against the benchmark $\target := \OPT(\Gitems\setminus \gmax)$.

In this section we consider two classical secretary problems in the
Byzantine framework: that of (i) picking a single item, and (ii)
picking an independent set in a matroid, to maximize the expected
value of picked items. The main results of this section are:

\ValueMax*

The first claim is proved in \S\ref{sec:value-max}, and improves upon
the previous factor of $\Omega\big(\nicefrac{1}{(\log^* n)^2}\big)$ of \cite{BGSZ-ITCS20}. The second
claim is then proved in \S\ref{sec:matroids} 
and improves upon the previous factor of $\Omega\big(\nicefrac{1}{\log n}\big)$ when the rank
$r$ is sufficiently smaller than the total number of items $n$, also from~\cite{BGSZ-ITCS20}.

Note that these packing problems are related to the results we saw in
\S\ref{sec:byzant-PIP}
\iffull
and \S\ref{sec:one-plus-eps}
\fi: both the current
problems can also be modeled as PIPs, but we cannot assume that the capacities
(i.e., the right-hand sides of the PIP) are large, so we cannot apply
results we proved earlier. The single-item \emph{probability
  maximization} result from \S\ref{sec:single-item} also gives an
$\Omega(1/\log n)$-approximation for the value-maximization setting,
which is much weaker than the result here.

\subsection{Single Item Value Maximization}
\label{sec:value-max}

% \cj{Should state the result as a theorem.} \agnote{Added a theorem above:
%   should put in a wrap-up proof? }

For the value-maximization problem, we give three procedures, each of
which picks at most $K=O(\log^* n)$ items. By picking all items chosen
by any of the three procedures, we get an algorithm that picks
$O(\log^* n)$ items and gets expected value $\Omega(\target)$. Then
picking one of these items uniformly at random, we get a single item
with expected value $\Omega(\target/\log^* n)$. 

% The intuition for our main procedure:
The main idea of this algorithm is break the time horizon into $K$
intervals, and to iteratively get a better estimate of the optimum
value each time we fail to get a large expected value.  As in
\S\ref{sec:prob-max}, we divide time into $K$ intervals, and let
$\widehat{\mu}_i$ be the largest value of any item in interval
$I_i$. For the sake of intuition, assume that
$n\target\ge \widehat{\mu}_0 \ge \widehat{\mu}_1 \ge \dots \ge
\widehat{\mu}_K \ge \target$. Suppose we know that $\widehat{\mu}_i$
is an $\alpha$-approximation to $\target$. Pick $k$ randomly from
$[\log \alpha]$ and set the threshold $2^{-k}\widehat{\mu}_i$ for
interval $I_{i+1}$. Since we know that $\widehat{\mu}_{i+1}$ is
contained in the interval
$[2^{-k}\widehat{\mu}_i, 2^{-(k-1)}\widehat{\mu}_i]$ for some
$k\in [\log n]$, choosing this value of $k$ yields value
$\approx \widehat{\mu}_{i+1}$ with probability at least
$\frac{1}{\log \alpha}$. If
$\widehat{\mu}_{i+1} \ge \target \log \alpha $, then we get expected
value $\target$. Otherwise, we know that $\widehat{\mu}_{i+1}$ is a
$\log \alpha$-approximation to $\target$. Since $\widehat{\mu}_0$ is
an $n$-approximation to $\target$, we find inductively that each
$\widehat{\mu}_i$ is a $\log^{(i)} n$-approximation to $\target$. This
implies that we need consider only $K=O(\log^* n)$ many intervals.
This gives the main idea: the actual procedure needs to consider
situations where the interval maxima are not monotonically decreasing,
so we need more care.

Formally, the structural procedures are \Struct---defined in
\Cref{procedure:struct} of \S\ref{sec:single-item}---and a second procedure which simply picks
one random item.

\begin{algorithm}
  \caption{Procedure \textsc{Sample}}
  \label{alg:sample}
\begin{algorithmic}[1]
  \State
  Choose one item uniformly at random
\end{algorithmic}
\end{algorithm}

To explain the third and main search procedure, let us recall useful
notation from \S\ref{sec:single-item}: let $\rho_i$ be the maximum
value of any red item in interval $I_i$; define $\rho_i \gets -\infty$
if no such items exist. An interval $I_i$ (including $I_0$) is
\emph{high} if $\rho_i \ge \target$, and \emph{low} otherwise. An
interval $I_i$ (for $i\ge 1$) is \emph{\good} if the intervals $I_i$
and $I_{i-1}$ are both high, and moreover the item $\gmax$ does not
arrive in interval $I_{i-1}$.

Now in interval $I_i$, the new search procedure determines the
location of the previous high intervals, assuming that $I_{i-1}$ is
high.  
The threshold for
interval $I_i$ is chosen from among $\log^{(\ell+1)} n$
exponentially-spaced values centered around the $\widehat{\mu}$ value
of the previous {\good} interval, where $\ell$ is roughly the number
of previous {\good} intervals.

\begin{algorithm}
  \caption{Procedure \SearchII}
  \label{}
\begin{algorithmic}[1]
  \For{interval $i = 1, 2, \ldots, K$}
  \State
    $\Good_i \gets \{j\in [i-1] : \widehat{\mu}_j \ge \widehat{\mu}_{i-1} \text{ and } \widehat{\mu}_{j-1} \ge \widehat{\mu}_{i-1}\}$
  \State
    $\widehat{\ell}_i \gets |\Good_i|+1$
    \If{$\widehat{\ell}_i > 1$}
      \State{$\hat{\jmath}_i \gets \max(\Good_i)$}
      \Else
      \State $\hat{\jmath}_i \gets i-1$
    \EndIf
    \State
    $\widehat{k}_i\gets$ uniformly random integer in $[-\log^{(\widehat{\ell}_i)}n, \log^{(\widehat{\ell}_i)}n]$
    \State
    In interval $I_i$, pick the first item (if any) with value at least $\tau_i := 2^{\widehat{k}_i}\cdot \widehat{\mu}_{\hat{\jmath}_i}$.
  \EndFor
\end{algorithmic}
\end{algorithm}

Intuitively, $\Good_i$ is the index set of {\good} intervals among $I_1,\dots, I_{i-1}$, the number $\widehat{\ell}_i$ is one more than the number of previous {\good} intervals, and $\hat{\jmath}_i$ is the index of the most recent {\good} interval. Indeed, these statements are all true when $I_{i-1}$ is high (\Cref{lem:great-sequence}).

\paragraph{The  Analysis.}
It is clear that each procedure picks at most $K = O(\log^*n)$ items.

\begin{lemma}\label{lem:struct-search-good}
\Struct has expected value $\Omega(\target)$ unless the following properties both hold:
\begin{OneLiners}
\item[(i)] If $I_i$ and $I_j$ are high intervals with $i<j$, then $\rho_i > \rho_j$.
\item[(ii)] There are at most $\frac{K}{4}$ low intervals.
\end{OneLiners}
\Sample has expected value $\target$ unless the following property holds:
\begin{OneLiners}
\item[(iii)] The maximum value of any item is less than $n\target$.
\end{OneLiners}
\end{lemma}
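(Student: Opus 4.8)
The plan is to derive both claims from results already in hand, with essentially no new work. For \Struct, observe that \Cref{lem:alg1-good} is stated ``for any value of $K$'' and concerns exactly the procedure in \Cref{procedure:struct} that we reuse here, so it applies verbatim in the present regime $K=O(\log^* n)$. That lemma says: whenever property~(i) fails, or property~(ii) fails, \Struct picks some item of value at least $\target$ with probability $\Omega(1)$. Since all item values are nonnegative, the expected value of \Struct (the expectation of the total value of the items it selects, which dominates the expected maximum) is then at least $\Omega(1)\cdot\target=\Omega(\target)$. Note also that the property~(ii) hypothesis needed here---``at most $\tfrac{K}{4}$ low intervals'' being the bad case, i.e.\ \Struct does well once there are more than $\tfrac{K}{4}$ low intervals---is implied by the version in \Cref{lem:alg1-good} (where ``fewer than $\tfrac{K}{4}$'' is the bad case), so nothing is lost in the restatement. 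Thus the only genuine step is the elementary observation that ``picks a $\ge\target$ item with constant probability'' yields ``has expected value $\Omega(\target)$''.

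For \Sample the argument is a one-liner: suppose property~(iii) fails, so there is an item $e$ with $\val(e)\ge n\,\target$. \Sample selects each of the $n$ items with probability exactly $1/n$, hence selects $e$ with probability $1/n$; every other outcome contributes nonnegative value, so the expected value of \Sample is at least $\tfrac1n\cdot n\,\target=\target$.

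I do not expect any real obstacle: the substantive content lives in \Cref{lem:alg1-good}, and this lemma merely repackages it (together with the trivial \Sample bound) into the form used later in the value-maximization analysis of \SearchII. The only points worth stating explicitly are (a)~the convention that the ``expected value of a procedure'' refers to the expectation of the total value of the items it picks, so that a constant-probability good event already delivers $\Omega(\target)$; and (b)~that \Cref{lem:alg1-good} does hold for $K=O(\log^*n)$, which is immediate from its proof since none of the probability estimates there require $K$ to be large.
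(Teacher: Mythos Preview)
Your proposal is correct and follows essentially the same approach as the paper: the paper's proof simply cites \Cref{lem:alg1-good} to conclude that constant success probability yields $\Omega(\target)$ expected value for \Struct, and calls the \Sample claim ``immediate.'' You spell out both steps (including the minor wording discrepancy in property~(ii) and the applicability of \Cref{lem:alg1-good} for small $K$) more carefully than the paper does, but the substance is identical.
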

\begin{proof}
By \Cref{lem:alg1-good}, if either of $(i)$ or $(ii)$ fails then \Struct picks an item of value at least $\target$ with probability $\Omega(1)$, and hence has expected value $\Omega(\target)$. %If $(iii)$ fails, then consider the event that $\gSecMax$ arrives in $I_0$ and $g_1$ arrives in $[\frac{1}{4},1]$. The $I_0$ iteration of \Struct sets threshold $\target$, and picks an item (either $g_1$ or something that arrives earlier). This event has probability $\frac{1}{4}\cdot \frac{3}{4}$ and results in value $\target$, hence \Struct has expected value $\Omega(\target)$.
The second claim is immediate.
\end{proof}

Let $j_1,\dots, j_m$ denote the indices of the {\good} intervals. Let $\GoodInt_\ell := I_{j_\ell}$ denote the $\ell$-th {\good} interval and let $\widehat{\mu}(\Good_\ell) := \widehat{\mu}_{j_\ell}$ denote its most valuable item. Let $\GOOD_i = \{j_\ell \mid j_\ell < i\}$ denote the set of indices of {\good} intervals before interval $I_i$. 

\begin{lemma}\label{lem:great-sequence}
Suppose that item $\gmax$ arrives in $(\frac{3}{4},1]$ and property (i) of \Cref{lem:struct-search-good} holds. If interval $I_{i-1}$ is high, then $\Good_i=\GOOD_i$. In particular, if $I_i$ is {\good}, then $I_{i} = \GoodInt_{\widehat{\ell}_i}$, and $I_{\hat{\jmath}_i} = \GoodInt_{\widehat{\ell}_i-1}$. \end{lemma}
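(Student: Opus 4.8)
The plan is to describe both $\Good_i$ and $\GOOD_i$ purely through which intervals are \emph{high}, and then read off their equality from property~(i). First I would unpack the hypothesis $\gmax\in(\tfrac34,1]$: it places $\gmax$ outside every interval $I_0,\dots,I_K$. I would also assume, without loss of generality (perturb all values by independent infinitesimals, which does not affect the claimed guarantee), that the item values are distinct. Two consequences follow. (a)~For $j\ge 1$ the clause ``$\gmax$ does not arrive in $I_{j-1}$'' in the definition of nice is automatic, so $I_j$ is nice iff $I_j$ and $I_{j-1}$ are both high. (b)~For every $j\in\{0,1,\dots,K\}$ we have $\gmax\notin I_j$ and all other green items have value at most $\target$, so no green item in $I_j$ has value $>\target$; hence $\widehat{\mu}_j>\target$ can only be attained by a red item, which gives $\rho_j\ge\widehat{\mu}_j>\target$ and shows $I_j$ is high. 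Finally, since $I_{i-1}$ is high we have $\rho_{i-1}\ge\target$, and by distinctness $\rho_{i-1}\ne\target$, so $\rho_{i-1}>\target$; as $\gmax\notin I_{i-1}$ and the green items of $I_{i-1}$ have value $\le\target<\rho_{i-1}$, this yields $\widehat{\mu}_{i-1}=\rho_{i-1}>\target$.

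Next I would prove the two inclusions. For $\GOOD_i\subseteq\Good_i$: if $j<i$ and $I_j$ is nice then $I_{j-1}$ and $I_j$ are high, with indices $j-1<i-1$ and $j\le i-1$; together with the high interval $I_{i-1}$, property~(i) (taken with non-strict inequality when two indices agree) gives $\rho_{j-1}\ge\rho_{i-1}$ and $\rho_j\ge\rho_{i-1}$. Since $\widehat{\mu}_{j-1}\ge\rho_{j-1}$, $\widehat{\mu}_j\ge\rho_j$ and $\widehat{\mu}_{i-1}=\rho_{i-1}$, we get $\widehat{\mu}_{j-1}\ge\widehat{\mu}_{i-1}$ and $\widehat{\mu}_j\ge\widehat{\mu}_{i-1}$, i.e.\ $j\in\Good_i$. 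For $\Good_i\subseteq\GOOD_i$: let $j\in\Good_i$, so $1\le j\le i-1$ and $\widehat{\mu}_j,\widehat{\mu}_{j-1}\ge\widehat{\mu}_{i-1}>\target$. By consequence~(b), $\widehat{\mu}_j>\target$ forces $I_j$ high and $\widehat{\mu}_{j-1}>\target$ forces $I_{j-1}$ high (the boundary case $j-1=0$ is identical, since $\gmax\notin I_0$ as well). So $I_j$ and $I_{j-1}$ are both high and $\gmax\notin I_{j-1}$, hence $I_j$ is nice; since $j<i$, this gives $j\in\GOOD_i$.

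For the ``in particular'' part: if $I_i$ is nice then $I_{i-1}$ is high by definition, so the first part applies and $\Good_i=\GOOD_i$. Writing $\GOOD_i=\{j_1<\dots<j_m\}$ with $m=|\GOOD_i|$, this is exactly the ordered list of indices of nice intervals occurring before $I_i$, so appending the nice interval $I_i$ exhibits it as the $(m+1)$-st nice interval: $I_i=\GoodInt_{m+1}$, and $m+1=|\Good_i|+1=\widehat{\ell}_i$, so $I_i=\GoodInt_{\widehat{\ell}_i}$. When $\widehat{\ell}_i>1$ we have $\hat{\jmath}_i=\max(\Good_i)=\max(\GOOD_i)=j_m$, the index of the most recent nice interval before $I_i$, so $I_{\hat{\jmath}_i}=\GoodInt_m=\GoodInt_{\widehat{\ell}_i-1}$.

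The step I expect to be the crux is consequence~(b) combined with the strict inequality $\widehat{\mu}_{i-1}=\rho_{i-1}>\target$: this strictness (coming from distinctness of values) is exactly what prevents a \emph{low} interval from sneaking into $\Good_i$, and it is the bridge between $\Good_i$ (defined via the observed maxima $\widehat{\mu}$) and $\GOOD_i$ (defined via the adversarial values $\rho$). Once it is in place the rest is routine bookkeeping with property~(i); the only delicate point is the boundary index $j=1$, where one must reason about $I_0$ rather than an interval $I_{j-1}$ with $j-1\ge 1$, and this goes through precisely because $\gmax$ also avoids $I_0$.
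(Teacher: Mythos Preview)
Your proof is correct and follows essentially the same approach as the paper: translate the observed maxima $\widehat{\mu}_j$ into the adversarial maxima $\rho_j$ using that $\gmax$ avoids all intervals, then use property~(i) to identify $\widehat{\mu}_{i-1}=\rho_{i-1}$ as the threshold separating high from low among earlier intervals. You are in fact more careful than the paper's proof: you make the distinctness assumption explicit (the paper's claim ``$\rho_j=\widehat{\mu}_j$ for each $j$'' is strictly only valid for high intervals, and the strict inequality $\widehat{\mu}_{i-1}>\target$ you isolate is exactly what is needed to rule out a low interval entering $\Good_i$), you verify both inclusions separately, and you handle the boundary $j=1$ explicitly.
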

\begin{proof}
Since $\gmax$ arrives in $(\frac{3}{4},1]$, for each $j$ it holds that $\rho_j = \widehat{\mu}_j$. Since $I_{i-1}$ is high, property (i) implies that $\widehat{\mu}_{i-1}$ is minimal among the $\widehat{\mu}$ values of high intervals seen so far. Thus, for $j\le i-1$, interval $I_j$ is {\good} if and only if $\widehat{\mu}_j \ge \widehat{\mu}_{i-1}$. It follows from definition that $\Good_i=\GOOD_i$.
\end{proof}

It is convenient to let $\GoodInt_0$ denote the (high) interval immediately preceding $\GoodInt_1$, so that for all $\ell=1,\dots, |\GOOD_i|$, in interval $I_{j_\ell}=\GoodInt_{\ell}$, \SearchII's choice of $\hat{\jmath}_{j_\ell}$ satisfies $I_{\hat{\jmath}_{j_\ell}} = \GoodInt_{\ell-1}$. 

\begin{lemma}\label{lem:algo3-good}
Let $K= 2\log^*n+2$. Suppose that (i)-(iii) of \Cref{lem:struct-search-good} all hold. Then \SearchII has expected value $\Omega(\target)$.
\end{lemma}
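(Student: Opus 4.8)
The plan is to condition on the event $E$ that $\gmax$ arrives in the final segment $(\tfrac34,1]$, which has probability $\tfrac14$, and to prove that under $E$ together with properties (i)--(iii) of \Cref{lem:struct-search-good} the procedure \SearchII already has conditional expected value $\Omega(\target)$; since \SearchII never picks a negative-value item, multiplying by $\Pr[E]$ then gives the lemma. First I would record the deterministic consequences of this conditioning. Because $\gmax$ lies outside $(\tfrac14,\tfrac34]$, the only green value exceeding $\target$ is out of range, so every high interval $I_i$ has $\widehat\mu_i=\rho_i$; the \good{}ness of each interval is pinned down (the clause ``$\gmax\notin I_{i-1}$'' is automatic); and, by the same counting as in the proof of \Cref{lemma:singleProbMaxMain}, property (ii) with $K=2\log^* n+2$ yields at least $\log^* n$ \good intervals $\GoodInt_1,\dots,\GoodInt_m$. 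Writing $\mu_\ell:=\widehat\mu(\Good_\ell)$ for $\ell\ge1$ and $\mu_0$ for the maximum value in the high interval immediately preceding $\GoodInt_1$, property (i) forces $\mu_0>\mu_1>\dots>\mu_m\ge\target$, and property (iii) gives $\mu_0<n\target$. Finally \Cref{lem:great-sequence} tells me that in interval $\GoodInt_\ell$ the procedure \SearchII draws $\widehat k$ uniformly from $[-\log^{(\ell)}n,\log^{(\ell)}n]$ and uses threshold $2^{\widehat k}\mu_{\ell-1}$.

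The technical core would be a single-interval estimate: if $\mu_{\ell-1}\le(\log^{(\ell-1)}n)\,\target$ and $\log^{(\ell)}n\ge1$, then with probability $\Omega\big(1/\log^{(\ell)}n\big)$ the run of \SearchII on $\GoodInt_\ell$ picks an item of value at least $\mu_\ell/2$. To establish it I would look at the ``ideal'' exponent $k^*:=\lfloor\log_2(\mu_\ell/\mu_{\ell-1})\rfloor$: the chain $\target\le\mu_\ell\le\mu_{\ell-1}\le(\log^{(\ell-1)}n)\,\target$ places $k^*$ in the sampled window $[-\log^{(\ell)}n,\log^{(\ell)}n]$ (up to the rounding that the window is designed to absorb), so $\widehat k=k^*$ with probability $\ge 1/(2\lfloor\log^{(\ell)}n\rfloor+1)$; on that event the threshold falls in $[\mu_\ell/2,\mu_\ell]$, and since $\GoodInt_\ell$ contains an item of value exactly $\widehat\mu_{j_\ell}=\mu_\ell$, \SearchII is forced to pick some item there, necessarily of value $\ge\mu_\ell/2\ge\target/2$.

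To turn this into a win I would run a first-crossing argument. Put $\beta_\ell:=\mu_\ell/\target$, so $\beta_0>\beta_1>\dots>\beta_m\ge1$ while $\beta_0<n=\log^{(0)}n$ and $\log^{(\ell)}n\le1$ for $\ell\ge\log^* n$; hence there is a least index $\ell_1$ with $\beta_{\ell_1}>\log^{(\ell_1)}n$, it satisfies $\ell_1\le\log^* n\le m$, and $\beta_\ell\le\log^{(\ell)}n$ (so $\log^{(\ell)}n\ge1$) for all $\ell<\ell_1$. If $\log^{(\ell_1)}n\ge1$ I would apply the single-interval estimate at $\ell=\ell_1$: \SearchII gets value $\ge\mu_{\ell_1}/2=\beta_{\ell_1}\target/2$ with probability $\Omega(1/\log^{(\ell_1)}n)$, so its conditional expected value is $\Omega\big(\beta_{\ell_1}/\log^{(\ell_1)}n\big)\,\target=\Omega(\target)$ since $\beta_{\ell_1}>\log^{(\ell_1)}n$. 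If instead $\log^{(\ell_1)}n<1$, then $\log^{(\ell_1-1)}n$ is a constant in $[1,2)$ and $\beta_{\ell_1-1}\le\log^{(\ell_1-1)}n$, and applying the estimate at $\ell=\ell_1-1$ gives conditional expected value $\Omega\big(\beta_{\ell_1-1}/\log^{(\ell_1-1)}n\big)\,\target=\Omega(\target)$, using $\beta_{\ell_1-1}\ge1$ (the degenerate cases $\ell_1\le1$ arise only for bounded $n$). Either way the expected value of \SearchII conditioned on $E$ is $\Omega(\target)$, and the lemma follows.

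I expect the main obstacle to be the bookkeeping inside the single-interval estimate --- certifying that the ideal exponent $k^*$ genuinely lands in \SearchII's sampled window, which forces one to track the exact rounding in the window's definition and in the iterated-logarithm inequalities --- together with the boundary accounting (exactly why the constant in $K=2\log^* n+2$ is chosen as it is) that guarantees $\ell_1$, and $\ell_1-1$ in the second case, do not run past the supply of $m\ge\log^* n$ \good intervals. By contrast the probabilistic part is essentially free: once we condition on $E$, everything except \SearchII's own coin flips is frozen, so no concentration or dependency arguments are needed at this stage.
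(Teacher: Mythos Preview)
Your proposal is correct and follows essentially the same approach as the paper: condition on $\gmax\in(\tfrac34,1]$, count at least $\log^* n$ \good intervals, locate the first index where $\widehat\mu(\GoodInt_\ell)$ exceeds $\target\log^{(\ell)}n$, and argue that in that interval the random exponent hits a useful value with probability $\Omega(1/\log^{(\ell)}n)$, yielding value $\Omega(\target\log^{(\ell)}n)$. Your ``first-crossing'' index $\ell_1$ is exactly the paper's $\ell$, and your single-interval estimate is what the paper proves inline; your explicit treatment of the boundary case $\log^{(\ell_1)}n<1$ is a bit more careful than the paper's exposition but not a different idea.
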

\begin{proof}
By property $(ii)$, the number of {\good} intervals is at least 
\[K - \tfrac{K}{4} - \tfrac{K}{4} - 1  ~=~ \tfrac{K}{2} -1 ~\ge~ \log^*n.\]
Let $\ell$ be the minimal number $l\ge 1$ satisfying $\widehat\mu(\GoodInt_l)
\ge \target\log^{(l)}n $. (This is well defined---the inequality is always satisfied by $l=\log^*n$.)
We claim that
\[\target ~\le~ \widehat\mu(\GoodInt_{\ell-1}) ~<~ \target \log^{(\ell-1)}n.\] 
The first inequality follows from the fact that $\GoodInt_\ell$ is {\good} (in
particular, it is high). When $\ell=1$, the second inequality holds by property $(iii)$; when $\ell>1$, it holds by the minimality of $\ell$.
Therefore, there is some $k\in [-\log^{(\ell)}n, \log^{(\ell)}n]$ such that 
\[2^k \widehat{\mu}(\GoodInt_{\ell}) ~\le~ \target \log^{(\ell)}n ~<~ 2^{k+1} \widehat{\mu}(\GoodInt_{\ell}).\]
For ease of notation, let $t = j_\ell$ so that $I_t = \GoodInt_{\ell}$. Condition on $\gmax$ arriving in $(\frac{3}{4},1]$---which happens with probability $\frac{1}{4}$---and consider the choice of \SearchII in interval $\GoodInt_\ell$. Since property $(i)$ also holds, \Cref{lem:great-sequence} applies. In particular, $\widehat{\ell}_t = \ell$ and $\widehat\mu_{\hat{\jmath}_t}=\widehat\mu_{j_{\ell-1}} = \widehat\mu(\GoodInt_{\ell-1})$. Therefore $\widehat{k}_t$ is drawn from $[-\log^{(\ell)}n, \log^{(\ell)}n]$, and the threshold $\tau_t = 2^{\widehat{k}_t}\cdot \widehat{\mu}(\GoodInt_{\ell-1})$. With probability $\Omega\Big(\frac{1}{\log^{(\ell)}n}\Big)$, the algorithm chooses $\widehat{k}_t=k$. In this case, since $\widehat{\mu}_t = \widehat{\mu}(\GoodInt_{\ell}) \ge \target\log^{(\ell)}n\ge \tau_t$, \SearchII will pick an item in $I_t$. Thus \SearchII gets expected value at least 
\[\tfrac{1}{4}\cdot \Omega(1)\cdot \tfrac{1}{\log^{(\ell)}n} \cdot  \target\log^{(\ell)}n ~=~ \Omega(\target). \qedhere \]  
\end{proof}

\begin{proof}[Proof of~\Cref{thm:mainValue}(i)]
  We now run one of the three algorithms \Struct, \Sample, and
  \SearchII uniformly at
  random. By~\Cref{lem:struct-search-good,lem:algo3-good} the picked
  algorithm gives expected value at least $\target$ with probability
  $\nf13$, while picking at most $K = O(\log^* n)$ items. Actually
  selecting a random one of these $K$ items proves the claim.
\end{proof}

%----------------------------------------------

\subsection{Byzantine Secretary for Matroids}
\label{sec:matroids}

We consider the setting where the items are the ground set of an
arbitrary matroid and the algorithm can choose any independent set of
items. The algorithm is given the rank $r$ of the matroid and the
number $n$ of items, but only learns the matroid structure as the
items arrive. The main result of this section is \Cref{thm:mainValue}(ii).
% \begin{theorem}\label{thm:matroid}
%   There is an algorithm that selects an independent set of items with expected value at least \[\E[ALG] \ge \Omega\left(\frac{\OPT(\Gitems\setminus\gmax)}{(\log^*n)^2\cdot \log r}\right).\]
% \end{theorem}
The factor of $O((\log^*n)^2\cdot \log r)$ improves upon the previous
factor of $O(\log n)$ if the rank of the matroid $r$ is much smaller
than the number of items $n$. For the rest of the section, we
assume the benchmark $\target:= \OPT(\Gitems\setminus\gmax)$ satisfies $\target \ge 2\val(\gSecMax)$. This is
without loss of generality: indeed, we can run the algorithm from
\S\ref{sec:value-max} with probability $\frac12$ (which suffices when
$\target \le 2\val(\gSecMax)$) and run the algorithm below with the remaining
probability.

\paragraph{The Algorithm.} The idea of our algorithm is to use the approach from
\S\ref{sec:value-max} that, at each step, it either gets items of high
value, or gets a better estimate of the highest-value items.  Define
intervals $I_0$ and $I_i$, and values $\rho_i$ and $\widehat{\mu}_i$
for each interval as in \S\ref{sec:single-item}. Again the algorithm will pick one of the following 3 procedures uniformly at random and use it for the whole instance.

\paragraph{Procedure 1.} Run the single-item value-maximization algorithm from \S\ref{sec:value-max}. 

\paragraph{Procedure 2.} The second procedure is adapted from the
$O(\log n)$-competitive algorithm of~\cite{BGSZ-ITCS20}. The intuition
is as follows: most of the value in $\OPT$ comes from items with value
in $[\val(\gSecMax)/2r, \val(\gSecMax)]$. If this value interval is partitioned into
exponentially-separated value-levels, an average level contains a
$\frac{1}{\log r}$ fraction of the value of $\OPT$. The procedure estimates $\val(\gSecMax)$ by the max-value item in a random interval, uses this estimate to choose a random-value interval, and sets a threshold to get the value of this interval. More precisely, the procedure is the following:

\begin{algorithm}
  \caption{Procedure \IBGSZ}
  \label{algo:IBGSZ}
\begin{algorithmic}[1]
  \State
  $i \gets$ uniformly random integer in $[K]$
  \State
  $k \gets$ uniformly random integer in $[\log (2r^2)]$
  \State
  $\widehat{\mu}_i = $ max value of an item arriving in $I_i$
  \State
  After $I_i$, greedily chose items with value at least $2^{-k}\widehat{\mu}_i$.
 \end{algorithmic} 
\end{algorithm}

\paragraph{Procedure 3.} The third procedure attempts to capture the value of large red items. It is simpler than \SearchII from \S\ref{sec:value-max}, but similar in spirit:
\begin{algorithm}
  \caption{Procedure \textsc{SearchIII}}
  \label{algo:searchIII}
\begin{algorithmic}[1]
\State
  $i \gets$ uniformly random integer in $[K]$
\State
  $k \gets$ uniformly random integer in $[-\log^{(i)}(n), \log^{(i)}(n)]$
\State
  $\widehat{\mu}_{i-1} = $ max value of an item arriving in $I_{i-1}$
\State
  Pick the first item arriving in (or after) $I_i$ with value at least $2^{k} \widehat{\mu}_{i-1}$. If no such item arrives, pick nothing.
\end{algorithmic}
\end{algorithm}

\paragraph{The  Analysis.}

Similarly to \S\ref{sec:single-item}, we analyze the algorithm by breaking into different cases based on the following structural properties of the instance. 

\begin{OneLiners}
\item[(i)] No item has value more than $n\cdot r\cdot \val(\gSecMax)$.
\item[(ii)] $\target \ge 4\val(\gSecMax)$.
\item[(iii)] For every $i$, it holds that $\rho_i \ge r\cdot \val(\gSecMax)$. 
\end{OneLiners}

\noindent Recall that $\rho_i$ denotes the maximum value of a red item arriving in interval $I_i$.
%----
\newcommand{\singleitem}{the algorithm of \S\ref{sec:value-max}\xspace}
%----
\begin{lemma}\label{lem:single-item-good}
If either of properties (i) or (ii) fail, then \singleitem gets expected value $\Omega(\frac{\target}{\log^*n})$.
\end{lemma}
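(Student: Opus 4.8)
The plan is a short two-case argument according to which of properties~(i), (ii) fails, in each case reducing to the single-item value-maximization guarantee of \Cref{thm:mainValue}(i). Two preliminary observations drive everything. First, running the algorithm of \S\ref{sec:value-max} verbatim on the ground set of the matroid is legitimate (a single item is always independent), it picks one item, and by \Cref{thm:mainValue}(i) together with the fact that in the single-item setting $\OPT(\Gitems\setminus\gmax)=\val(\gSecMax)$, it has expected value $\Omega\big(\val(\gSecMax)/\log^* n\big)$. Second, $\target=\OPT(\Gitems\setminus\gmax)\le r\cdot\val(\gSecMax)$, since any independent set has at most $r$ elements and every green item other than $\gmax$ has value at most $\val(\gSecMax)$.

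Case that property~(ii) fails: then $\target<4\val(\gSecMax)$, i.e.\ $\val(\gSecMax)>\target/4$. Plugging this into the bound $\Omega(\val(\gSecMax)/\log^* n)$ from the first observation immediately gives expected value $\Omega(\target/\log^* n)$, as required.

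Case that property~(i) fails: then there is some item $e^\star$ with $\val(e^\star)>n\cdot r\cdot\val(\gSecMax)$. Here I would invoke the \Sample subroutine inside the \S\ref{sec:value-max} algorithm: \Sample is the chosen subroutine with probability $\tfrac13$, it outputs a uniformly random item (so $e^\star$ with probability $\tfrac1n$), and the final ``subsample to one item'' step is vacuous because \Sample already returns a single item. Hence the overall expected value is at least $\tfrac13\cdot\tfrac1n\cdot\val(e^\star)>\tfrac13\,r\,\val(\gSecMax)\ge\tfrac13\,\target=\Omega(\target)$, which is certainly $\Omega(\target/\log^* n)$. Note $e^\star$ may coincide with $\gmax$; this is harmless, since the algorithm is allowed to select it.

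There is no genuine obstacle in this lemma; the only points requiring a little care are recording the inequality $\target\le r\,\val(\gSecMax)$ correctly and tracking the two constant/polynomial sampling losses (the factor $\tfrac13$ from picking \Sample and the factor $\tfrac1n$ from the uniform choice), while observing that the subsample-to-one step costs nothing when the selected subroutine returns a single item. Everything else is immediate from results already established, in particular \Cref{thm:mainValue}(i) and (for the first case) the relation between the matroid benchmark $\target$ and $\val(\gSecMax)$.
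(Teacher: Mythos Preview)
Your proof is correct and follows essentially the same two-case argument as the paper. The only cosmetic difference is in case~(i): the paper states that the single-item algorithm outputs a uniformly random item with probability $\Omega(1/\log^*n)$ (consistent with the ``run all three procedures and subsample from the union'' description at the start of \S\ref{sec:value-max}), yielding $\Omega(\target/\log^*n)$, whereas you use the ``pick one of three procedures uniformly'' description from the proof of \Cref{thm:mainValue}(i) to get probability $\tfrac13$ and hence the stronger $\Omega(\target)$; both suffice and the underlying observation $\target\le r\,\val(\gSecMax)$ is the same.
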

\begin{proof}
Suppose property (i) fails. With probability $\Omega(\frac{1}{\log^*n})$, \singleitem picks a uniformly random item. This item has expected value at least $r\cdot \val(\gSecMax)\ge \target$, so \singleitem has expected value $\Omega(\frac{\target}{\log^*n})$.

Now suppose property (ii) fails. Then as proved in \S\ref{sec:value-max}, \singleitem has expected value $\Omega(\frac{\val(\gSecMax)}{\log^* n}) \ge \Omega(\frac{\target}{\log^* n})$.
\end{proof}

\begin{lemma}\label{lem:int-bgsz-good}
If properties (i) and (ii) hold and property (iii) fails, then \IBGSZ\xspace has expected value at least $\Omega(\frac{\target}{K^2\log r})$. 
\end{lemma}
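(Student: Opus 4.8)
The plan is to isolate one favorable interval and one favorable run of the randomized procedure, and to show that conditioned on an event of probability $\Omega(1/K^2)$ the procedure \IBGSZ collects, in expectation, an $\Omega(1/\log r)$ fraction of $\target$.

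First I would replace the benchmark by a conveniently structured independent set. Since property~(ii) gives $\target\ge 4\val(\gSecMax)$, deleting $\gSecMax$ from an optimal green independent set loses at most $\val(\gSecMax)\le\target/4$, so there is an independent set $O^*\subseteq\Gitems\setminus\{\gmax,\gSecMax\}$ with $\val(O^*)\ge\tfrac34\target$ and every item of value at most $\val(\gSecMax)$. At most $r$ items of $O^*$ have value below $\val(\gSecMax)/(2r)$, together worth at most $\val(\gSecMax)/2\le\target/8$, so the items of $O^*$ of value in $[\val(\gSecMax)/(2r),\val(\gSecMax)]$ are worth $\ge\tfrac58\target$; partition these into $O(\log r)$ dyadic value-levels $L_1,L_2,\ldots$, where $L_j$ collects items of value in $(\theta_j,2\theta_j]$ with $\theta_j:=2^{-j}\val(\gSecMax)$. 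Thus $\sum_j\val(L_j)\ge\tfrac58\target$, each $L_j$ is independent, and all items of $L_j$ clear the threshold $\theta_j$.

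Now fix an interval $I_{i^*}$ for which property~(iii) fails, i.e.\ $\rho_{i^*}<r\,\val(\gSecMax)$, and let $\cE$ be the event that the procedure draws $i=i^*$, that $\gSecMax$ arrives in $I_{i^*}$, and that $\gmax$ does not arrive in $I_{i^*}$. Since $i$ is the procedure's own coin, $\gSecMax$ and $\gmax$ have independent uniform arrival times, and $|I_{i^*}|=\tfrac1{2K}$, we get $\Pr[\cE]\ge\tfrac1K\cdot\tfrac1{2K}\cdot\bigl(1-\tfrac1{2K}\bigr)=\Omega(1/K^2)$. The key point is that on $\cE$ the maximum value in $I_{i^*}$ is $\widehat{\mu}_{i^*}=\max\{\rho_{i^*},\val(\gSecMax)\}$ — this is deterministic and lies in $[\val(\gSecMax),\,r\,\val(\gSecMax))$ — while the arrival times of all green items other than $\gSecMax,\gmax$ remain independent and uniform. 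For each level $j$ put $k_j:=\lceil\log(\widehat{\mu}_{i^*}/\theta_j)\rceil$; since $\widehat{\mu}_{i^*}/\theta_j\in[2,2r^2)$ these are consecutive (hence distinct) integers inside the range $\{1,\dots,\log(2r^2)\}$ from which the procedure draws $k$. Conditioning further on $k=k_j$ (probability $1/\log(2r^2)$), the threshold becomes $\theta:=2^{-k_j}\widehat{\mu}_{i^*}\in(\theta_j/2,\theta_j]$, so every item of $L_j$ clears it; the procedure then greedily builds, from the items arriving after $I_{i^*}$ of value $\ge\theta$, an inclusion-maximal independent subset $A$. In a matroid any inclusion-maximal independent subset of a fixed set is a basis of it, so $|A|\ge|L_j\cap\{\text{arrives after }I_{i^*}\}|$, and hence $\val(A)\ge\theta|A|\ge(\theta_j/2)\,|L_j\cap\{\text{arrives after }I_{i^*}\}|$. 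Each item of $L_j$ is distinct from $\gSecMax,\gmax$ and so lands after $I_{i^*}$ with probability $1-\mathrm{end}(I_{i^*})\ge\tfrac14$ independently of $\cE$ and of $k$, giving $\E[\,|L_j\cap\{\text{after }I_{i^*}\}|\mid\cE,k=k_j\,]\ge\tfrac14|L_j|$, and thus conditional expected value at least $(\theta_j/8)|L_j|\ge\val(L_j)/16$.

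Averaging over the $\log(2r^2)$ choices of $k$ and using that the $k_j$ are distinct,
\[
  \E[\,\text{value}\mid\cE\,]\;\ge\;\frac1{\log(2r^2)}\sum_j\frac{\val(L_j)}{16}\;\ge\;\frac1{16\log(2r^2)}\cdot\frac58\,\target\;=\;\Omega\!\Bigl(\frac{\target}{\log r}\Bigr),
\]
and multiplying by $\Pr[\cE]=\Omega(1/K^2)$ yields the claimed $\Omega(\target/(K^2\log r))$. I expect the main obstacle to be the circular dependence between the estimate $\widehat{\mu}_{i^*}$, the threshold it induces, and the random arrivals that control how much greedy then collects: the conditioning event $\cE$ is designed precisely to freeze $\widehat{\mu}_{i^*}$ to a deterministic value while keeping the relevant green arrivals independent, after which the matroid basis comparison is routine. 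A secondary subtlety is the need to compare against $O^*$ rather than the true optimum, so that no item whose arrival we count is correlated with $\cE$.
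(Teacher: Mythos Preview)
Your proof is correct and follows essentially the same approach as the paper's: condition on the event that \IBGSZ draws the failing interval $i^*$, that $\gSecMax$ lands in $I_{i^*}$, and that $\gmax$ does not (probability $\Omega(1/K^2)$), use this to pin $\widehat\mu_{i^*}\in[\val(\gSecMax),r\val(\gSecMax)]$, bucket the high-value part of an optimal independent set into $O(\log r)$ dyadic levels, and argue that greedy captures an $\Omega(1)$ fraction of any single level's value. Your version differs only cosmetically---you remove $\gSecMax$ from the benchmark set $O^*$ upfront rather than subtracting $\val(\gSecMax)$ later, and you make the matroid rank comparison $|A|\ge|L_j\cap\{\text{after }I_{i^*}\}|$ explicit where the paper just says ``greedy gets at least half of $\OPT$'s value in that level''---but the argument is the same.
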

\begin{proof}
Fix $i$ such that $\rho_i  \le r\cdot \val(\gSecMax)$. Condition on the event that (a) \IBGSZ\xspace chooses this $i$ in line 1 and (b) $\gSecMax$ arrives in $I_i$ but $\gmax$ does not, which happens with probability $\frac{1}{K}\cdot\frac{1}{2K}(1-\frac{1}{2K}) \ge \frac{1}{3K^2}$. Now $\widehat{\mu}_i = \max\{\rho_i, \val(\gSecMax2)\}$, which implies that 
\begin{equation}\label{eq:mu-good-estimate}
\val(\gSecMax) ~\le~ \widehat{\mu}_i ~\le~ r\cdot \val(\gSecMax).
\end{equation}
Let $OPT$ denote the items in the optimal solution of green items excluding $\gmax$.

Let $L^* = \big\{x\in OPT\mid \val(x) > \frac{\val(\gSecMax)}{2r}\big\}$. Since $|OPT| \le r$, the items in $OPT\setminus L^*$ have total value at most $\frac{\val(\gSecMax)}{2}\le \frac{\target}{2}$. Thus $\val(L^*) \ge \frac{\target}{2}$. Given the conditioning, all items in $L^*\setminus \{\gSecMax\}$ arrive at independent uniformly random times. In particular, since $I_i\sse [\frac{1}{4}, \frac{3}{4}]$, the expected value of items in $L^*$ that arrive after $I_i$ is at least $\frac{1}{4}(\frac{\target}{2} - \val(\gSecMax)) \ge \frac{\target}{16}$ (using property (ii)). The value of each of these items is in $[\frac{\val(\gSecMax)}{2r}, \val(\gSecMax)]$. By \eqref{eq:mu-good-estimate}, each of these items has value in some value-level $[2^{-k} \widehat{\mu}_i, 2^{-k+1} \widehat{\mu}_i)$ with $k\in [\log (2r^2)]$.
An average level therefore contains at least $\Omega(\frac{\target}{\log r})$ value from $OPT$. The greedy algorithm on one such value-level gets at least half of $OPT$'s value in that level, so \IBGSZ\xspace gets expected value $\Omega(\frac{\target}{\log r})$ as well.
Since this holds when we condition on an event of probability $\Omega(\frac{1}{K^2})$, the result follows.
\end{proof}

\begin{lemma}\label{lem:big-red-good}
Let $K= \log^*(n)$. If properties (ii) and (iii) hold, then procedure \textsc{SearchIII} gets expected value $\Omega(\frac{\target}{\log^* n})$
\end{lemma}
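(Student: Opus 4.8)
The plan is to follow the proof of \Cref{lem:algo3-good}, specialized to the situation that property~(iii) forces, which is in fact simpler. First I would condition on $\gmax$ arriving in $(\tfrac34,1]$; this event has probability $\tfrac14$ and is independent of \textsc{SearchIII}'s internal coins. Under the conditioning, for every $j\in\{0,1,\dots,K\}$ the only items in $I_j$ besides red ones are green items of value at most $\val(\gSecMax)\le r\val(\gSecMax)\le\rho_j$, so $\widehat\mu_j=\rho_j$; moreover property~(iii) gives $\rho_j\ge r\val(\gSecMax)\ge\target$, so every interval is high, hence every $I_i$ with $1\le i\le K$ is \good{} and $\GoodInt_\ell=I_\ell$. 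Thus the ``most recent \good{} interval'' bookkeeping of \SearchII trivializes, and \textsc{SearchIII}, conditioned on having chosen interval $i$, sets exactly the threshold $\tau_i=2^{\widehat k_i}\rho_{i-1}$ with $\widehat k_i$ uniform on $[-\log^{(i)}n,\log^{(i)}n]$, i.e., it mimics the interval-$i$ step of \SearchII.

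Next I would run the iterated-logarithm refinement of \Cref{lem:algo3-good}. Let $\ell$ be the smallest $l\ge1$ with $\rho_l\ge\target\log^{(l)}n$; this is well defined with $K=\Theta(\log^* n)$ since $\rho_K\ge r\val(\gSecMax)\ge\target\ge\target\log^{(K)}n$. For this $\ell$ I claim $\target\le\rho_{\ell-1}<\target\log^{(\ell-1)}n$: the lower bound is the highness of $I_{\ell-1}$ (property~(iii)); for $\ell>1$ the upper bound is the minimality of $\ell$, and for $\ell=1$ the upper bound is the a-priori estimate $\rho_0\le nr\val(\gSecMax)=\poly(n)\cdot\target$, which is where property~(i) enters (if property~(i) fails then by \Cref{lem:single-item-good} Procedure~1 already gives expected value $\Omega(\target/\log^* n)$, so the combined algorithm is unharmed). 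Since $\rho_{\ell-1}\in[\target,\target\log^{(\ell-1)}n)$ and $\log^{(\ell-1)}n=2^{\log^{(\ell)}n}$, there is an integer $k\in[-\log^{(\ell)}n,\log^{(\ell)}n]$ with $\tfrac12\target\log^{(\ell)}n<2^k\rho_{\ell-1}\le\target\log^{(\ell)}n$; the upper constraint on $k$ uses $\rho_{\ell-1}\ge\target$ and the lower one uses $\rho_{\ell-1}<\target\log^{(\ell-1)}n$.

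To conclude, condition additionally on \textsc{SearchIII} choosing $i=\ell$ (probability $\tfrac1K$) and $\widehat k_\ell=k$ (probability $\Omega(1/\log^{(\ell)}n)$). Then its threshold is $\tau_\ell=2^k\rho_{\ell-1}\le\target\log^{(\ell)}n\le\rho_\ell=\widehat\mu_\ell$, so $I_\ell$ contains an item of value at least $\tau_\ell$ and \textsc{SearchIII} picks one (the first such item in or after $I_\ell$), of value at least $\tau_\ell>\tfrac12\target\log^{(\ell)}n$. Multiplying the three probabilities by this value gives expected value at least $\tfrac14\cdot\tfrac1K\cdot\Omega(1/\log^{(\ell)}n)\cdot\tfrac12\target\log^{(\ell)}n=\Omega(\target/K)=\Omega(\target/\log^* n)$; and a single picked item is automatically independent in the matroid, so this is also the expected value of \textsc{SearchIII}.

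The step I expect to be the main obstacle is the base case $\ell=1$ of the recursion: one needs an a-priori upper bound on $\rho_0$ (equivalently, on the initial spread $\rho_0/\target$) to guarantee that the iterated logarithm collapses to $O(1)$ within $K=O(\log^* n)$ refinement steps. This is precisely where property~(i) is used, via $\rho_0\le nr\val(\gSecMax)$, giving a $\poly(n)$ spread and $\log^*(\poly(n))=\log^* n+O(1)$ steps; making this dependence (and the fallback to \Cref{lem:single-item-good} when property~(i) fails) explicit is the delicate part. The remaining points are routine: keeping track of the extra $\tfrac1K$ factor that \textsc{SearchIII} incurs by committing to one random interval rather than running in all of them like \SearchII; handling the degenerate regime near the bottom of the recursion where $\log^{(\ell)}n$ is $O(1)$ so that the range of $k$ is tiny (absorbed by enlarging $K$ by a constant); and discarding zero-value (loop) elements so that the single picked item is genuinely independent.
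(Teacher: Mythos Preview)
Your approach is essentially the paper's: pick the minimal index $\ell$ at which the interval maximum clears an iterated-log threshold, then argue that the random pair $(i,k)$ hits $(\ell,k^*)$ with probability $\Omega\big(1/(K\log^{(\ell)}n)\big)$ and yields value $\Omega(\target\log^{(\ell)}n)$. The conditioning on $g_{\max}\in(\tfrac34,1]$ is unnecessary (the paper works directly with $\widehat\mu_i$; property~(i) already bounds $\widehat\mu_i$ regardless of where $g_{\max}$ lands), but harmless.

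The substantive difference is the \emph{scale} in your definition of $\ell$. You use $\rho_l\ge\target\log^{(l)}n$; the paper uses $\widehat\mu_l\ge r\val(g_2)\cdot\log^{(l)}n$. Your scale breaks the base case: for $\ell=1$ you need $\rho_0<\target\cdot n$, but property~(i) only gives $\rho_0\le n\cdot r\val(g_2)$, and $r\val(g_2)$ can exceed $\target$ by a factor of $r$ (the general relation is only $\target\le r\val(g_2)$, not the reverse). Your proposed patch---that the spread $\rho_0/\target$ is merely $\poly(n)$ so $\log^*$ changes by $O(1)$---does not repair this: the obstruction is not the number $K$ of intervals but the algorithm's fixed range $[-\log n,\log n]$ for $k$ at $\ell=1$. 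When $\rho_0\approx nr\val(g_2)$ and you aim the threshold at $\target\log n$, the required $k$ is about $-\log n-\log r$, outside the range once $r$ is moderately large.

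The paper's scale avoids this cleanly: with threshold $r\val(g_2)\cdot\log^{(\ell)}n$, the $\ell=1$ upper bound $\widehat\mu_0< n\cdot r\val(g_2)$ is \emph{exactly} property~(i), the required $k$ lands in $[-\log^{(\ell)}n,\log^{(\ell)}n]$ for every $\ell$, and the final value bound $r\val(g_2)\cdot\log^{(\ell)}n\ge\target\log^{(\ell)}n$ uses only $r\val(g_2)\ge\target$. (One can also rescue your version by splitting on whether $r\val(g_2)\gtrless\target\log n$ and aiming the threshold at $r\val(g_2)$ in the large case, but the paper's choice makes the argument uniform.) If you rerun your argument with the scale $r\val(g_2)\log^{(l)}n$ in place of $\target\log^{(l)}n$, it goes through as written.
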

\begin{proof}
Fix the minimal index $i\ge 0$ for which $\mu_i \ge r\cdot
\log^{(i)}n \cdot \val(\gSecMax)$. Since property (iii) holds, the index
$i=\log^*(n)$ satisfies the desired constraint, hence $i$ is well-defined. Furthermore, since property $(i)$ holds, $i\ge 1$. Then we
have the following inequalities: 
\[
r\cdot \val(\gSecMax)~\le~ \mu_{i-1} ~\le~
  \log^{(i-1)}(n)\cdot r \cdot \val(\gSecMax),
  \] 
  the first by property (ii)
and the second by minimality of $i$. It follows that there is a $k\in
[\pm \log^{(i)}(n)]$ such that the value level defined by
$2^k\mu_{i-1}$ contains an item of value at least $r\cdot
\log^{(i)}n \cdot \val(\gSecMax)$. With probability $\frac{1}{K\cdot
  2\log^{(i)}(n)}$, Procedure~3 picks interval $i$ and level $k$. In
this case, it gets value at least 
\[
r\cdot \log^{(i)}n \cdot \val(\gSecMax)
  ~\ge~ \log^{(i)}n \cdot \target.
  \] 
  Thus, \textsc{SearchIII} gets expected value $\Omega(\frac{\target}{K}) = \Omega(\frac{\target}{\log^* n})$.
\end{proof}

We now complete the proof of \Cref{thm:mainValue}(ii).% with an algorithm that picks randomly between the three procedures.

\begin{proof}[Proof of \Cref{thm:mainValue}(ii)] %\Cref{thm:matroid}]
Consider the algorithm that uses $K = \log^* n$ and runs one of the above three procedures, each with probability $\frac{1}{3}$. By \Cref{lem:single-item-good,lem:int-bgsz-good,lem:big-red-good}, at least one of these procedures has expected value $\Omega\big(\frac{\target}{(\log^*n)^2\cdot \log r}\big)$. With probability $\frac{1}{3}$, the algorithm picks the right procedure. Hence, the same value guarantee applies up to a constant.
\end{proof}

%%% Local Variables:
%%% mode: latex
%%% TeX-master: "main"
%%% End:

\bigskip
\bigskip
\appendix
\noindent {\LARGE \bfseries Appendix}

% !TeX root = main.tex
% !TEX root = main.tex
% 

\section{Useful Prior Results}

Let $\triangle^{d-1} := \big\{ \mathbf{p} \in [0,1]^d \mid \| \mathbf{p}\|_1 = 1 \big\}$ be the
\emph{probability simplex} in $\R^d$, and $\blacktriangle^d := \big\{ \mathbf{p} \in [0,1]^d \mid \| \mathbf{p}\|_1 \leq 1 \big\}$ be the
\emph{full-dimensional} probability simplex. 
We recall the full-information
online linear optimization (OLO) low regret bound; see, e.g., \cite{AHK-Survey}. %\marco{FYI, I added a more general version of the statement (also from AHK survey) that covers the case of negative rewards, needed in $(1-\e)$ packing.}

\begin{lemma}[OLO] \label{lem:OLOregret}
  Fix $\varepsilon \in (0,\nicefrac{1}{2}]$. 
  The  experts algorithm
  considers a setting with $d$ experts. At each time the algorithm
  plays a probability distribution $p^t \in \triangle^{d-1}$ and receives
  a linear reward function $f_t:\triangle^{d-1} \rightarrow [-1,1]$. 
  For any time $\tau$, let $p^* = \argmax_{p \in \triangle^{d-1}} \sum_{t \le \tau} f_t(p)$ 
  be fixed action that gives the best reward over the entire input sequence. 
  Until any time $\tau$, the following holds:
  \begin{align}
  	\sum_{t \le \tau} f_t(p^*) - \sum_{t\le \tau} f_t({p^t}) \leq \varepsilon \sum_{t \le \tau} |f_t(p^*)| + \frac{
    \log d}{\varepsilon} \enspace .
  \end{align}
\end{lemma}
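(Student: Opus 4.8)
The plan is to instantiate the classical Multiplicative Weights Update (MWU) algorithm in its ``gains'' form and run the standard potential-function analysis, being careful that the second-order error term comes out proportional to $\sum_{t\le\tau}|f_t(p^*)|$ rather than to $\tau$.

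\textbf{Reduction to the best expert.} Since each $f_t$ is linear on $\triangle^{d-1}$, it is determined by its values on the vertices: writing $m_t(j) := f_t(e_j) \in [-1,1]$, we have $f_t(p) = \langle p, m_t\rangle$ for all $p \in \triangle^{d-1}$. A linear function on the simplex attains its maximum at a vertex, so the maximizer $p^*$ may be taken to be some standard basis vector $e_{j^*}$; then $\sum_{t\le\tau} f_t(p^*) = \sum_{t\le\tau} m_t(j^*)$ and $|f_t(p^*)| = |m_t(j^*)|$ for every $t$. It thus suffices to bound the regret of the algorithm against expert $j^*$.

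\textbf{The algorithm and the potential.} The experts algorithm maintains weights $w_1(j) = 1$, updates $w_{t+1}(j) = w_t(j)\,(1 + \varepsilon\, m_t(j))$, and plays $p^t(j) = w_t(j)/\Phi_t$ where $\Phi_t := \sum_j w_t(j)$. Since $\varepsilon \le \tfrac12$ and $|m_t(j)| \le 1$, each factor $1+\varepsilon m_t(j) \ge \tfrac12 > 0$, so the weights stay positive. A one-line computation gives the recursion $\Phi_{t+1} = \sum_j w_t(j)(1+\varepsilon m_t(j)) = \Phi_t\,(1 + \varepsilon \langle p^t, m_t\rangle) = \Phi_t\,(1 + \varepsilon f_t(p^t))$, hence, using $1+x \le e^x$,
\[
  \Phi_{\tau+1} \;=\; d\prod_{t\le\tau}\bigl(1 + \varepsilon f_t(p^t)\bigr) \;\le\; d\,\exp\Bigl(\varepsilon \sum_{t\le\tau} f_t(p^t)\Bigr).
\]
On the other hand $\Phi_{\tau+1} \ge w_{\tau+1}(j^*) = \prod_{t\le\tau}(1+\varepsilon m_t(j^*))$. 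Taking logarithms and using $\log(1+x) \ge x - x^2$, valid for $x \ge -\tfrac12$ and applicable since $\varepsilon m_t(j^*) \in [-\tfrac12,\tfrac12]$, together with $m_t(j^*)^2 \le |m_t(j^*)|$ (legitimate because $|m_t|\le 1$), gives $\log \Phi_{\tau+1} \ge \varepsilon \sum_{t\le\tau} m_t(j^*) - \varepsilon^2 \sum_{t\le\tau} |m_t(j^*)|$. Combining the two estimates on $\log \Phi_{\tau+1}$, dividing through by $\varepsilon$, and translating back via $m_t(j^*) = f_t(p^*)$ yields exactly the claimed bound $\sum_{t\le\tau} f_t(p^*) - \sum_{t\le\tau} f_t(p^t) \le \varepsilon \sum_{t\le\tau} |f_t(p^*)| + \tfrac{\log d}{\varepsilon}$ (treating $\log$ as the natural logarithm, which is the base that makes the constant exactly $1$).

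The computations are entirely routine; the only point needing genuine attention is the choice of the second-order estimate. Using the crude bound $m_t(j^*)^2 \le 1$ would produce an additive term $\varepsilon\,\tau$, useless for the truncation and multiscale-experts arguments of \S\ref{sec:optUnknown}; the sharper $m_t(j^*)^2 \le |m_t(j^*)|$ is what makes the error ``scale-free'' in the best expert's reward, which is precisely the multiplicative-plus-additive form exploited throughout the paper. One must also keep $\varepsilon\le\tfrac12$ so the weights stay positive and $\log(1+x)\ge x-x^2$ applies; an alternative route is the exponential (Hedge) update $w_{t+1}(j) = w_t(j)e^{\varepsilon m_t(j)}$ combined with $e^{\varepsilon m} \le 1 + \varepsilon m + \varepsilon^2 m^2$ for $|\varepsilon m|\le 1$, which leads to the same conclusion.
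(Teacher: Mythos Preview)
Your proof is correct. The paper itself does not prove this lemma; it simply quotes it as a known result with the citation ``see, e.g., \cite{AHK-Survey}'' (the Arora--Hazan--Kale survey on Multiplicative Weights). What you have written is exactly the standard MWU potential argument from that reference, with the refinement $m_t(j^*)^2 \le |m_t(j^*)|$ that yields the multiplicative-plus-additive form; so your write-up is fully consistent with what the paper intends, just spelled out rather than cited.
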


%Marco: Older statement (the new one covers the case where rewards can be negative)
%\begin{lemma}[OLO] \label{lem:OLOregret}
%  Fix $\varepsilon \in (0,\nicefrac{1}{2}]$. 
%  The  experts algorithm
%  considers a setting with $d$ experts. At each time the algorithm
%  plays a probability distribution $p^t \in \triangle^{d-1}$ and receives
%  a linear function $f^t:\triangle^{d-1} \rightarrow [0,1]$. 
%  For any time $\tau$, let $R_\tau^* := \max_{x \in \triangle^{d-1}} \sum_{t \le \tau} f^t(x)$ 
%  be the reward of the best fixed action over the entire input sequence. 
%  Until any time $\tau$, the following holds:
%  \begin{align}
%   R_\tau^* - \sum_{t\le \tau} f^t({p^t}) \leq \varepsilon R_\tau^* + \frac{
%    \log d}{\varepsilon} \enspace .
%  \end{align}
%\end{lemma}

We will also need the following generalization from Bubeck et
al.~\cite[Theorem~1]{BDHN}  for the
full-information multiscale online learning problem.

\begin{lemma}[Multi-scale regret of Bubeck et
al.~\cite{BDHN}]
  \label{thm:multiscale}
  Fix $\varepsilon \in (0,1]$. The multi-scale experts algorithm
  considers a setting with $M$ experts. At each time the algorithm
  plays a probability distribution $p^t \in \triangle_M$ and receives
  a reward vector $r^t$ with each $r^t_i \in [0,c_i]$; moreover, $c_i$
  is known in advance. Let $R_i := \max_i \sum_t r^t_i$ be the reward
  of action $i$ over the entire input sequence. The following holds
  for each $i \in [M]$:
  \begin{align}
    R_i - \sum_t \ip{r^t}{p^t} \leq \varepsilon R_i + O\Big(\frac{c_i \,
    \log M}{\varepsilon} \Big) \enspace .
  \end{align}
\end{lemma}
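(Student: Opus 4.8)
The plan is to read this as a standard full-information experts (multiplicative-weights / mirror-descent) statement, but with the twist that expert $i$'s per-round reward lies in $[0,c_i]$ rather than in a common bounded range, and to prove it by running \emph{one learning rate per expert}, with $\eta_i \asymp \varepsilon/c_i$. Concretely I would analyze the ``multi-scale Hedge'' rule that plays
\[ p^t_i \;\propto\; \pi_i\,\exp\!\Big(\eta_i \sum_{s<t} r^s_i\Big), \qquad \eta_i = \Theta(\varepsilon/c_i), \]
for a fixed prior distribution $\pi$ over the $M$ experts (equivalently, FTRL with the weighted negative-entropy regularizer $\psi(p)=\sum_i \tfrac{c_i}{\varepsilon}p_i\ln\tfrac{p_i}{\pi_i}$). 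The point of the $c_i$-dependent rate is that it is the scaling for which the per-round ``second-order'' error, proportional to $\eta_i^2 (r^t_i)^2$, can be charged against the first-order term: since $r^t_i\in[0,c_i]$ we have $(r^t_i)^2\le c_i r^t_i$, and $\eta_i c_i=\Theta(\varepsilon)$, so $\eta_i^2 (r^t_i)^2\le \Theta(\varepsilon)\,\eta_i r^t_i$.

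The skeleton is the usual partition-function telescoping. With $W^t=\sum_i \pi_i\exp(\eta_i G^t_i)$ and $G^t_i=\sum_{s<t}r^s_i$, the bound $e^x\le 1+x+x^2$ for $x=\eta_i r^t_i\le\varepsilon\le1$ gives $W^{t+1}/W^t=\sum_i p^t_i e^{\eta_i r^t_i}\le 1+(1+\Theta(\varepsilon))\sum_i \eta_i p^t_i r^t_i$, hence $\ln W^{T+1}\le (1+\Theta(\varepsilon))\sum_t\sum_i \eta_i p^t_i r^t_i$ (and $\ln W^1=0$). Comparing against the single term $i$, $\ln W^{T+1}\ge \ln\pi_i+\eta_i R_i$, and dividing through by $\eta_i=\varepsilon/c_i$ yields $R_i\le (1+\Theta(\varepsilon))\,c_i\sum_t\sum_j \tfrac{p^t_j r^t_j}{c_j}+\tfrac{c_i}{\varepsilon}\ln\tfrac1{\pi_i}$. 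Choosing $\pi$ essentially uniform makes the additive term $O(c_i\log M/\varepsilon)$; and when $c_i=\min_j c_j$ the reweighted reward $c_i\sum_j p^t_j r^t_j/c_j$ is at most $\sum_j p^t_j r^t_j=\ip{r^t}{p^t}$, so rearranging $R_i-\sum_t\ip{r^t}{p^t}\le \Theta(\varepsilon)\sum_t\ip{r^t}{p^t}+O(c_i\log M/\varepsilon)$ into $\sum_t\ip{r^t}{p^t}\ge(1-\Theta(\varepsilon))R_i-O(c_i\log M/\varepsilon)$ gives exactly the claim (after rescaling $\varepsilon$ by a constant).

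The main obstacle — and the reason the paper cites \cite{BDHN} rather than proving this in-line — is that this clean computation only works for the \emph{smallest} scale. Two things go wrong for general $i$: (a) the reward term that actually appears is the scale-reweighted sum $c_i\sum_t\sum_j p^t_j r^t_j/c_j$, not the algorithm's true reward $\sum_t\ip{r^t}{p^t}$, and these agree only when $c_i$ is minimal; and (b) with a plain weighted-entropy regularizer the ``bias'' term $\psi(e_i)-\min_p\psi(p)$ is governed by $\max_j c_j$, not by $c_i$ (one can exhibit scales with $\min_p\psi$ as negative as $-\Theta(\max_j c_j\log M/\varepsilon)$). Handling both at once is the technical core of \cite{BDHN}: they use a prior and potential adapted to the whole multiset of scales $\{c_j\}$ together with a local-norm stability analysis, so that for \emph{every} $i$ only $c_i$ and $\log M$ survive, with no $\max_j c_j$ and no spurious $\log c_i$. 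For a self-contained but lossier argument I would instead bucket the $M$ experts into $O(\log(\max_j c_j/\min_j c_j))$ classes of comparable scale, run an ordinary single-scale Hedge inside each class and a second Hedge across classes, and compose the two regret bounds; this reproduces \Cref{thm:multiscale} up to an extra $\log(\max_j c_j)$ factor, which is precisely what the finer analysis of \cite{BDHN} removes.
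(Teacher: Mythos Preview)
The paper does not prove this lemma: it is stated in the appendix under ``Useful Prior Results'' and attributed without proof to Bubeck et al.~\cite{BDHN}. So there is no in-paper proof to compare against; your write-up is already more detailed than anything the paper provides.

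As a sketch of the \cite{BDHN} approach, your outline is accurate in its main thrust (per-expert learning rates $\eta_i\asymp\varepsilon/c_i$, partition-function telescoping, and the second-order charge $(\eta_i r^t_i)^2\le \varepsilon\,\eta_i r^t_i$) and, importantly, you correctly identify the real difficulty: the telescoping yields control of the scale-reweighted reward $c_i\sum_t\sum_j p^t_j r^t_j/c_j$, which dominates the algorithm's actual reward $\sum_t\ip{r^t}{p^t}$ only when $c_i$ is the minimum scale. Your honest assessment that the naive argument proves the bound only at the smallest scale, and that a bucketing-over-scales reduction recovers the general-$i$ statement up to an extra $\log(\max_j c_j/\min_j c_j)$ factor, is exactly right. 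That extra factor is not innocuous here---the paper applies the lemma with $|\Gamma|=O(\log n)$ scales spanning a $\poly(n)$ range, so your bucketing route would turn the $\sqrt{\log\log n}$ overhead in \eqref{eq:packingWrap1} into $\sqrt{\log n}$, which the current choice $K=\Theta(\log\log n)$ cannot absorb. In short: your proposal is a correct and well-diagnosed partial proof, but as you yourself note, it does not establish the lemma as stated; the paper simply imports the sharper bound from \cite{BDHN} as a black box.
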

If we pick a random action in $[M]$ at each timestep $t$ independently
from the distribution $p^t$, then the above theorem gives a guarantee
for the expected regret against oblivious adversaries as well.

The following concentration inequality is classical: see, e.g.,
\cite[Theorem~2.8.4]{Ver18}:
\begin{lemma}[Bernstein's Inequality]
  \label{lem:Bern}
  For $X_1, X_2, \ldots, X_n$ independent mean-zero random variables
  such that $|X_i| \leq M$ for all $i$, and any $t \geq 0$,
  \[ \Pr\Big[ \Big| \sum_i X_i \Big| > t \Big] \leq 2 \exp \Big(
    \frac{ t^2/2}{ \sigma^2 + Mt/3 }\Big), \]
  where $\sigma^2 = \sum_i \E[ X_i^2 ]$ is the variance of the sum.
\end{lemma}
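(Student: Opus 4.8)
Since this is the classical Bernstein inequality, the plan is the standard Chernoff (exponential-moment) method: first prove the one-sided bound $\Pr[\sum_i X_i > t] \le \exp\!\big(-\tfrac{t^2/2}{\sigma^2 + Mt/3}\big)$, and then apply the same argument to the variables $-X_i$ (which are again mean-zero, bounded by $M$, and have the same variance $\sum_i \E[X_i^2]$) to bound $\Pr[\sum_i X_i < -t]$ by the same quantity. A union bound over the two tails gives the stated two-sided inequality, and this is exactly where the factor $2$ comes from. Note also that since the $X_i$ are independent and mean-zero, $\sigma^2 := \sum_i \E[X_i^2] = \Var(\sum_i X_i)$, so the notation is consistent.

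\textbf{Key steps.} First I would fix a parameter $\lambda \in (0, 3/M)$ and apply Markov's inequality to $e^{\lambda \sum_i X_i}$, using independence to factor the moment generating function: $\Pr[\sum_i X_i > t] \le e^{-\lambda t}\prod_i \E[e^{\lambda X_i}]$. The core estimate is the per-variable MGF bound: for mean-zero $X$ with $|X| \le M$, expand $e^{\lambda X} - 1 - \lambda X = \sum_{k \ge 2} (\lambda X)^k / k!$, take expectations (the linear term vanishes), and use $k! \ge 2\cdot 3^{k-2}$ together with $|X|^{k-2} \le M^{k-2}$ to get $\E[e^{\lambda X}] \le 1 + \tfrac{\lambda^2 \E[X^2]/2}{1 - \lambda M/3} \le \exp\!\big(\tfrac{\lambda^2 \E[X^2]/2}{1-\lambda M/3}\big)$. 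Multiplying over $i$ yields $\Pr[\sum_i X_i > t] \le \exp\!\big(-\lambda t + \tfrac{\lambda^2\sigma^2/2}{1 - \lambda M/3}\big)$.

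\textbf{Optimization.} The last step is to minimize the exponent over $\lambda \in (0,3/M)$. Taking $\lambda = \tfrac{t}{\sigma^2 + Mt/3}$ — one checks this indeed lies in $(0, 3/M)$ — and substituting gives exponent $-\tfrac{t^2/2}{\sigma^2 + Mt/3}$, which is the one-sided bound; combining the two tails finishes the proof. The only genuinely delicate point is the per-variable MGF estimate (the constant $3$ in the denominator comes from the $k! \ge 2\cdot 3^{k-2}$ bound); everything else is bookkeeping and a one-variable optimization. Since the paper only uses this as a black box, an acceptable alternative is to simply cite the statement from a standard reference (e.g., Vershynin, Theorem~2.8.4) rather than reproduce the calculation, which is precisely what is done here.
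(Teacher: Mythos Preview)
Your proof sketch is correct and is the standard Chernoff-method argument for Bernstein's inequality; you also correctly observe that the paper does not prove this lemma at all but simply cites it from Vershynin~(Theorem~2.8.4). In that sense your write-up strictly contains the paper's ``proof'' (a citation), so there is nothing to compare beyond noting that the paper treats it as a black box. (As an aside, your derivation yields the exponent $-\tfrac{t^2/2}{\sigma^2+Mt/3}$, which is the intended bound; the displayed statement in the paper is missing the minus sign.)
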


Our algorithm for Packing Integer Programs will use the following
result about Byzantine Knapsacks (and hence about the multiple-item
Byzantine secretary) from \cite[Theorem~2 and Lemma~7]{BGSZ-ITCS20} in
case most of the value is concentrated on a small number of items.
\begin{lemma}
  \label{lem:itcs-result}
  There is an algorithm for Byzantine secretary on knapsacks with size
  at least $B \geq poly(\e^{-1} \log n)$ (and items of at most unit
  size) that is $(1-\e)$-competitive with the benchmark
  $\OPT(\Gitems \setminus \gmax)$. Moreover, there is an algorithm
  that given an estimate that is at least $\OPT(\Gitems)$ and at most
  $\poly(n)$ times as much, and knapsack size
  $B \geq poly(\e^{-1} \log n)$, prodices a solution that is
  $(1-\e)$-competitive with the benchmark $\OPT(\Gitems)$.
\end{lemma}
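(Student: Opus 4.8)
Both claims are restatements of results from \cite{BGSZ-ITCS20}, so the plan is to check that the present hypotheses match theirs and, for the second claim, to make explicit how the estimate enters. For the first claim I would simply invoke \cite[Theorem~2]{BGSZ-ITCS20}: the setting here --- a single packing constraint, item sizes in $[0,1]$, capacity $B$, green items in uniformly random positions and red items adversarially placed --- is exactly the Byzantine knapsack-secretary setting treated there, and the slack hypothesis $B \ge \poly(\e^{-1}\log n)$ is the same. If a self-contained proof were wanted, I would specialize the machinery of \S\ref{sec:byzant-PIP} to $d=1$: guess the optimal value-density from an $O(\e^{-1}\log n)$-size geometric grid, run the density-threshold rule for the guessed scale, and keep the guess robust to the red items via the interval-based online learning of \S\ref{sec:optUnknown}; since $d=1$ the grid can be taken fine enough to yield $(1-\e)$ rather than a constant.

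For the second claim I would start from $\OPT(\Gitems) \le \OPT(\Gitems \setminus \gmax) + \val(\gmax)$ and split on whether $\val(\gmax)$ is below or above an $\e$-fraction of the optimum --- a test the estimate lets us perform up to its $\poly(n)$ accuracy. In the first case, running the first-claim algorithm with budget $(1-\e)B$ already returns $(1-O(\e))\,\OPT(\Gitems)$. In the second case $\gmax$ is a single heavy item carrying a $(\ge \e)$-fraction of the optimum; I would reserve one unit of budget (enough to hold any item, since sizes are at most $1$) and run on it a secretary-type subroutine for heavy items whose candidate thresholds are the $O(\e^{-1}\log n)$ powers of two consistent with the estimate, so that with the required probability the subroutine picks $\gmax$, or a red item of at least comparable value. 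Running the two subroutines on disjoint portions of the budget and combining gives the claim; this is the content of \cite[Lemma~7]{BGSZ-ITCS20}.

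The step I expect to be the main obstacle is the Byzantine robustness of the heavy-item subroutine in the second case: a red item placed just above the chosen threshold can exhaust the reserved unit of budget before $\gmax$ arrives, and the estimate only pins the right threshold down to within $O(\e^{-1}\log n)$ multiplicative scales. Controlling the loss from this ambiguity --- so that the final guarantee is genuinely $(1-\e)$-competitive rather than merely $\Omega(1/\log n)$-competitive --- is where the argument of \cite{BGSZ-ITCS20} does real work, and is the part I would import wholesale rather than re-derive.
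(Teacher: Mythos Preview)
Your approach is essentially identical to the paper's: this lemma is stated in the appendix as a prior result and is not proved in the paper at all---it is simply attributed to \cite[Theorem~2 and Lemma~7]{BGSZ-ITCS20}, exactly the two results you cite. Your additional sketch of how the estimate-based algorithm might be organized is not required (and the paper does not attempt it), but your identification of the source and the matching of hypotheses is correct.
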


%%% Local Variables:
%%% mode: latex
%%% TeX-master: "main"
%%% End:

% !TeX root = main.tex
% !TEX root = main.tex
% 

%###########################################################
%###########################################################
%###########################################################

%\section{Proof of Lemma \ref{lemma:breakCorr}} \label{app:breakCorr}
\section{Handling Correlations due to Sampling Without Replacement} \label{app:breakCorr}

In contrast to i.i.d. arrivals, to handle correlations due to sampling without replacement 
for secretary problems, we will use the following general lemma. %\marco{Maybe for SODA versio move this to the place it is proved, since it is not just a reference to a known result? For non-SODA version can keep in this prelim file} 

	\begin{lemma} \label{lemma:breakCorr} Consider a set of vectors
	  $\{y^1, \ldots, y^m\} \in [0,1]^d$ and let $Y^1, \ldots, Y^k$ be sampled without replacement from this set. Let $Z^1,\ldots,Z^k$ be random vectors in $\blacktriangle^d$ such that $Z^j$ is a (possibly random) function of $Y^1,\ldots,Y^{j-1}$ for all $j$. Let $\tau$ be a stopping time for the sequence $((Y^t,Z^t))_t$ such that $\tau \le \frac{m}{2}$. Then for any $\e \in (0,\frac{1}{10}]$ and $\delta \in (0,1]$, with probability at least
	  $1- \delta$ we have
	  \[ 
	  \sum_{j \le \tau} \ip{Z^j}{Y^j} ~\le~ (1 + 4\e)\,
	  \sum_{j \le \tau} \ip{\E_{j-1} Z^j}{\E Y^j} \,+\, \frac{O(\log \nicefrac{d}\delta)}{\e} \enspace ,
	\]
	where $\E_{j-1} Z^j = \E[Z^j \mid (Y_1,Z_1),\ldots,(Y_{j-1},Z_{j-1})]$. 
	\end{lemma}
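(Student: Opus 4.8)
The plan is to prove the bound by two successive exponential-supermartingale (Freedman-type) arguments: one that replaces $\ip{Z^j}{Y^j}$ by its one-step conditional mean, and one that replaces that conditional mean by the unconditional mean $\E Y^j$, the loss in the second step coming from sampling without replacement. Throughout, let $\mathcal F_{j-1}$ be the $\sigma$-algebra generated by $(Y^1,Z^1),\dots,(Y^{j-1},Z^{j-1})$, so that $\E_{j-1}Z^j=\E[Z^j\mid\mathcal F_{j-1}]$; let $\widehat\mu_j:=\E[Y^j\mid\mathcal F_{j-1}]$, which is exactly the coordinatewise average of the $y^i$ not yet drawn at step $j$; and let $\bar y:=\E Y^j=\tfrac1m\sum_i y^i$, the same for every $j$ by symmetry, so the right-hand side of the lemma equals $\sum_{j\le\tau}\ip{\E_{j-1}Z^j}{\bar y}$. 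Since the (exogenous) randomness used to produce $Z^j$ is independent of the random permutation, $Y^j$ and $Z^j$ are conditionally independent given $\mathcal F_{j-1}$, hence $\E[\ip{Z^j}{Y^j}\mid\mathcal F_{j-1}]=\ip{\E_{j-1}Z^j}{\widehat\mu_j}$. I will also use two elementary facts about $\widehat\mu_j$: (a) coordinatewise $\widehat\mu_j\le\tfrac{m}{m-j+1}\,\bar y\le 2\bar y$ for $j\le m/2$, because the numerator $m\bar y-\sum_{\ell<j}Y^\ell$ is coordinatewise at most $m\bar y$; and (b) $(\widehat\mu_j)_j$ is itself a martingale, since $\widehat\mu_{j+1}=\widehat\mu_j+\tfrac{\widehat\mu_j-Y^j}{m-j}$ and $\E[Y^j\mid\mathcal F_{j-1}]=\widehat\mu_j$.

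\emph{Step 1 (pass to conditional means).} Put $c_j:=\ip{\E_{j-1}Z^j}{\widehat\mu_j}$ and $D^j:=\ip{Z^j}{Y^j}-c_j$, a martingale difference sequence with $D^j\le1$ (as $Z^j\in\blacktriangle^d$ and $Y^j\in[0,1]^d$ give $\ip{Z^j}{Y^j}\in[0,1]$). The key observation is that $\ip{Z^j}{Y^j}\le1$ forces $\ip{Z^j}{Y^j}^2\le\ip{Z^j}{Y^j}$, so the predictable quadratic variation is dominated by the very quantity we want to bound: $\E[(D^j)^2\mid\mathcal F_{j-1}]\le\E[\ip{Z^j}{Y^j}\mid\mathcal F_{j-1}]=c_j$. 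Using $\E[e^{\theta X}\mid\mathcal F_{j-1}]\le\exp\!\big((e^{\theta}-1-\theta)\,\E[X^2\mid\mathcal F_{j-1}]\big)$ for any $\mathcal F_{j-1}$-centered $X\le1$, the process $\exp\!\big(\theta\sum_{j\le t}D^j-(e^{\theta}-1-\theta)\sum_{j\le t}c_j\big)$ is a supermartingale; optional stopping at $\tau$ (bounded, being a stopping time with $\tau\le m/2$) and Markov's inequality give, with probability at least $1-\delta/2$ and the choice $\theta=\e$,
\[
\textstyle\sum_{j\le\tau}\ip{Z^j}{Y^j}\ \le\ (1+\e)\sum_{j\le\tau}c_j\ +\ O\!\big(\tfrac{\log(1/\delta)}{\e}\big),
\]
where we used that $(e^{\e}-1-\e)/\e\le\e$ for $\e\le\tfrac1{10}$.

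\emph{Step 2 (kill the without-replacement drift) and wrap-up.} It remains to bound $\sum_{j\le\tau}c_j=\sum_{j\le\tau}\ip{\E_{j-1}Z^j}{\widehat\mu_j}$ by $(1+\e)\sum_{j\le\tau}\ip{\E_{j-1}Z^j}{\bar y}+O(\log(d/\delta)/\e)$. Fix a coordinate $c$ and apply a running-maximum Freedman inequality to the scalar martingale $(\widehat\mu_{j,c})_j$, started at $\bar y_c$: its increments are at most $\tfrac1{m-j}\le\tfrac2m$ in magnitude, and — using fact (a) to break the circularity — its predictable quadratic variation through step $m/2$ is at most $\sum_{\ell\le m/2}\widehat\mu_{\ell,c}/(m-\ell)^2=O(\bar y_c/m)$. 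Balancing the resulting Bernstein-type deviation (the $\sqrt{\bar y_c\log(d/\delta)/m}$ piece against $\e\bar y_c$, and absorbing coordinates whose total mass $\bar y_c$ lies below the noise floor into an additive slack) yields, except with probability $\delta/(2d)$, that $\widehat\mu_{j,c}\le(1+\e)\bar y_c+O(\log(d/\delta)/(\e m))$ for all $j\le m/2$; a union bound over the $d$ coordinates makes this hold in every coordinate with probability at least $1-\delta/2$. Since $\|\E_{j-1}Z^j\|_1\le1$, this gives $c_j\le(1+\e)\ip{\E_{j-1}Z^j}{\bar y}+O(\log(d/\delta)/(\e m))$, and summing the at most $m/2$ terms $j\le\tau$ collects the additive pieces into $O(\log(d/\delta)/\e)$. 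Taking a union bound over the two failure events and composing the two inequalities gives, with probability at least $1-\delta$,
\[
\textstyle\sum_{j\le\tau}\ip{Z^j}{Y^j}\ \le\ (1+\e)^2\sum_{j\le\tau}\ip{\E_{j-1}Z^j}{\bar y}\ +\ O\!\big(\tfrac{\log(d/\delta)}{\e}\big),
\]
and $(1+\e)^2\le1+4\e$ because $\e\le\tfrac1{10}$, which is exactly the statement.

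\emph{Main obstacle.} The delicate part is Step 2: sampling without replacement drags the conditional mean $\widehat\mu_j$ away from $\bar y$, and the only cheap bound, $\widehat\mu_j\le2\bar y$, would cost a multiplicative factor of $2$ rather than $1+O(\e)$. Recovering the tight factor forces a variance-aware (Freedman) estimate whose predictable quadratic variation is itself of order only $\bar y_c/m$; this in turn requires bootstrapping off the crude a-priori bound (a), a running-maximum inequality to handle all prefixes $j\le m/2$ simultaneously (which is also what keeps the additive error at $O(\log(d/\delta)/\e)$ rather than $O(\log(md/\delta)/\e)$), and separate treatment of the coordinates whose mass is below the concentration threshold. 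Step 1, by contrast, is routine once one notices the self-bounding identity $\ip{Z^j}{Y^j}^2\le\ip{Z^j}{Y^j}$ that makes the predictable quadratic variation equal to the target sum.
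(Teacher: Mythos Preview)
Your proof is correct and follows essentially the same two-step architecture as the paper's: one self-bounding martingale step to replace $\ip{Z^j}{Y^j}$ by its conditional mean $c_j=\ip{\E_{j-1}Z^j}{\widehat\mu_j}$, and one coordinatewise maximal-concentration step to replace $\widehat\mu_j$ by $\bar y$. The only cosmetic differences are that the paper cites a black-box maximal Bernstein inequality for sampling without replacement (applied to the partial sums $\sum_{t<j}Y^t_i$) where you instead exploit the martingale structure of $\widehat\mu_j$ directly via Freedman, and that the paper invokes a ``martingale with drift'' lemma of Bansal--Garg where you use the standard Freedman supermartingale; both pairs of tools rest on the same variance-self-bounding observation $\ip{Z^j}{Y^j}^2\le\ip{Z^j}{Y^j}$ and yield the same constants up to the $1+4\e$ slack.
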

Special cases of the above lemma have appeared before in the literature, e.g.  of \cite[Lemma~5]{GM-MOR16}.
	We will need the following convenient concentration inequality for ``martingales with drift''.
	
	\begin{lemma}[Lemma 2.2 of \cite{nikhilRounding}] \label{lemma:nikhil}
		Let $X_1,X_2,\ldots,X_k$ be a sequence of (possibly dependent) random variables with values in $(-\infty,1]$ and such that there is $\alpha \in (0,1)$ such that $$\E[X_j \mid X_1,\ldots,X_{j-1}] \le - \alpha \E[X^2_j \mid X_1,\ldots,X_{j-1}]$$ for all $j$. Then for all $\lambda \ge 0$
		\begin{align*}
			\Pr(X_1 + \ldots + X_k > t) \le e^{-\alpha \lambda}.
		\end{align*}
	\end{lemma}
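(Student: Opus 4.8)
The plan is a one-line exponential-supermartingale (Chernoff-type) argument, with the one twist that we must analyse the moment generating function of $\alpha(X_1+\dots+X_k)$ rather than of $X_1+\dots+X_k$ itself; the scaling factor $\alpha$ is exactly what is needed to turn the drift hypothesis into a useful per-step bound. (I read the threshold $t$ in the statement as $\lambda$, i.e., the claim is $\Pr(X_1+\dots+X_k > \lambda) \le e^{-\alpha\lambda}$ for all $\lambda\ge 0$.)

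First I would record the elementary inequality $e^x \le 1+x+x^2$ for all $x\le 1$: the function $h(x):=1+x+x^2-e^x$ satisfies $h(0)=h'(0)=0$, and since $h''(x)=2-e^x$ changes sign only at $x=\ln 2$ while $h'(1)=3-e>0$, a short sign analysis gives $h'\le 0$ on $(-\infty,0]$ and $h'\ge 0$ on $[0,1]$, so $h\ge 0$ on $(-\infty,1]$. Next, write $\mu_j:=\E[X_j\mid X_1,\dots,X_{j-1}]$ and $v_j:=\E[X_j^2\mid X_1,\dots,X_{j-1}]$; the hypothesis $\mu_j\le -\alpha v_j$ together with $v_j\ge 0$ and $\alpha>0$ forces $\mu_j\le 0$ and (implicitly) $v_j<\infty$, so all conditional expectations below are well defined and finite. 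Because $\alpha\in(0,1)$ and $X_j\le 1$ we have $\alpha X_j\le 1$, so applying the elementary inequality pointwise to $x=\alpha X_j$ and taking conditional expectation,
\[
  \E\!\left[e^{\alpha X_j}\,\middle|\, X_1,\dots,X_{j-1}\right]
  ~\le~ 1+\alpha\mu_j+\alpha^2 v_j
  ~\le~ 1+\alpha\mu_j+\alpha^2\!\left(-\tfrac{\mu_j}{\alpha}\right)
  ~=~ 1,
\]
where the middle step uses $v_j\le -\mu_j/\alpha$ (a rearrangement of the hypothesis).

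Then I would conclude that $M_j:=\exp\!\big(\alpha(X_1+\dots+X_j)\big)$ is a nonnegative supermartingale with $M_0=1$: since $M_{j-1}$ is measurable with respect to $\sigma(X_1,\dots,X_{j-1})$, we have $\E[M_j\mid X_1,\dots,X_{j-1}] = M_{j-1}\cdot\E[e^{\alpha X_j}\mid X_1,\dots,X_{j-1}] \le M_{j-1}$, hence by induction $\E[M_k]\le 1$. Finally, Markov's inequality applied to $M_k\ge 0$ yields, for any $\lambda\ge 0$,
\[
  \Pr\!\left(X_1+\dots+X_k > \lambda\right)
  ~=~ \Pr\!\left(M_k > e^{\alpha\lambda}\right)
  ~\le~ e^{-\alpha\lambda}\,\E[M_k]
  ~\le~ e^{-\alpha\lambda},
\]
which is the claimed bound. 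There is essentially no obstacle: the only thing one must notice is the correct exponential scaling — working with $e^{\sum_j X_j}$ directly would require $v_j\le -\mu_j$, which the hypothesis does not supply when $\alpha<1$, whereas replacing $X_j$ by $\alpha X_j$ converts the drift condition $\mu_j\le-\alpha v_j$ exactly into the per-step estimate $\E[e^{\alpha X_j}\mid\cdot]\le 1$.
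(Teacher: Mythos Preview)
Your proof is correct: the exponential-supermartingale argument with the scaling $\alpha X_j$ is exactly the right idea, and your verification of $e^x\le 1+x+x^2$ on $(-\infty,1]$ together with the use of the drift hypothesis to get $\E[e^{\alpha X_j}\mid\cdot]\le 1$ is clean and complete. The paper itself does not prove this lemma---it is quoted verbatim from \cite{nikhilRounding} (Lemma~2.2 there) and simply invoked in the proof of \Cref{lemma:breakCorr}---so there is no in-paper argument to compare against; your proof is the standard one and matches what one would expect in the cited reference.
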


	We also need a maximal Bernstein's inequality for sampling without replacement.It follows by applying Lemma 1 of \cite{GM-MOR16} to the scaled random variables $\frac{X_i}{M} \in [0,1]$ and using the fact that $\Var(X) \le \E X$ for every random variable in $[0,1]$ (the last inequality follows from the inequality $\frac{a}{b+c} \ge \min\{\frac{a}{2b}, \frac{a}{2c}\}$, valid for all non-negative reals $a,b,c$).

	\begin{lemma}[Lemma 1 of \cite{GM-MOR16}]  \label{lemma:maximalBernstein}
		Consider a set of real values $x_1, \ldots, x_m$ in $[0,M]$, and let $X_1, \ldots, X_k$ be sampled without replacement from this collection. Assume $k \le m/2$. Let $S_i = X_1 + \ldots X_i$. Also let $\mu = \frac{1}{m} \sum_i x_i$ and $\sigma^2 = \frac{1}{m} \sum_i (x_i - \mu)^2$. Then for every $\alpha > 0$
		\begin{align*}
			\Pr\left(\max_{i \le k} |S_i - i \mu| \ge \alpha\right) \le 30 \exp\left(-\frac{(\alpha/24)^2}{M(2 k \mu + (\alpha/24))} \right) \le 30 \exp\left(-\min\bigg\{\frac{(\alpha/24)^2}{4 k \mu M}~,~ \frac{\alpha}{48 M} \bigg\} \right)
		\end{align*}
	\end{lemma}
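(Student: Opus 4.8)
The plan is to derive the stated bound directly from the general maximal Bernstein inequality for sampling without replacement — Lemma~1 of \cite{GM-MOR16}, which is stated there for values in $[0,1]$ — by a rescaling argument together with two elementary inequalities; there is no new probabilistic content. Recall that Lemma~1 of \cite{GM-MOR16} asserts: for reals in $[0,1]$ with empirical mean $\mu'$ and variance $\sigma'^2$, if $Y_1,\dots,Y_k$ (with $k\le m/2$) are drawn without replacement and $T_i := Y_1+\dots+Y_i$, then for every $\beta>0$,
\[
  \Pr\Big(\max_{i\le k}|T_i - i\mu'| \ge \beta\Big) ~\le~ 30\exp\Big(-\tfrac{(\beta/24)^2}{2k\sigma'^2 + \beta/24}\Big),
\]
the ``$\max$ over $i$'' being already built into the statement, so the maximal form needs nothing extra.

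The first step is to rescale. Given $x_1,\dots,x_m\in[0,M]$, set $Y_i := X_i/M\in[0,1]$, which is a without-replacement sample from $\{x_i/M\}$ with mean $\mu/M$ and variance $\sigma^2/M^2$, and satisfies $T_i = S_i/M$. Applying the displayed inequality with $\beta := \alpha/M$ and clearing the common factor $M^2$ from numerator and denominator of the exponent gives
\[
  \Pr\Big(\max_{i\le k}|S_i - i\mu| \ge \alpha\Big) ~\le~ 30\exp\Big(-\tfrac{(\alpha/24)^2}{2k\sigma^2 + M\alpha/24}\Big).
\]

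Next, since each $x_i\in[0,M]$ forces $x_i^2\le Mx_i$, we get $\sigma^2=\tfrac1m\sum_i x_i^2-\mu^2\le M\mu$ — this is the ``$\Var\le M\cdot\E$'' step — so $2k\sigma^2\le 2kM\mu$ and the denominator above is at most $M(2k\mu+\alpha/24)$, which yields the first displayed bound of the lemma. Finally I would apply the elementary inequality $\tfrac{a}{b+c}\ge\min\{\tfrac{a}{2b},\tfrac{a}{2c}\}$ for nonnegative reals (immediate by looking at whichever of $b,c$ is larger), with $a=(\alpha/24)^2/M$, $b=2k\mu$, $c=\alpha/24$; after simplifying $\tfrac{a}{2c}=\tfrac{\alpha}{48M}$ this turns the exponent into $\min\{\tfrac{(\alpha/24)^2}{4k\mu M},\tfrac{\alpha}{48M}\}$, which is exactly the second displayed bound.

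The only point that needs care is bookkeeping: matching the numerical constants of \cite{GM-MOR16}'s inequality to those claimed here after the substitution $x_i\mapsto x_i/M$ — in particular tracking the factor of $2$ multiplying $k\sigma^2$ and confirming that the constants $30$ and $24$ carry through unchanged under scaling. I do not expect any genuine obstacle beyond verifying this against the reference.
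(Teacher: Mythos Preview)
Your proposal is correct and follows exactly the paper's own derivation: rescale $X_i\mapsto X_i/M$ to apply Lemma~1 of \cite{GM-MOR16}, use $\sigma^2\le M\mu$ (equivalently $\Var(X/M)\le \E[X/M]$), and then invoke $\tfrac{a}{b+c}\ge\min\{\tfrac{a}{2b},\tfrac{a}{2c}\}$ for the second inequality. The paper gives no additional argument beyond these three steps, so there is nothing to compare.
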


	Let $\cF_j$ be the $\sigma$-algebra generated by $Y^1,\ldots,Y^j$ and $Z^1,\ldots,Z^j$, i.e., the history up to time $j$. We use $\E_{j-1}[\cdot] := \E[\ \cdot  \mid \cF_{j-1}]$ to denote expectation conditioned on the history up to time $j-1$.

	\begin{lemma} \label{lemma:balancedEveryTime}
		Consider $i \in [d]$. Then with probability at least $1 - \frac{\delta}{d}$ we have $\E_{j-1} Y^j_i \le (1+2\e) \E Y^j_i + \frac{O(\log d/\delta)}{m \e}$ for all $j \le \frac{m}{2}$ simultaneously. 
	\end{lemma}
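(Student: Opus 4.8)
The plan is to reduce the statement to a maximal large–deviation bound on the running partial sums of the sampled $i$-th coordinates, and then invoke the maximal Bernstein inequality for sampling without replacement (\Cref{lemma:maximalBernstein}).

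First fix the coordinate $i$ and set $\mu := \tfrac1m\sum_{\ell=1}^m y^\ell_i$. Since each $Y^j$ is marginally uniform on $\{y^1,\dots,y^m\}$, we have $\E Y^j_i=\mu$ for every $j$. Moreover, because $Z^t$ is a function of $Y^1,\dots,Y^{t-1}$, conditioning on $\cF_{j-1}$ is the same as conditioning on $Y^1,\dots,Y^{j-1}$; given these, $Y^j$ is uniform over the $m-(j-1)$ not-yet-sampled vectors, so with $S_{j-1}:=\sum_{t\le j-1}Y^t_i$,
\[
\E_{j-1}Y^j_i ~=~ \frac{m\mu - S_{j-1}}{m-(j-1)},
\qquad\text{hence}\qquad
\E_{j-1}Y^j_i - \E Y^j_i ~=~ \frac{(j-1)\mu - S_{j-1}}{m-j+1}.
\]
For $j\le m/2$ the denominator is at least $m/2$, so it suffices to show that with probability at least $1-\delta/d$,
\[
\max_{0\le j'\le \lfloor m/2\rfloor}\big|S_{j'}-j'\mu\big| ~\le~ \e m\mu + \tfrac{O(\log(2d/\delta))}{\e};
\]
dividing this by $m/2$ yields $\E_{j-1}Y^j_i-\E Y^j_i\le 2\e\mu + \frac{O(\log(2d/\delta))}{m\e}$ simultaneously for all $j\le m/2$, which, using $\E Y^j_i=\mu$, is exactly the claimed bound (with $O(\log(d/\delta))$ read as absorbing an additive constant).

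For the tail bound, apply \Cref{lemma:maximalBernstein} to the scalars $x_\ell := y^\ell_i\in[0,1]$ (so $M=1$) with $k:=\lfloor m/2\rfloor\le m/2$; it gives, for every $\alpha>0$,
\[
\Pr\Big(\max_{j'\le k}|S_{j'}-j'\mu|\ge \alpha\Big)
~\le~ 30\exp\!\Big(-\Omega\big(\min\{\alpha^2/(m\mu),\,\alpha\}\big)\Big).
\]
Now choose $\alpha := \e m\mu + \tfrac{c\log(2d/\delta)}{\e}$ for a large absolute constant $c$. Then $\alpha^2 \ge 2(\e m\mu)\cdot\tfrac{c\log(2d/\delta)}{\e} = 2c\,m\mu\log(2d/\delta)$, so the first term in the $\min$ is $\Omega(c\log(2d/\delta))$; and since $\e\le 1$ we have $\alpha \ge c\log(2d/\delta)$, so the second term is $\Omega(c\log(2d/\delta))$ as well. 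Taking $c$ large enough makes the right-hand side at most $\delta/d$ — here $2d/\delta\ge 2$ keeps $\log(2d/\delta)$ bounded below by a constant, which is what kills the leading factor $30$ uniformly. Plugging this $\alpha$ into the displayed reduction finishes the proof.

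There is no deep obstacle here: the only point requiring care is the choice of $\alpha$, which must split across the two regimes of \Cref{lemma:maximalBernstein} so that the ``variance'' term $\alpha^2/(m\mu)$ supplies the multiplicative $\e m\mu$ slack while the ``sub-exponential'' term $\alpha$ supplies the additive $\log(d/\delta)/\e$ slack, plus the harmless replacement of $\log(d/\delta)$ by $\log(2d/\delta)$ (and constant bookkeeping) so that the failure probability is genuinely at most $\delta/d$ even when $d/\delta$ is close to $1$.
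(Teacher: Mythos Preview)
Your proof is correct and follows essentially the same approach as the paper: express $\E_{j-1}Y^j_i$ in terms of the running partial sum $S_{j-1}$, apply the maximal Bernstein inequality (\Cref{lemma:maximalBernstein}) with $\alpha$ of the form $\e m\mu + \Theta(\log(d/\delta)/\e)$, and divide by $m-(j-1)\ge m/2$. The only differences are cosmetic constant bookkeeping (you use $\log(2d/\delta)$ whereas the paper writes $\log(d/\delta)+\log 30$ explicitly).
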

	
	\begin{proof}
	Let $\mu = \frac{1}{m} \sum_{j \le m} y^j_i$, which is the expected value of $Y^j_i$. Moreover, the conditional expectation $\E_{j-1} Y^j_i$ is the average of the $y^t_i$'s that have not appeared up until time $j-1$, namely 
	\begin{align}
	\E_{j-1} Y^j_i = \frac{\sum_t y^t_i - \sum_{t \le j-1} Y^t_i}{m - (j-1)} = \frac{m \mu - \sum_{t \le j-1} Y^t_i}{m - (j-1)}. \label{eq:expOhat}
	\end{align}
	We then bound the last term uniformly for all $j \le \frac{m}{2}$ using the maximal Bernstein's inequality Lemma~\ref{lemma:maximalBernstein}.

	For that let $\sigma^2 := \frac{1}{m} \sum_t (y^t_i - \mu)^2$ and notice that 
	\begin{align*}
		\sigma^2 \,=\, \frac{1}{m} \sum_t (y^t_i)^2 - \mu^2 \,\le\, \frac{1}{m} \sum_t y^t_i \,=\, \mu,
	\end{align*}
	where the inequality uses $y^t_i \in [0,1]$. Applying Lemma~\ref{lemma:maximalBernstein} with $$\alpha := \e m \mu + \frac{2 \cdot (24)^2}{\e} \, (\log \nicefrac{d}{\delta} + \log 30)$$ we get, since $\alpha^2 \ge 4 m \mu\, (24)^2\, (\log \nicefrac{d}{\delta} + \log 30)$,
 	\begin{align*}
 	 \Pr\left(\max_{j \le m/2} |{\textstyle \sum_{t\le j}} Y^j_i - j\mu| \ge \alpha \right) &\le 30 \exp\left(-\min\bigg\{\frac{4 m \mu (\log \nicefrac{d}{\delta} + \log 30)}{2 m \mu}~,~\frac{2 (24)^2 (\log \nicefrac{d}{\delta} + \log 30)/\e}{48}\bigg\}\right) \\
 	 &\le 30 e^{-(\log d/\delta + \log 30)} ~\le~ \frac{\delta}{d}.
 	\end{align*} 

	Finally, whenever this event holds, equation \eqref{eq:expOhat} gives that for all $j \le m/2$
	\begin{equation*}
	\E_{j-1} Y^j_i \le \frac{(m - (j-1)) \,\mu + \alpha}{m - (j-1)} \le \mu + \frac{\e m \mu + O(\frac{\log \nicefrac{d}{\delta}}{\e})}{m - (j-1)} \le (1+2\e) \mu + \frac{O(\log d/\delta)}{m \e},
	\end{equation*}	 
	the last inequality using $j \le \frac{m}{2}$. This concludes the proof. 
	\end{proof}

	\begin{proof}[Proof of Lemma \ref{lemma:breakCorr}]
		Since $Z^t$ and $Y^j$ are independent conditioned on $\cF_{j-1}$, we have
		\begin{align*}		
		\E_{j-1} \ip{Z^j}{Y^j} \,=\, \ip{\E_{j-1} Z^j}{\E_{j-1} Y^j} %\,\le\, \|\E_j Z^j\|_1 \|\E_j Y^j\|_\infty \,\le\, \|\E_j Y^j\|_{\infty},
		\end{align*}
		%where the last inequality follows from the fact $Z^j \in \triangle^d_0$. 
		Moreover, applying a union bound on Lemma \ref{lemma:balancedEveryTime} over all coordinates $i$, with probability at least $1-\frac{\delta}{2}$ for all $j \le \frac{m}{2}$ (in particular for all $j \le \tau$) we have $\ip{\E_{j-1} Z^j}{\E_{j-1} Y^j} \le (1+2\e) \ip{\E_{j-1} Z^j}{\E Y^j} + \frac{O(\log d/\delta)}{m \e}$. Adding over all $j \le \tau$ we get that
		\begin{flalign}
		&& \sum_{j \le \tau} \E_j \ip{Z^j}{Y^j} &\le (1+2\e) \sum_{j \le \tau} \ip{\E_{j-1} Z^j}{\E Y^j} + \frac{O(\log d/\delta)}{\e} &&\textrm{with probability $\ge 1 - \frac{\delta}{2}$}\,. \label{eq:condBound}
		\end{flalign}

		We now show using Lemma \ref{lemma:nikhil} that with good probability the desired quantity $\sum_{j \le \tau} \ip{Z^j}{Y^j}$ is close to $\sum_{j \le \tau} \E_j \ip{Z^j}{Y^j}$. Define the stopped random variable 
		\begin{align*}
		X_j := \one(\tau \ge j) \cdot \Big[(1-\e) \ip{Z^j}{Y^j} - \E_j \ip{Z^j}{Y^j}\Big].
		\end{align*}
		Recall that by definition of stopping time, the event $\one(\tau \ge j)$ is $\cF_{j-1}$-measurable, and hence $\E_j X_j = \one(\tau \ge j) \cdot (-\e\, \E_j \ip{Z^j}{Y^j})$. Moreover,
		\begin{align*}
			\E_j X^2_j ~&=~ \one(\tau \ge j) \cdot \Big[(1-\e)^2 \,\E_j \underbrace{\ip{Z^j}{Y^j}^2}_{\le \ip{Z^j}{Y^j}} \,-\, \underbrace{(2 (1-\e) - 1)}_{\ge 0} (\E_j \ip{Z^j}{Y^j})^2\Big] \\
			~&\le~ \one(\tau \ge j) \cdot \E_j \ip{Z^j}{Y^j},
		\end{align*}  
		where the first underbrace is because $\ip{Z^j}{Y^j} \le 1$ and the second because $\e \in (0, \frac{1}{2}]$. Together, these observations give $$\E_j X_j \le - \e\, \E_j X_j^2.$$ Then applying Lemma \ref{lemma:nikhil} to the sequence $(X_j)_j$ with $\lambda = \frac{\log \nicefrac{1}{2\delta}}{\e}$ we obtain
		\begin{align*}
		\Pr\bigg((1-\e) \sum_{j \le \tau} \ip{Z^j}{Y^j} ~>~ \sum_{j \le \tau} \E_j \ip{Z^j}{Y^j} + \frac{\log \nicefrac{1}{2\delta}}{\e}\bigg) &~\le~ \frac{\delta}{2}.
		\end{align*}
		
		Then by union bound with \eqref{eq:condBound}, with probability at least $1- \delta$ we have
		\begin{align*}
			\sum_{j \le \tau} \ip{Z^j}{Y^j} ~\le~ \frac{(1+2\e)}{(1-\e)}\, \sum_{j \le \tau} \ip{\E_{j-1} Z^j}{\E Y^j} + \frac{O(\log \nicefrac{d}{\delta})}{\e}.
		\end{align*}
		Verifying that $\frac{(1+2\e)}{(1-\e)} \le 1+4\e$ for all $\e \in (0,\frac{1}{10}]$ then proves Lemma \ref{lemma:breakCorr}.
		
%	Since for every $a,b \ge 0$ we have $a + b \le \max\{(1+\e) a, (1+ \frac{1}{\e}) b\}$ (verified by checking the cases $b \le \e a$ and $b > \e a$), the previous displayed inequality implies  
%		%
%		\begin{align*}
%			\sum_{j \le \tau} \ip{Y^j}{Z^j} ~\le~ \max\bigg\{\frac{(1+2\e) (1+\e)}{(1-\e)}\, \bigg\|\sum_{j \le \tau} \E Y^j\bigg\|_{\infty}~,~ \frac{O(\log \nicefrac{d}{\delta})}{\e^2}\bigg\}.
%		\end{align*}		
%		Verifying that $\frac{(1+2\e)(1+\e)}{(1-\e)} \le 1+5\e$ for all $\e \in (0,\frac{1}{10}]$ then proves Lemma \ref{lemma:breakCorr}.
	\end{proof}

\section{Missing Proofs from \Cref{sec:byzant-PIP}}
\label{sec:appendix-pip}
  
\Reduction*

\begin{proof}
	The desired $O(\rho)$-approximation for general instances is given by choosing uniformly at random and running one of the following 3 algorithms:

	\begin{enumerate}
		\item Pick one of the $n$ items uniformly at random
		
		\item Run the algorithm given by \Cref{lem:itcs-result} aiming 
		at picking the best $B$ items
		
		\item See the largest value $c^{1/2}_{\max}$ of an item with arrival time in $[0,\frac{1}{2})$ (do not pick any) and on the remaining times $[\frac{1}{2}, 1]$ run an algorithm that is $\rho$-competitive in a smooth instance with constant probability using $c^{1/2}$ as estimate for $\OPT(\Gitems)$ (still using budget $B$). 
	\end{enumerate}	

	By construction this procedure always produces a feasible solution, and we show it has good value in expectation.  
	
	Let $c_{\max}$ be the maximum value of over all (green and red) items. First, if $c_{\max} \ge \frac{n}{2} \OPT(\Gitems \setminus g_{\max})$ then the procedure has expected value at least $\frac{1}{3} \OPT(\Gitems)$ just from the first algorithm that it may run, and the result follows. Also, if the $B$ top valued green items (excluding $g_{\max}$) have combined value at least $\frac{\OPT(\Gitems \setminus g_{\max})}{4}$, then the procedure gets expected value at least $\Omega(\OPT(\Gitems \setminus g_{\max}))$ just from the second algorithm that it may run, and the result also follows.

	So consider the ``remaining situation'' where neither of these cases happen. Further, condition on the event where $g_{\max}$ shows up at a time $[0,\frac{1}{2})$ and $\OPT(\Gitems \cap [\frac{1}{2}, 1])$ (the optimal value considering only green items on times $[\frac{1}{2},1]$) it at least $\frac{1}{2} \OPT(\Gitems \setminus g_{\max})$, which happens with probability at least $\frac{1}{4}$. In this case the instance over times $[\frac{1}{2}, 1]$ satisfies both items of the smoothness Assumption \ref{assum:smooth} with $\widehat{O} := c^{1/2}_{\max}$ because:

	\begin{enumerate}
		\item Item 1: The $B$ top valued green items in $[\frac{1}{2},1]$ have combined value at most $\frac{\OPT(\Gitems \setminus g_{\max})}{4} \le \frac{\OPT(\Gitems \cap [\frac{1}{2}, 1])}{2}$. Since this set of items contains all items in $[\frac{1}{2},1]$ of value at least $\frac{\OPT(\Gitems \cap [\frac{1}{2}, 1])}{B}$, it satisfies Item 1 of the assumption. 
		
		\item Item 2: Using $c(g_{\max})$ to denote the value of $g_{\max}$,
		\begin{align*}
			c^{1/2}_{\max} &~\ge~ c(g_{\max}) ~\ge~ \frac{1}{n} \OPT(\Gitems) ~\ge~ \frac{1}{n} \OPT(\Gitems \cap [\tfrac{1}{2}, 1])\\
			c^{1/2}_{\max} &~\le~ \frac{n}{2} \OPT(\Gitems \setminus g_{\max}) ~\le~ n \OPT(\Gitems \cap [\tfrac{1}{2}, 1]).
		\end{align*} 		
	\end{enumerate}
	Thus, under this event, with probability $\frac{1}{3}$ the third algorithm within the procedure is run and with further constant probability it is guaranteed to obtain expected value at least $\Omega(\rho \OPT(\Gitems \cap [\frac{1}{2}, 1])) \ge \Omega(\rho \OPT(\Gitems \setminus g_{\max}))$. Overall, in this ``remaining situation'' the procedure obtains expected value at least $\Omega(\rho\OPT(\Gitems \setminus g_{\max}))$, thus concluding the proof.	\end{proof} 

      \subsection{Proof of \Cref{lemma:LOPT}} \label{sec:lemma:LOPT}

Recall that  $\stoch(I) \subseteq \N$ denotes the steps $t$ where the $t$-{th} item in the interval $I$ is
green.
We first argue that the value of the green items in any interval $I$ is large, with high probability. Recall that $x^*$ is the optimal solution consisting only of green items each of whose value is at most $\frac{OPT}{B}$, and has total value at least $\frac{OPT}{2}$.

\begin{claim} \label{claim:stochVal1}
If $|I|=1/\parts$ and $B \geq \parts\log 1/\delta$, then with probability at least $1-\frac{\delta}{2}$ we have
\[		
		\sum_{t \in \stoch(I)} C_t x_t^* ~\geq ~ \frac{1}{4} \frac{OPT}{\parts}.
\]
\end{claim}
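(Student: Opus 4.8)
The plan is to recognize the left‑hand side as a sum of independent, bounded, nonnegative random variables and apply a standard concentration bound. Indexing $x^*$ by the green items, observe that
\[
\sum_{t \in \stoch(I)} C_t x^*_t ~=~ \sum_{i \in \Gitems} c_i\, x^*_i \cdot \mathbbm{1}[\,i \text{ falls in } I\,],
\]
since this sum does not depend on the order in which the green items of $I$ arrive. The only randomness is \emph{which} green items land in $I$: each $i \in \Gitems$ does so independently with probability $|I| = 1/\parts$, so the indicators are i.i.d.\ Bernoulli$(1/\parts)$ across $i \in \Gitems$. Hence the quantity of interest is a sum of independent nonnegative terms, and --- crucially using that $x^*$ puts mass only on green items of value at most $\frac{OPT}{B}$ (this is what the reduction to smooth instances, \Cref{lem:smoothRed}, secured) --- each term obeys $c_i x^*_i \le \frac{OPT}{B} =: M$.

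Second, I would pin down the mean: $\mu := \E\big[\sum_{t \in \stoch(I)} C_t x^*_t\big] = \frac{1}{\parts}\sum_{i \in \Gitems} c_i x^*_i \ge \frac{1}{\parts}\cdot\frac{OPT}{2}$, where the inequality is exactly the guarantee of Assumption \ref{assum:smooth} that the restricted solution $x^*$ has value at least $\frac{OPT}{2}$. Since the target $\frac{OPT}{4\parts}$ is at most $\mu/2$, it is enough to bound $\Pr\big[\sum_{t \in \stoch(I)} C_t x^*_t < \mu/2\big]$.

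Third, I would center the summands and invoke Bernstein's inequality (\Cref{lem:Bern}). The centered terms are bounded by $M$ in absolute value, and the sum's variance is at most $\frac{1}{\parts}\sum_i (c_i x^*_i)^2 \le \frac{M}{\parts}\sum_i c_i x^*_i = M\mu$. Bernstein with deviation $\mu/2$ then gives $\Pr\big[\sum_{t\in\stoch(I)} C_t x^*_t < \mu/2\big] \le \exp\big(-\Omega(\mu/M)\big) = \exp\big(-\Omega(B/\parts)\big)$, using $\mu \ge \frac{OPT}{2\parts}$ and $M = \frac{OPT}{B}$; with $B \ge \Omega(\parts\log(1/\delta))$ this is at most $\delta/2$, completing the proof. (One could equally apply a multiplicative Chernoff bound directly to the nonnegative terms.)

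I do not expect a genuine obstacle here. The one point that deserves care is that the per‑term bound $c_i x^*_i \le OPT/B$ --- the very thing the smoothing step buys --- is what makes the concentration kick in; without it, a single large green item could dominate the interval's value. This claim is really a warm‑up for \Cref{lemma:LOPT}, where the dependence of the dual $\dual_t$ on earlier items in $I$ forces one to handle sampling‑without‑replacement correlations.
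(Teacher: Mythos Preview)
Your proof is correct and essentially identical to the paper's own proof: both rewrite the sum as $\sum_{i\in\Gitems} c_i x^*_i \cdot \one(i\in I)$, use the smoothness bound $c_i x^*_i \le \OPT/B$ to control the variance by $\frac{1}{\parts}\sum_i (c_i x^*_i)^2 \le \frac{\OPT^2}{B\parts}$, and apply Bernstein's inequality with deviation $\Theta(\OPT/\parts)$ to obtain failure probability $e^{-\Omega(B/\parts)} \le \delta/2$. Your commentary on why the per-term bound is the key ingredient is also on point.
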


\begin{proof}
  The LHS is $\sum_{i \in \Gitems} c_i x^*_i \cdot \one(i \in I)$ and
  has expectation at least $\frac{1}{2}\frac{\OPT}{\parts}$ and variance
  \begin{align*}
    \Var\left(\sum_{i \in \Gitems} c_i x^*_i \cdot \one(i \in
    I)\right) ~\le~ \frac1\parts \sum_{i \in \Gitems} \big(c_i x_i^*\big)^2 ~\le~ \frac{\OPT}{B\parts}\, \sum_{i \in \Gitems} c_i x_i^* ~\le~ \,\frac{\OPT^2}{B \parts} \enspace ,
  \end{align*}
  where the second inequality uses that $x_i^*\in[0,1]$ and that, by definition, $x_i^* > 0$ only when item $i$ has value $c_i \le \frac{OPT}{B}$.
  Then applying Bernstein's Inequality (\Cref{lem:Bern}) to $X_i := c_i x_i^* (\one(i\in I) - |I|)$,
  \begin{align*}
    \Pr\bigg(\sum_{i \in \Gitems} c_i x^*_i \cdot \one(i \in I) \le
    \frac{1}{2}\frac{\OPT}{\parts} - \frac{1}{4} \frac{\OPT}{\parts}\bigg) % ~\leq~
    % \exp\bigg(- \frac{ (\OPT/(2\parts))^2 }{ 2 \OPT^2/\partsB + 2
    % (\OPT/B)(\OPT/2\parts)/3} \bigg) ~\lesssim
    ~\leq~ 2e^{-\frac{3B}{64\parts}} \enspace,
  \end{align*}	
  and the result follows from the assumption
  $B \ge \Omega(\parts \log 1/\delta)$.
\end{proof}

Next we argue that the total cost in the Lagrangified value is 
not  large. The proof of this claim goes by first conditioning on the stochastic times of
the green items inside interval $I$. 
This fixes the   order in which we see all the items, and also fixes the number of greens
  in $I$. Now at each stochastic time in $I$, we draw an
  item from the remaining greens (without replacement), and
   use \Cref{lemma:breakCorr} 
 to bound the effect of sampling without replacement.

\begin{claim} \label{claim:stochVal2} If $|I| = \frac{1}{\parts} \le \frac{1}{4}$, and $B \ge \Omega(\parts\log (4d/\delta))$, then with
  probability at least $1-\frac{\delta}{2}$,
  $$\sum_{t \in \stoch(I)} \ip{\dual_t}{ A_t x^*_t} ~\le~ \frac{4B}{\parts} \enspace .$$
\end{claim}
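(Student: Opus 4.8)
The plan is to condition on the arrival positions of the green items, which turns the left‑hand side into a sum over items drawn \emph{without replacement} from $\Gitems$, and then invoke \Cref{lemma:breakCorr}. Concretely, I would condition on the arrival times of all $\Gsize$ green items (the red items are placed deterministically). This fixes $N:=\Gsize(I)$, the number of green items in $I$, and the order $t_1<t_2<\cdots<t_N$ of the green arrivals inside $I$. By symmetry, under this conditioning the data $(C_{t_1},A_{t_1}),\ldots,(C_{t_N},A_{t_N})$ of the green items in $I$, in arrival order, is a uniform sample without replacement from the $\Gsize$ green items; moreover $\dual_{t_j}$, being produced by the OLO algorithm of \Cref{lem:OLOregret} run on $f_1,\ldots,f_{t_j-1}$, is a function only of $(C_{t_1},A_{t_1}),\ldots,(C_{t_{j-1}},A_{t_{j-1}})$ and the deterministic red items in $I$ preceding step $t_j$. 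This is exactly the setup of \Cref{lemma:breakCorr}.

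Before invoking that lemma I would carry out one preliminary concentration step: I want $N\le 2\Gsize/\parts$, so that $N\le \Gsize/2$ (since $\parts\ge 4$) as needed by \Cref{lemma:breakCorr}, and also $N/\Gsize\le 2/\parts$ for the final bound. By \Cref{assum:smooth}, the green items of value $\le\OPT/B$ have total value $\ge \OPT/2$ while each has value $\le\OPT/B$, so there are at least $B/2$ of them, whence $\Gsize\ge B/2$. Then $\E N=\Gsize/\parts\ge B/(2\parts)=\Omega(\log(1/\delta))$, and Bernstein's inequality (\Cref{lem:Bern}), using $\Var(N)\le\E N$, gives $\Pr[N>2\Gsize/\parts]\le 2\exp(-\Omega(\Gsize/\parts))\le \delta/4$ once $B\ge\Omega(\parts\log(1/\delta))$.

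On the event $\{N\le 2\Gsize/\parts\}$ (still conditioning on the green positions), I would apply \Cref{lemma:breakCorr} with $Y^j:=A_{t_j}x^*_{t_j}\in[0,1]^d$ (which depends only on which green is $j$‑th in $I$, since $x^*$ is fixed), $Z^j:=\dual_{t_j}\in\triangle^{d-1}\subseteq\blacktriangle^d$, ground set $\{A_i x^*_i:i\in\Gitems\}$ (so $m=\Gsize$), $k=\tau=N\le\Gsize/2$, tolerance $\e=\tfrac1{10}$, and failure probability $\delta/4$. This gives, with conditional probability at least $1-\delta/4$,
\[
\sum_{t\in\stoch(I)}\ip{\dual_t}{A_t x^*_t}\;=\;\sum_{j\le N}\ip{Z^j}{Y^j}\;\le\;\tfrac{7}{5}\sum_{j\le N}\ip{\E_{j-1}Z^j}{\E Y^j}\;+\;O(\log(4d/\delta)).
\]
Since $x^*$ is feasible for the LP on the green items, $\sum_{i\in\Gitems}A_i x^*_i\le B\cdot\ones$, so $\E Y^j=\tfrac1{\Gsize}\sum_{i\in\Gitems}A_i x^*_i\le\tfrac{B}{\Gsize}\ones$; and since $\dual_{t_j}\in\triangle^{d-1}$ has $\|\dual_{t_j}\|_1=1$, we get $\ip{\E_{j-1}Z^j}{\E Y^j}\le B/\Gsize$. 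Hence the main term is at most $\tfrac75\cdot N\cdot\tfrac{B}{\Gsize}\le\tfrac75\cdot\tfrac{2B}{\parts}=\tfrac{14B}{5\parts}$, and taking the hidden constant in $B\ge\Omega(\parts\log(4d/\delta))$ large enough makes the $O(\log(4d/\delta))$ term at most $\tfrac{6B}{5\parts}$; summing gives the bound $4B/\parts$. A union bound over the two failure events ($N>2\Gsize/\parts$ and the failure of \Cref{lemma:breakCorr}), and then removing the conditioning on the green positions (the conditional bounds hold uniformly), yields the claimed probability $1-\delta/2$.

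The only real content, and the step I expect to be the main obstacle, is the correlation between $\dual_t$ and $A_t x^*_t$: because the green items in $I$ arrive in a random order and are effectively sampled without replacement from $\Gitems$, $\dual_{t_j}$ (a function of the earlier green arrivals in $I$) is \emph{not} independent of $A_{t_j}x^*_{t_j}$, so one cannot simply push the expectation inside the inner product as in the heuristic for \Cref{lemma:LOPT}. \Cref{lemma:breakCorr} is precisely the tool for this; everything else is bookkeeping, plus the easy concentration of $\Gsize(I)$ needed to meet its $\tau\le m/2$ hypothesis.
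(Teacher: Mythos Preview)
Your proposal is correct and follows essentially the same approach as the paper: condition on the green arrival positions so that the green items in $I$ become a without-replacement sample from $\Gitems$, apply \Cref{lemma:breakCorr} with $Z^j=\dual_{t_j}$, $Y^j=A_{t_j}x^*_{t_j}$, and $\e=\tfrac{1}{10}$, bound $\E Y^j\le \tfrac{B}{\Gsize}\ones$ via feasibility of $x^*$, and combine with the Bernstein bound $\Pr[\Gsize(I)>2\Gsize/\parts]\le\delta/4$ (using $\Gsize\ge B/2$ from \Cref{assum:smooth}) by a union bound. The arithmetic and the identification of \Cref{lemma:breakCorr} as the device that handles the $\dual_t$--$A_t x^*_t$ correlation are exactly as in the paper.
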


\begin{proof}
  Recall that $\Gitems(I)$ and $\Gsize(I)$ denote the set and the number 
  of green items, respectively, that fall in interval
  $I$. Let $t_j$ be the position of the $j^{th}$ green item to
  appear in  interval $I$. Then
  $\sum_{t \in \stoch(I)} \ip{\dual_t}{ A_t x^*_t} = \sum_{j
    \leq \Gsize(I)} \ip{\dual_{t_j}}{ A_{t_j}
    x^*_{t_j}}$. Notice that the $t_j$'s are random.
  
%  \marco{In the $(1-\eps)$ proof I changed $k$ for $N$, so it does not confuse with $K$. We can take a call and use the same notation in both places}
  We condition on $\Gsize(I) = k$, say, and on their positions
  $(t_1,\ldots,t_k) =: t_{\le k}$. Notice that under this conditioning
  the stochastic items $(C_{t_j}, A_{t_j})$ are still sampled without
  replacement from the green items. Moreover, notice that
  $\dual_{t_j}$ is a function of the green items before the
  $j^{th}$ green item in the interval, and therefore on
  $(C_{t_1}, A_{t_1}), \ldots, (C_{t_{j-1}}, A_{t_{j-1}})$.
  (It also depends on the
  red items in $I$, but these are deterministic.) 
  
 	In order to upper bound $\sum_{j \leq \Gsize(I)} \ip{\dual_{t_j}}{ A_{t_j} x^*_{t_j}}$ with high-probability (conditioned on $\Gsize(I) = k$ and $t_{\le k}$) we use \Cref{lemma:breakCorr}. For that, we first notice that due to the feasibility of $x^*$ we have $\E [A_{t_j} x^*_{t_j} \mid \Gsize(I) = k,\, t_{\le k} ] \le \frac{B \cdot \ones}{\Gsize} $ for all $j$, and hence  	
  \begin{align*}
  \sum_{j \le k} \Bigip{\lambda_{t_j}}{\E\big[A_{t_j} x^*_{t_j} ~\big|~ G(I) = k, t_{\le k}\big]} \le \sum_{j \le k} \ip{\lambda_{t_j}}{\tfrac{B \cdot \ones }{\Gsize}} = \frac{B}{G} k.
  \end{align*}
  Then applying \Cref{lemma:breakCorr} conditionally (setting $Z^j := \lambda_{t_j}|_{\Gsize(I)=k, t_{\le k}}$, $Y^j := A_{t_j} x^*_{t_j}|_{\Gsize(I)=k, t_{\le k}}$, and $\e = \frac{1}{10}$) gives that if $k \le \frac{\Gsize}{2}$ then
  \begin{align*}
    \Pr\bigg( \sum_j \ip{\dual_{t_j}}{ A_{t_j} x^*_{t_j}} \ge  \frac{3B}{2\Gsize}\,k\,+\,\Omega(\log \nicefrac{4d}\delta) ~\bigg|~ \Gsize(I) = k,\, t_{\le k} \bigg)~\le~\nicefrac\delta4 \enspace.
  \end{align*}
  Taking expectation over the $t_{\le k}$'s and then over the
  values of $k$ that are at most $\frac{2\Gsize}{\parts}$ (which is at most
  $\nicefrac{\Gsize}{2}$ by our assumption on $\parts \geq 4$), we get
  \begin{align}
    \Pr\bigg( \sum_j \ip{\dual_{t_j}}{ A_{t_j} x^*_{t_j}} \ge
    \frac{3 B}{K} \,+\,\Omega(\log \nicefrac{4d}\delta)
    ~\bigg|~ \Gsize(I) \le \nicefrac{2\Gsize}{\parts}
    \bigg)~\le~\nicefrac\delta4 \enspace. \label{eq:GM1}
  \end{align}	
  Observe that $\nicefrac{3B}{\parts} +  \Omega(\log \nicefrac{4d}\delta\}) \le \nicefrac{4B}{\parts}$
  by our assumption $B \ge \Omega(\parts\log (4d/\delta))$. 
  Next we show that the conditioning holds with high
  probability. Indeed, $\E [\Gsize(I)] = \nicefrac{\Gsize}{\parts}$, so by Bernstein's
  inequality
  \begin{align}
    \Pr(\Gsize(I) > \nicefrac{2\Gsize}{\parts})~\le~
    e^{-\frac{\Gsize}{2\parts}} ~\le~ e^{-\frac{B}{4\parts}} ~\le~
    \nicefrac{\delta}{4d} ~\le~ \nicefrac\delta4 \enspace, \label{eq:GM2}
  \end{align}
  where the second inequality uses Assumption \ref{assum:smooth} that the green items with
  value at most $\frac{\OPT}{B}$ contain a solution of value at least $\frac{\OPT}{2}$ (hence there are at least $\frac{B}{2}$ green items to obtain the remaining value $\frac{\OPT}{2}$) 
     and the third inequality uses the assumption that $B \ge \Omega(\parts \log
  (4d/\delta))$. Combining~\eqref{eq:GM1} and \eqref{eq:GM2},
  % Putting
  % the last two displayed inequalities together (and again
  % using the lower bound on $B$ to see
  % $\max\{2 |I|B, \log (d/\delta)\} = O(B/\parts)$) we get
  % 
  \begin{align*}
    \Pr\bigg( \sum_j \ip{\dual_{t_j}}{ A_{t_j} x^*_{t_j}} \ge
    \frac{4 B}{\parts} \bigg)~\le~\delta/2 \enspace,
  \end{align*}	
  which completes the proof of Claim \ref{claim:stochVal2}.
\end{proof}
Finally, using Claims~\ref{claim:stochVal1} and~\ref{claim:stochVal2} and taking a
union bound concludes the proof of
\Cref{lemma:LOPT}.

\section{Missing Proofs from \Cref{sec:prophets}}

\subsection{Proof that Assumption \ref{assum:smoothProphet} is WLOG}\label{sec:redSmoothProphet}

The following lemma formalizes the idea that
Assumption \ref{assum:smoothProphet} can be made without loss of generality.

\begin{lemma}
Let $\widetilde{\alg}$ be an algorithm for packing linear programs in the Prophets with Augmentations model. Suppose that $\widetilde{\alg}$ achieves expected value at least $\Omega(\OPT_{base})$ on instances where each distribution is supported on values that are at most $\frac{\OPT_{base}}{20}$. Then there is an algorithm that achieves expected value $\Omega(\OPT_{base})$ on arbitrary instances.
\end{lemma}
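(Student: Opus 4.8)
Write $O := \OPT_{base}$, fix a threshold $\theta := O/c$ for a suitable absolute constant $c$ (say $c=100$), and for a value vector $v$ let $\OPT(v)$ denote the offline optimum of the packing LP with values $v$ and the given sizes and budget. The plan is to run a constant number of algorithms in parallel and output the union of their selections; to keep the union feasible for $B\cdot\ones$ we give the main algorithm budget $(B-1)\cdot\ones$ and reserve the leftover $\ones$ for the ``single-column'' rules below, which costs only a factor $\tfrac{B-1}{B}=\Omega(1)$ in the offline optimum. The algorithms are: (I)~$\widetilde\alg$ run on the \emph{capped instance}, where $\cD_t$ is replaced by the law of $\min(V_t,\theta)$ and $\widetilde\alg$ is fed the value $\min(C_t,\theta)$ at step $t$ --- this is a legal Prophet-with-Augmentations instance, since $\min(C_t,\theta)\ge\min(V_t,\theta)$ and the induced augmentation is a function of the history through time $t$; (II)~the threshold rule picking the first column with observed value $C_t\ge\theta$; and (III)~the robust single-item (and rank-$O(1)$ uniform-matroid) Prophet-with-Augmentations algorithm of~\cite{ISW-EC20}, run on the observed values.

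For the value bound I would start from the pointwise subadditivity $\OPT(V)\le\OPT(\min(V,\theta))+\OPT((V-\theta)_+)$, where $(V-\theta)_+$ has entries $\max(V_t-\theta,0)$ and $\min(V,\theta)$ is the entrywise minimum; this holds by feeding the optimal packing for $V$ into each summand of the splitting $V=\min(V,\theta)+(V-\theta)_+$. Taking expectations, either $\E[\OPT(\min(V,\theta))]\ge O/2$ or $\E[\OPT((V-\theta)_+)]\ge O/2$. In the first case the capped instance has base optimum $\widetilde O\ge O/2$ and all its values in $[0,\theta]\subseteq[0,\widetilde O/20]$ (this is exactly where the constant $c$ must be large enough), so by hypothesis $\widetilde\alg$ collects $\Omega(\widetilde O)=\Omega(O)$, and since the true values dominate the capped ones, algorithm~(I) collects at least that much. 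In the second case, every column used by the optimal solution $\hat x$ for $(V-\theta)_+$ has $V_t>\theta$, hence $\theta\sum_t\hat x_t=\sum_t\min(V_t,\theta)\,\hat x_t\le\OPT(\min(V,\theta))$; taking expectations, $\E[\sum_t\hat x_t]\le \E[\OPT(\min(V,\theta))]/\theta\le O/\theta=c$. Moreover each such column is worth more than $\theta=\Omega(O)$, and whenever such a column exists, rule~(II) picks a column of value $\ge\theta=\Omega(O)$.

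Combining these, in the second case the $\Omega(O)$ worth of heavy value is carried, in expectation, by only $O(1)$ fractional column-mass of columns each worth $\Omega(O)$, and every appearance of a heavy column lets rule~(II) bank $\Omega(O)$. \emph{Turning this into a proof is the step I expect to be the main obstacle:} one must handle the scenario where the heavy value is spread over many \emph{barely} heavy columns (values within a factor $2$ of $\theta$) on a rare, high-optimum event, where the crude bound ``$\E[\text{(II)}]\ge\theta\cdot\Pr[\text{some }V_t>\theta]$'' is too weak. The two levers are: (a)~barely-heavy columns survive the cap up to a factor $2$, so were they to carry $\Omega(O)$ we would in fact be in the first case; and (b)~the mass bound $\E[\sum_t\hat x_t]=O(1)$, which (via a Markov / rounding argument, after a harmless preliminary truncation of the $\cD_t$'s so the relevant value is distributed over $O(1)$ dyadic scales) reduces the residual difficulty to a bounded number of genuinely very heavy columns, each worth $\Omega(O)$, captured by~(II) and~(III). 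Once the cases are assembled and the combined solution is rescaled to restore exact feasibility, the algorithm attains expected value $\Omega(\OPT_{base})$ on arbitrary instances.
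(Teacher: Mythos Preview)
Your overall decomposition---cap values at $\theta$, run $\widetilde{\alg}$ on the capped instance, and run a first-exceeds-$\theta$ rule for the rest---is exactly what the paper does, and your Case~1 argument is correct and matches the paper. The gap is entirely in Case~2, and you correctly identify that you do not have a complete argument there. The detours through fractional mass bounds, dyadic scales, and an auxiliary algorithm~(III) are unnecessary; there is a one-line fix you are missing.

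Let $p := \Pr(\exists t:\ C_t > \theta)$. If $p \ge \tfrac12$, rule~(II) alone has expected value at least $\tfrac{\theta}{2} = \Omega(O)$ and you are done. So assume $p < \tfrac12$. Let $\tau$ be the first $t$ with $C_t > \theta$ (so $\tau=\infty$ if none). Since $C_s \ge V_s$, for $s < \tau$ we have $V_s \le \theta$, and hence
\[
\text{(II)} \;=\; C_\tau \;\ge\; V_\tau \;\ge\; \sum_t \one(\tau \ge t)\,\one(V_t > \theta)\, V_t .
\]
The event $\{\tau \ge t\}$ is determined by $C_1,\ldots,C_{t-1}$, which (because augmentations $R_s$ depend only on history up to time $s$) is independent of $V_t$. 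Thus
\[
\E[\text{(II)}] \;\ge\; \sum_t \Pr(\tau \ge t)\,\E\big[V_t\,\one(V_t>\theta)\big] \;\ge\; (1-p)\sum_t \E\big[V_t\,\one(V_t>\theta)\big] \;\ge\; \tfrac12 \sum_t \E\big[V_t\,\one(V_t>\theta)\big].
\]
In your Case~2 you have $\E[\OPT((V-\theta)_+)] \ge O/2$, and trivially $\OPT((V-\theta)_+) \le \sum_t V_t\,\one(V_t > \theta)$, so the last sum is at least $O/2$ and rule~(II) already gets $\Omega(O)$. No algorithm~(III), no dyadic bucketing, no mass argument is needed. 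This is exactly the paper's Case~2 computation; the crucial idea you were missing is the $p \ge \tfrac12$ versus $p < \tfrac12$ split, which converts the ``crude bound'' you dismissed into the right one when $p$ is large, and unlocks the stopping-time calculation when $p$ is small.
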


\begin{proof}
\newcommand{\trunc}{\textsf{trunc}}
Throughout this section, we use $V_1,\ldots,V_n$ to denote the outcome of the value of the items of the original instance $((a_t, \cD_t)_t, B)$, $R_1,\ldots,R_n$ as the augmentations performed by the adversary, and $C_t = V_t+R_t$ the final value revealed to the algorithm. 

For notational convenience let $M := \frac{\OPT_{base}}{40}$.
The idea is to run two algorithms, one over the items that have value at most $M$ and one over items of value above $M$. To make this precise, define the operation $\trunc(v) := \min\{v, M\}$ that truncates a value at $M$. Let $\widetilde{\cD}_t$ be the distribution of the truncated random variable $\trunc(V_t)$. Then we consider the algorithm that flips an unbiased coin runs one of the following procedures on the (augmented version of) the original instance $((a_t, \cD_t)_t, B)$:

	\begin{enumerate}
	\item $\alg_{low}$: It sends the chopped instance $((a_t, \widetilde{\cD})_t, B)$ to the algorithm $\widetilde{\alg}$ to obtain a selection policy, and apply this policy to the sequence of truncated values $\trunc(C_1),\ldots,\trunc(C_n)$ to decide which items to take. Let $X_t \in \{0,1\}$ denote the indicator whether this policy picked the $t^{th}$ item.
	
	\item $\alg_{high}$: It picks the first items with value $C_t$ above $M$, if any. Let $\tau \in [n]$ be the index of the item picked ($\tau = \infty$ if did not pick any).
	\end{enumerate}

	We claim that either $\alg_{low}$ or $\alg_{high}$ has value at least $\Omega(\OPT_{base})$, which then proves the lemma. For that, let $p := \Pr(\tau < \infty)$ be the probability that some item has value above $M$. If $p \ge \frac{1}{2}$, then $\alg_{high}$ already has expected value least $p M \ge \Omega(\OPT_{base})$. We henceforth assume that $p < \frac{1}{2}$.

	Let $\OPT_{trunc}$ be the optimal value of the truncated instance $((a_t, \widetilde{\cD})_t, B)$, namely $$\OPT_{trunc} = \E \max_x\bigg\{ \sum_t \trunc(V_t) \cdot x_t ~:~ \sum_t a_t x_t \le B \cdot \ones\,,\,x \in \{0,1\}^n\bigg\}.$$ We consider two cases:
	
	\paragraph{Case 1: $\OPT_{trunc} \ge \frac{\OPT_{base}}{2}$.} In every scenario the value of $\alg_{low}$ is
	\begin{align}
		\alg_{low} = \sum_t C_t X_t \ge \sum_t \trunc(C_t) \cdot X_t. \label{eq:redSmoothPro1}
	\end{align}
	Moreover, notice that the sequence $\trunc(C_1),\ldots,\trunc(C_n)$ can be seen as an augmented version of the truncated instance $((a_t, \widetilde{\cD})_t, B)$, where the adversary performed the augmentation $\widetilde{R}_t := \trunc(C_t) - \trunc(V_t)$ that is non-negative and only depends on $V_t$. Moreover, in this Case 1 we have that all values in the truncated instance are at most $M = \frac{\OPT_{base}}{40} \le \frac{\OPT_{trunc}}{20}$. Therefore, the approximation guarantee of $\widetilde{\alg}$ holds in this case hence 
	\begin{align*}
		\E \sum_t \trunc(C_t)\cdot X_t \ge \Omega(\OPT_{trunc}) \ge \Omega(\OPT_{base}),
	\end{align*}
	where the last inequality again uses the assumption of Case 1. Combined with \Cref{eq:redSmoothPro1} this gives that $\alg_{low}$ has expected value at least $\Omega(\OPT_{base})$ as desired.

	\paragraph{Case 2: $\OPT_{trunc} < \frac{\OPT_{base}}{2}$.} Let $X^* \in \{0,1\}^n$ be the optimal solution for the original base instance $((a_t, \cD_t)_t, B)$, namely $\OPT_{base} = \E \sum_t V_t X^*_t$. Since $$V_t \,=\, \trunc(V_t) + (V_t - \trunc(V_t))\cdot \one(V_t > M),$$ we get 
	\begin{align*}
		\OPT_{base} &= \E \sum_t \trunc(V_t)\cdot X^*_t + \E \sum_t (V_t - \trunc(V_t))\cdot \one(V_t > M) \cdot X^*_t\\
		&\le \OPT_{trunc} + \E \sum_t V_t \cdot \one(V_t > M)\\
		& = \OPT_{trunc} + \sum_t \E[V_t \mid V_t > M]\,\Pr(V_t > M),
	\end{align*}
	where the inequality follows from the fact that $X^*$ is a feasible solution for $\OPT_{trunc}$. Moreover, since we are in Case 2, the second term in the RHS must contribute to at least half of $\OPT_{base}$, namely
	\begin{align}
		\OPT_{base} \,&\le~ 2 \sum_t \E[V_t \mid V_t > M]\,\Pr(V_t > M). \label{eq:redSmoothPro}
	\end{align}
		
	Now notice that we can express the value of $\alg_{high}$ in every scenario as 
	\begin{align*}
		\alg_{high} = C_\tau \ge V_\tau = \sum_t \one(\tau \ge t) \cdot \one(V_t > M) \cdot V_t,
	\end{align*}
	so using the fact that $\tau$ is a stopping time (so $\one(\tau \ge t)$ is defined by the history up to time $t-1$) and that the $V_t$'s are independent, the expected value becomes
	\begin{align*}
		\E \alg_{high} = \sum_t \Pr(\tau \ge t) \cdot \E[V_t \mid V_t > M]\,\Pr(V_t > M) \ge \frac{1}{2} \sum_t \E[V_t \mid V_t > M]\,\Pr(V_t > M),
	\end{align*}
	the inequality following because the probability of $\tau < t$ is at most the probability $p$ that any item takes value above $M$, and because we have assumed $p \le \frac{1}{2}$. Comparing with \eqref{eq:redSmoothPro} we see that $\E \alg_{high} \ge \Omega(\OPT_{base})$ as desired. This concludes the proof. 
\end{proof}

\IGNORE{\color{gray} REMOVE!!!!!!

\begin{proof}
\red{[Marco: Previous proof]}
For notational convenience let $\tau = \frac{\OPT_{base}}{20}$. Define an algorithm as follows: either run $ALG'$ or pick the first item with value above $\tau$, each with probability $\frac{1}{2}$. 

Let $E_{high}$ be the event that some item has value above $\tau$ and let $p:=\P[E_{high}]$. If $p \ge \frac{1}{2}$, then the expected value of the algorithm from picking an item of value over $\tau$ is at least $p\tau \ge \Omega(\OPT_{base})$. We henceforth assume that $p < \frac{1}{2}$.

Let $\OPT_{high} := \E[\OPT \mid E_{high}]$ and $\OPT_{low} := \E[\OPT \mid \neg E_{low}]$. Immediately,
  \begin{equation}\label{eq:opt-base-bound-1}
  \OPT_{base} = p \OPT_{high} + (1-p) \OPT_{low}.
  \end{equation} 
In event $E_{high}$, the expected value $\OPT$ gets from items with value above $\tau$ is at most
\[
 \sum_i  \Pr\left(C_k > \tau\right) 
		\cdot \E\left[C_i ~\big |~ C_i > \tau\right].
\]
Moreover, the expected value $\OPT$ gets from items with value at most $\tau$ is at most $\OPT_{low}$ due to the independence of the $C_i$'s. \cj{Maybe clarify this last sentence?} Combining, we have \red{[M: I think $\Pr(C_k > \tau)$ and $\E[C_i \mid C_i > \tau]$ in the previous displayed eq are missing a further conditioning on $E_{high}$, to be able to relate it to the conditioned quantity $\OPT_{high} = \E[\OPT \mid E_{high}]$ in the inequality below]} 
\begin{equation}
  \OPT_{high} \le \OPT_{low} +  \sum_i  \Pr\left(C_i > \tau\right) \cdot 
  \E\left[C_i ~\big |~ C_i > \tau\right].
\end{equation}
Combining with \cref{eq:opt-base-bound-1} we get:
\begin{equation}\label{eq:opt-base-bound-2}
  \OPT_{base} \le \OPT_{low} + p \sum_i  \Pr\left(C_i > \tau\right) \cdot 
  \E\left[C_i ~\big |~ C_i > \tau\right].
\end{equation}
We now turn to lower-bounding $\E[ALG]$.

Conditioned on the event $E_{high}$, let $k$ denote the index of the first item with value $C_k \ge \tau$. Then we have \red{[M: I think it is missing a conditioning on $E_{high}$ on $\Pr(k = 1)$ and $\E[C_i \mid C_i > \tau]$ as well]}
  \begin{equation}\label{eq:alg-low-high-1}
  \E[ALG] \ge \frac{1}{2}\cdot p\cdot \E[C_k\mid E_{high}]  + \frac{1}{2}\cdot (1-p) \cdot \alpha \OPT_{low},
  \end{equation}
  where $\alpha$ is the constant factor hidden by the $\Omega(\cdot)$ in the value guarantee of $ALG'$. \red{[M: This may be a bit problematic: the cst-factor guarantee only holds when the max possible value of an instance is at most the $\frac{1}{20}$ times the optimal value of the instance. But if we are applying this to the instance conditioned on $E_{low}$ (which guarantees the item values are at most $\frac{\OPT_{base}}{20}$) the optimal value $\OPT'$ of this conditioned instance may be much smaller than $\OPT_{base}$! That is, we needed item values to be at most $\frac{\OPT'}{20}$ and not $\frac{\OPT_{base}}{20}$]}

Furthermore,
  \begin{align*}
  \E[C_k \mid E_{high}] 
  	&= \sum_i \Pr(k=i) \E\left[C_i ~\big |~ C_i > \tau\right] \\
	&= \sum_i \Pr\left(\big(\max_{i<k} C_i\big) \le \tau\right) 
		\cdot \Pr\left(C_k > \tau\right) 
		\cdot \E\left[C_i ~\big |~ C_i > \tau\right]\\
	&= \sum_i \Pr\left(\neg E_{high} \right) 
		\cdot \Pr\left(C_k > \tau\right) 
		\cdot \E\left[C_i ~\big |~ C_i > \tau\right]\\
	&= (1-p) \sum_i  \Pr\left(C_k > \tau\right) 
		\cdot \E\left[C_i ~\big |~ C_i > \tau\right]
  \end{align*}
Now combining with \cref{eq:alg-low-high-1} and using the fact that $p< \frac{1}{2}$ we have
\begin{align*}
  \E[ALG] 
	&\ge \frac{1-p}{2} \left( p\cdot \sum_i  \Pr\left(C_k > \tau\right) 
		\cdot \E\left[C_i ~\big |~ C_i > \tau\right]  + \alpha \OPT_{low}\right)\\
	&\ge \frac{\alpha}{4} \left( p\cdot \sum_i  \Pr\left(C_k > \tau\right) 
		\cdot \E\left[C_i ~\big |~ C_i > \tau\right]  + \OPT_{low}\right) \tag{$p< \frac{1}{2}, \alpha < 1$}\\
	&\ge \frac{\alpha}{4} \OPT_{base}. \tag{\cref{eq:opt-base-bound-1}}
\end{align*}
Hence the proof.
\end{proof}
}

%############################################################
%############################################################
%############################################################
%############################################################

\subsection{Proof of \Cref{lem:prophBoundOPT}}
\label{sec:non-robust-prophet}

  We consider the following the concave relaxation to the base prophet
  instance, where intuitively $x_t$ denotes the fraction of times item
  $t$ is picked by the optimal offline algorithm:
  % \marco{May be
  %   nitpicking, but are we assuming that the distributions $V_t$ are
  %   continuous? Else does this quantile idea really work or do we have
  %   discretness issues with discrete distributions?} \agnote{Let's
  %   just make it continuous for ease of exposition.} \cj{We can assume WLOG they're continuous; see Rubinstein-Wang-Weinberg-20.}
  \begin{align*}
    \max_{x_1,\ldots, x_n} ~&\sum_t x_t \cdot \E[V_t \mid V_t \textrm{ is in its top $x_t$-quantile of $\cD_t$}]\\
    \text{s.t.}~~& \sum_t a_t x_t \le  \nicefrac{B}{4} \cdot \ones\\
                            & 0\leq x_t  \leq 1 \enspace .
  \end{align*}  
  We assume that the distributions are continuous, so
  that the quantiles are well-defined; this is without loss of
  generality (see, e.g., \cite[\S2]{RubinsteinWW20}).
  
  Let $x_t^*$ denote the optimal solution of the relaxation, and let
  $\val_t := \E[V_t \mid V_t \textrm{ is in its top
    $x_t^*$-quantile}]$.  To prove that this relaxation's objective
  value $\sum_t x_t^* \, \val_t$ is at least $\frac{\OPT_{base}}{4}$, observe that
  if an item $t$ is picked for $x_t$ fraction of times by the offline optimal
  solution, its contribution to the offline objective is at most
  $\E[V_t \mid V_t \textrm{ is in its top $x_t$-quantile}]$.  Note
  that although the relaxation only allows budget $\nicefrac{B}{4}\cdot \ones$
  (instead of $B\cdot \ones$), this only hurts the relaxation's
  objective by at most a factor of $4$.

  Next we design the desired solution $\psi_1(V_1),\ldots,\psi_n(V_n)$ for the base prophet instance
  with expected: We pick item $t$ whenever its value $V_t$ is in the top $x_t^*$-quantile of $\cD_t$, that is, $\psi_t(V_t) = \one(V_t$ is in the top $x_t^*$-quantile of $\cD_t)$.
%  Since the values $V_t$ are independent, $\psi_t(V_t)$ are independent.
  The expected budget consumed by such an algorithm is at most
  $\sum_t a_t x_t^* \leq \nicefrac{B}{4} \cdot \ones$, which proves
  \Cref{eq:budgetOPTbase}. To prove \Cref{eq:relaxBndOPT}, note that 
  the algorithm's expected value is precisely the objective value $\sum_t x_t^* \, \val_t$ of the 
  relaxation above, which is at least $\frac{\OPT_{base}}{4}$. 
%  To prove \Cref{eq:relaxBndOPT}, we argue that the algorithm's
%  expected value is $\Omega(\sum_t x_t^* \val_t)$. First observe that
%  if the algorithm reaches item $t$ without exhausting its budget,
%  then the expected (over random $V_t$) contribution of $t^{th}$ item
%  is exactly
%  $x_t^* \cdot \E[V_t \mid V_t \textrm{ is in its top
%    $x_t^*$-quantile}]= x_t^* \cdot \val_t$. Hence to finish the proof
%  by linearity of expectation, it suffices to argue that the algorithm
%  reaches each item $t$ without exhausting the budget with $\Omega(1)$
%  probability. This is true by a concentation bound, since the expected
%  budget consumed is at most $\sum_t a_t x_t^* \leq B/2\cdot \ones$,
 % whereas the total budget is $ B\cdot \ones$.

%%% Local Variables:
%%% mode: latex
%%% TeX-master: "main"
%%% End:

{\small
\bibliographystyle{alpha}
\bibliography{bib,online-lp-short}
}

\end{document}